\definecolor{violet}{rgb}{0.5,0,0.5}
\definecolor{vertCitron}{rgb}{0.6,0.8,0.2}
\definecolor{fuchsia}{rgb}{0.82,0,0.37}
\definecolor{turquoise}{rgb}{0,0.40,0.55}
\definecolor{orange}{rgb}{0.93,0.43,0.06}
\definecolor{ciel}{rgb}{0,100,120}
\newcommand{\leng}[1]{\ensuremath{|#1|}}
\newcommand{\sub}[1]{\ensuremath{Sub\left(#1\right)}}
\newcommand{\sem}[1]{\ensuremath{\left\llbracket#1\right\rrbracket}}
\newcommand{\nbclocks}[1]{\ensuremath{\left\|#1\right\|}}
\newcommand{\configs}[1]{\ensuremath{\mathrm{Config}\left(#1\right)}}
\newcommand{\Aa}{\ensuremath{\mathcal{A}}}
\newcommand{\Ii}{\ensuremath{\mathcal{I}}}
\newcommand{\Bb}{\ensuremath{\mathcal{B}}}
\newcommand{\Cc}{\ensuremath{\mathcal{C}}}
\newcommand{\Tt}{\ensuremath{\mathcal{T}}}
\newcommand{\appf}[1]{\ensuremath{\mathit{APP}_{#1}}}
\newcommand{\timestep}[1]{\ensuremath{\overset{#1}{\rightsquigarrow}}}
\newcommand{\appfunc}[1]{\ensuremath{f_{#1}^\star}}
\newcommand{\mername}{\ensuremath{{\sf Merge}}\xspace}
\newcommand{\mer}[1]{\ensuremath{\mername\left(#1\right)}}
\newcommand{\re}[1]{\ensuremath{{\sf reset}\left(#1\right)}}
\newcommand{\gu}[1]{\ensuremath{{\sf guard}\left(#1\right)}}
\newcommand{\de}[1]{\ensuremath{{\sf loop}\left(#1\right)}}
\newcommand{\loc}{{\sf loc}}
\newtheorem{definition}{Definition}
\newtheorem{theorem}{Definition}
\newtheorem{proposition}{Proposition}
\newtheorem{lemma}{Lemma}
\newtheorem{example}{Example}
\newtheorem{corollary}{Corollary}
\newcommand{\R}{\mathbb{R}}
\newcommand{\N}{\mathbb{N}}
\newcommand{\ATA}{OCATA\xspace}
\title{On MITL and alternating timed automata}
\author{Thomas Brihaye${}^1$ \and Morgane Estiévenart${}^1$
\and Gilles Geeraerts${}^2$}
\date{${}^1$ Universit\'e de Mons, Belgium,\\ ${}^2$ Universit\'e Libre
  de Bruxelles, Belgium}
\begin{document}

\maketitle

\begin{abstract}
  \emph{One clock alternating timed automata} (\ATA) have been
  recently introduced as natural extension of (one clock) timed
  automata to express the semantics of MTL \cite{OW05}. We consider
  the application of \ATA to problem of model-checking MITL formulas
  (a syntactic fragment of MTL) against timed automata. We introduce
  a new semantics for \ATA where, intuitively, clock valuations are
  \emph{intervals} instead of \emph{single values in $\R$}. Thanks to
  this new semantics, we show that we can \emph{bound} the number of
  clock copies that are necessary to allow an \ATA to recognise the
  models of an MITL formula. Equipped with this technique, we propose
  a new algorithm to translate an MITL formula into a timed automaton,
  and we sketch several ideas to define new model checking algorithms
  for MITL.
\end{abstract}

\section{Introduction}
\emph{Automata-based model-checking} \cite{CGP01,VW86} is nowadays a
well-established technique for establishing the correctness of
computer systems. In this framework, the system to analyse is modeled
by means of a \emph{finite automaton} $A$ whose accepted language
consists of all the traces of the system. The \emph{property} to prove
is usually expressed using a \emph{temporal logic formula} $\Phi$,
whose set of models is the language of all \emph{correct
  executions}. For instance, the LTL formula $\Box (p\implies\Diamond
q)$ says that every $p$-event should eventually be followed by a
$q$-event. Then, establishing correctness of the system amounts to
showing that the language $L(A)$ of the automaton is included in the
language $\sem{\Phi}$ of the formula. In practice, automata-based
model checking algorithms first negate the formula and translate
$\neg\Phi$ into an automaton $A_{\neg\Phi}$ that recognises the
\emph{complement} of $\sem{\Phi}$, i.e., the set of all
\emph{erroneous traces}. Then, the algorithm proceeds by computing the
synchronous product $A\times A_{\neg\Phi}$ and check whether
$L(A\times A_{\neg\Phi})=\emptyset$, in which case the system respects
the property.

While those techniques are now routinely used to prove the correctness
of huge systems against complex properties \cite{BCMDH92}, the model
of finite automata and the classical temporal logics such as LTL are
sometimes not \emph{expressive} enough because they can model the
\emph{possible sequences of events}, but cannot express
\emph{quantitative properties about the (real) time elapsing between
  successive events}. To overcome these weaknesses, Alur and Dill
\cite{AD94} have proposed the model of \emph{timed automata}, that
extends finite automata with a finite set of (real valued) clocks. A
real-time extension of LTL is the \emph{Metric Temporal Logic} (MTL)
that has been proposed by Koymans \cite{K90} and consists in labeling
the modalities with time intervals. For instance
$\Box(p\implies\Diamond_{[1,2]}q)$ means `at all time, each $p$ should
be followed by a $q$-event that occurs \emph{between $1$ and $2$ time
  units later}'. Unfortunately, the satisfiability and model-checking
of MTL are undecidable on infinite words \cite{H91}, and non-primitive
recursive on finite words \cite{OW07}.

An interesting alternative is the Metric Interval Temporal Logic
(MITL), that has been proposed by Henzinger \textit{et al.}
\cite{AFH96}. MITL is a syntactic fragment of MTL where singular
intervals are disallowed on the modalities. Thanks to this
restriction, MITL model-checking is \textsc{ExpSpace}-c, even on
infinite words. MITL thus seems a good compromise between
\emph{expressiveness} and \emph{complexity}. In their seminal work,
Henzinger \textit{et al.} provide a construction to translate an MITL
formula $\Phi$ into a timed automaton $\Bb_\Phi$, from which the
automaton-based model checking procedure sketched above can be
applied. Although this procedure is foundational from the theoretical
point of view, it does not seem easily amenable to efficient
implementation: the construction is quite involved, and requests that
$\Bb_\Phi$ be completely built before the synchronous product with the
system's model can be explored. Note that an alternative technique,
based on the notion of \emph{signal} has been proposed by Maler
\textit{et al.}  \cite{MNP06}. However the semantics of MITL assumed
there \emph{slightly} differs from that of \cite{AFH96}, whereas we
stick to the original MITL semantics.

Since MITL is a \emph{syntactic fragment} of MTL, all the techniques
developed by Ouaknine and Worrell \cite{OW05} for MTL can be applied
to MITL. Their technique relies on the notion of \emph{alternating
  timed automaton with one clock} (\ATA), an extension of timed
automata. Intuitively an \ATA can create several copies of itself that
run in parallel and must \emph{all} accept the suffix of the word.
For example, Fig.~\ref{ExATA} displays an \ATA. Observe that
the \emph{arc} starting from $\ell_0$ has two destinations: $\ell_0$
and $\ell_1$. When the automaton is in $\ell_0$ with clock valuation
$v$, and reads a $\sigma$, it spawns two copies of itself: the first
reads the suffix of the word from $(\ell_0,v)$, and the latter from
$(\ell_1,0)$ (observe that the clock is reset on the branch to
$\ell_1$). 
Then, every MITL formula $\Phi$ can be translated into an \ATA
$\Aa_\Phi$ that recognises its models \cite{OW05}. The translation has
the advantage of being very simple and elegant, and the size of
$\Aa_\Phi$ is linear in the size of $\Phi$. Unfortunately, 
%as can be observed on the example sketched above, 
one cannot bound a priori the number of clock copies that need to be
remembered at all times along runs of an \ATA. Hence, \ATA cannot, in
general, be translated to timed automata \cite{LW08}. Moreover, the
model-checking algorithm of \cite{OW05} relies on well-quasi ordering
to ensure termination, and has non-primitive recursive complexity.

In the present work, we exploit the translation of MITL formulas into
\ATA \cite{OW05} to devise new, optimal, and -- hopefully -- elegant
and simple algorithms to translate an MITL formula into a timed
automaton. To achieve this, we rely on two technical ingredients. We
first propose (in Section~\ref{sec:an-interv-semant}) a novel
\emph{interval-based} semantics for \ATA. In this semantics, clock
valuations can be regarded as \emph{intervals} instead of \emph{single
  points}, thus our semantics generalises the standard one
\cite{OW05}. Intuitively, a state $(\ell, I)$ of an \ATA in the
interval-based semantics (where $\ell$ is a location and $I$ is an
interval) can be regarded as an \emph{abstraction} of all the
(possibly unbounded) sets of states
$\{(\ell,v_1),(\ell,v_2),\ldots,(\ell,v_n)\}$ of the standard
semantics with $v_i\in I$ for all $i$. Then, we introduce a family of
so-called \emph{approximation function} that, roughly speaking,
associate with each configuration $C$ of the \ATA in the
interval-based semantics, a set of configurations that are obtained
from $C$ by merging selected intervals in $C$. We rely on
approximation functions to \emph{bound} the number of clock copies
that are present in all configurations. Our main technical
contribution (Section~\ref{sec:from-mitl-timed}) then consists in
showing that, when considering an \ATA $\Aa_\Phi$ obtained from an
MITL formula $\Phi$, combining the interval semantics and a
well-chosen approximation function is \emph{sound}, in the sense that
the resulting semantics recognises $L(\Phi)$, while requesting only
\emph{a bounded number of clock copies}. Thanks to this result, we
provide an algorithm that translates the \ATA $\Aa_\Phi$ into a plain
timed automaton that accepts the same language.

From our point of view, the \emph{benefits} of this new approach are
as follows. From the \emph{theoretical point of view}, our
construction is the first that relies on \ATA to translate MITL
formulas into timed automata. We believe our construction is easier to
describe (and thus, hopefully, easier to implement) than the previous
approaches. The translation from MITL to \ATA is very
straightforward. The intuitions behind the translation of the \ATA
into a timed automaton are also quite natural (although the proof of
correctness requires some technicalities). From the \emph{practical
  point of view}, our approach allows us, as we briefly sketch in
Section~\ref{sec:future-works:-toward}, to envision \emph{efficient
  model checking algorithms for MITL}, in the same spirit of the
\emph{antichain approach} \cite{DDMR08} developed for LTL model
checking. Note that the key ingredient to enable this \emph{antichain}
approach is the use of \emph{alternating automata} to describe the LTL
formula. Our contribution thus lay the \emph{necessary theoretical
  basis} to enable a similar approach in a real-time setting.

{\bf Remark} Owing to lack of space, most of the proof are in the
appendix.

%%% Local Variables: 
%%% mode: latex
%%% TeX-master: "article"
%%% End: 

\section{Preliminaries}
\paragraph{Basic notions.} Let $\R$ ($\R^+$, $\N$) denote resp. the
sets of real (non-negative real, natural) numbers. We call
\textbf{\textit{interval}} a convex subset of $\R$. We rely on the
classical notation $\langle a,b\rangle$ for intervals, where $\langle$
is $($ or $[$, $\rangle$ is $)$ or $]$, $a\in\R$ and
$b\in\R\cup\{+\infty\}$. For an interval $I=\langle a,b\rangle$, we
let $\inf(I)=a$ be the \emph{infimum} of $I$, $\sup(I)=b$ be its
\emph{supremum} ($a$ and $b$ are called the \emph{endpoints} of $I$)
and $\vert I \vert = \sup(I) - \inf(I)$ be its \emph{length}.  We note
$\mathcal{I}(\R)$ the set of all intervals.  Similarly, we note
$\mathcal{I}(\R^{+})$ (resp. $\mathcal{I}(\R_{\N}))$ the set of all
intervals whose endpoints are in $\R^{+}$ (resp. in $\N \cup \lbrace
+\infty \rbrace$).  Let $I \in \mathcal{I}(\R)$ and $t \in \R$, we
note $I+t$ for $\lbrace i+t \in\R \mid i \in I \rbrace$.  Let $I$ and
$J$ be two intervals, we let $I < J$ iff $\forall i \in I, \forall j
\in J : i < j$.  For $I\in \mathcal{I}(\R)$, $v \in \R$ and ${\bowtie}
\in \lbrace <, > \rbrace$, we note: $I \bowtie v$ iff $\forall i \in
I, i \bowtie v$.

Let $\Sigma$ be a finite alphabet. A \emph{word} on a set $S$ is a
finite sequence $s=s_1\ldots s_n$ of elements in $S$. We denote by
$\leng{s}=n$ the length of $s$.  A time sequence $\bar{\tau} =
\tau_{1} \tau_{2} \tau_{3} \ldots \tau_{n}$ is a word on $\R^+$
s.t. $\forall i < \leng{\bar{\tau}}, \tau_{i} \leq \tau_{i+1}$. A
\emph{timed word} over $\Sigma$ is a pair $\theta =
(\bar{\sigma},\bar{\tau})$ where $\bar{\sigma}$ is a word over
$\Sigma$, $\bar{\tau}$ a time sequence and
$\leng{\bar{\sigma}}=\leng{\bar{\tau}}$. We also note $\theta$ as
$(\sigma_{1},\tau_{1}) (\sigma_{2},\tau_{2}) (\sigma_{3},\tau_{3})
\ldots (\sigma_{n},\tau_{n})$, and let $\leng{\theta} = n$. A
\emph{timed language} is a (possibly infinite) set of timed words.

\paragraph{Metric Interval Time Logic.} Given a finite alphabet
$\Sigma$, the formulas of MITL are defined by the following grammar,
where $\sigma \in \Sigma$, $I \in \mathcal{I}(\R_{\N})$ :
\begin{center}
$\varphi$ := $\; \top \quad \vert \quad \sigma \quad \vert \quad \varphi_{1} \wedge \varphi_{2} \quad \vert \quad \neg \varphi \quad \vert \quad \varphi_{1} U_{I} \varphi_{2}$.
\end{center}
We rely on the following usual shortcuts $\lozenge_{I} \varphi$ stands
for $\top U_{I} \varphi$, $\Box_{I} \varphi$ for $\neg \lozenge_{I}
\neg \varphi$, $\varphi_{1} \tilde{U}_{I} \varphi_{2}$ for $\neg (
\neg \varphi_{1} U_{I} \neg \varphi_{2})$, $\Box \varphi$ for
$\Box_{[0,\infty)} \varphi$ and $\lozenge \varphi$ for
$\lozenge_{[0,\infty)} \varphi$.

Given an MITL formula $\Phi$, we note $Sub(\Phi)$ the set of all
subformulas of $\Phi$, i.e. : $\sub{\Phi}=\{\Phi\}$ when
$\Phi\in\{\top\}\cup\Sigma$,
$\sub{\neg\varphi}=\{\neg\varphi\}\cup\sub{\varphi}$ and
$\sub{\Phi}=\{\Phi\}\cup\sub{\varphi_1}\cup\sub{\varphi_2}$ when
$\Phi=\varphi_1U_I\varphi_2$ or $\Phi=\varphi_1\wedge\varphi_2$. We
let $|\Phi|$ denote the \emph{size of $\Phi$}, defined as the
\emph{number} of $U$ or $\tilde{U}$ modalities it contains.

\begin{definition}[Semantics of MITL] Given a timed word $\theta =
  (\bar{\sigma},\bar{\tau})$ over $\Sigma$, a position $1\leq i\leq
  \leng{\theta}$ and an MITL formula $\Phi$, we say that $\theta$
  satisfies $\Phi$ from position $i$, written $(\theta, i) \models
  \Phi$ iff the following holds :
  \begin{itemize}
  \item $(\theta,i) \models \top$
  \item $(\theta,i) \models \sigma \Leftrightarrow \sigma_{i} = \sigma$
  \item $(\theta,i) \models \varphi_{1} \wedge \varphi_{2}
    \Leftrightarrow (\theta,i) \models \varphi_{1} \text{ and }
    (\theta,i) \models \varphi_{2}$
  \item $(\theta,i) \models \neg \varphi \Leftrightarrow (\theta,i)
    \nvDash \varphi$
  \item $(\theta,i) \models \varphi _{1}U_{I} \varphi _{2}
    \Leftrightarrow \exists i \leq j \leq \vert \theta \vert, \text{
      such that } (\theta,i) \models \varphi _{2}, \tau_{j}-\tau_{i}
    \in I \text{ and } \forall i \leq k < j, (\theta,k) \models
    \varphi_{1}$
  \end{itemize}
  We say that \textbf{\textit{$\theta$ satisfies $\Phi$}}, written
  $\theta \models \Phi$, iff $(\theta,1) \models \Phi$.  We note
  $\sem{\Phi} = \{\theta \mid \theta \models \Phi\}$.
\end{definition}
Observe that, for all MITL formula $\Phi$, $\sem{\Phi}$ is a timed
language and that we can transform any MITL formula in an equivalent
MITL formula in \emph{negative normal form} (in which negation can
only be present on letters $\sigma \in \Sigma$) using the operators :
$\wedge, \vee, \neg, U_{I}$ and $\tilde{U}_{I}$.

\begin{example}
  We can express the fact that `\textit{every occurrence of $p$ is
    followed by an occurrence of $q$ between 2 and 3 time units
    later}' by: $\Box (p \Rightarrow \lozenge_{[2,3]} q)$. Its
  negation, $\neg \big(\Box (p \Rightarrow \lozenge_{[2,3]} q )
  \big)$, is equivalent to the following negative normal form formula:
  $\top U_{[0,+\infty)} ( p \wedge \perp \tilde{U}_{[2,3]} \neg q )$.
\end{example}

\paragraph{Alternating timed automata.} Let us now recall \cite{OW07}
the notion of (one clock) \emph{alternating timed automaton} (\ATA for
short). As we will see, \ATA define timed languages, and we will use
them to express the semantics of MITL formula. Let $\Gamma(L)$ be a
set of formulas defined by the following grammar:
\begin{center}
  $\gamma$ := $\top$ $\;$ $\vert$ $\;$ $\perp$ $\;$ $\vert$ $\;$
  $\gamma_{1}$ $\vee$ $\gamma_{2}$ $\;$ $\vert$ $\;$ $\gamma_{1}$
  $\wedge$ $\gamma_{2}$ $\;$ $\vert$ $\;$ $\ell$ $\;$ $\vert$ $\;$ $x
  \bowtie c$ $\;$ $\vert$ $\;$ $x.\gamma$
\end{center}
where $c \in \N$, ${\bowtie} \in \lbrace <, \leq, >, \geq \rbrace \text{
  and } \ell \in L$.  We call $x \bowtie c$ a \emph{clock
  constraint}. Intuitively, the expression $x.\gamma$ means that
clock $x$ must be reset to 0.
\begin{definition}[\cite{OW07}]
  A \textbf{\textit{one-clock alternating timed automaton}} (\ATA) is
  a tuple $\mathcal{A} = (\Sigma, L, \ell_{0}, F, \delta)$ where
  $\Sigma$ is a finite alphabet, $L$ is a finite set of locations,
  $\ell_{0}$ is the initial location, $F \subseteq L$ is a set of
  accepting locations, $\delta : L \times \Sigma \rightarrow
  \Gamma(L)$ is the transition function.
\end{definition}
We assume that, for all $\gamma_1$, $\gamma_2$ in $\Gamma(L)$:
$x.(\gamma_{1} \vee \gamma_{2}) = x.\gamma_{1} \vee x.\gamma_{2}$,
$x.(\gamma_{1} \wedge \gamma_{2}) = x.\gamma_{1} \wedge x.\gamma_{2}$,
$x.x.\gamma = x.\gamma$, $x.(x \bowtie c) = 0 \bowtie c$, $x.\top =
\top$ and $x.\perp = \perp$. Thus, we can write any formula of
$\Gamma(L)$ in disjunctive normal form, and, from now on, we assume
that $\delta (\ell,\sigma)$ is written in disjunctive normal
form. That is, for all $\ell$, $\sigma$, we have $\delta (\ell,\sigma)
= \underset{j}{\bigvee} \underset{k}{\bigwedge} A_{j,k}$, where each
term $A_{j,k}$ is of the form $\ell$, $x.\ell$, $x \bowtie c$ or $0
\bowtie c$, with $\ell \in L$ and $c \in \N$.  We call \emph{arc} of
the \ATA $\mathcal{A}$ a triple $(\ell, \sigma, \bigwedge_k A_{j,k})$
s.t. $\bigwedge_k A_{j,k}$ is a disjunct in $\delta (\ell,\sigma)$.

\begin{example}
\label{ex1}
As an example, consider the \ATA $\mathcal{A}$ in Fig.~\ref{ExATA},
over the alphabet $\Sigma = \lbrace \sigma \rbrace$.  $\mathcal{A}$
has three \emph{locations} $\ell_{0}$, $\ell_{1} \text{ and }
\ell_{2}$, such that $\ell_{0}$ is initial and $\ell_{0}$ and
$\ell_{1}$ are final. $\mathcal{A}$ has a unique clock $x$ and its
transition function is given by : $\delta (\ell_{0},\sigma) = \ell_{0}
\wedge x.\ell_{1}$, $\delta (\ell_{1},\sigma) = (\ell_{2} \wedge x=1)
\vee (\ell_{1} \wedge x\neq 1)$ and $\delta (\ell_{2},\sigma) =
\ell_{2}$.  The \emph{arcs} of $\mathcal{A}$ are thus ($\ell_{0},
\sigma, \ell_{0} \wedge x.\ell_{1})$, $(\ell_{1}, \sigma, \ell_{2}
\wedge x=1)$, $(\ell_{1}, \sigma, \ell_{1} \wedge x\neq 1)$ and
$(\ell_{2}, \sigma, \ell_{2})$.  Observe that, in the figure we
represent the (conjunctive) arc $(\ell_{0}, \sigma, \ell_{0} \wedge
x.\ell_{1})$ by an arrow splitting in two branches connected resp. to
$\ell_{0}$ and $\ell_{1}$ (possibly with different resets: the reset
of clock $x$ is depicted by $x:=0$).  Intuitively, \emph{taking the
  arc $(\ell_{0}, \sigma, \ell_{0} \wedge x.\ell_{1})$ } means that,
when reading a $\sigma$ from location $\ell_0$ and clock value $v$,
the automaton should start \emph{two copies of itself}, one in
location $\ell_0$, with clock value $v$, and a second in location
$\ell_1$ with clock value $0$. Both copies should accept the suffix
for the word to be accepted. This notion will be defined formally in
the next section.
\end{example}

\begin{figure}[t]
\centering
\begin{tikzpicture}[xscale=1]

\draw (0,0) node [circle, double, draw, inner sep=3pt] (A) {$\ell_{0}$};
\draw (3,0) node [circle, double, draw, inner sep=3pt] (B) {$\ell_{1}$};
\draw (6,0) node [circle, draw, inner sep=3pt] (C) {$\ell_{2}$};

\draw [-latex'] (-1,0) -- (A) ;
\draw [-latex'] (B) .. controls +(310:1.3cm) and +(230:1.3cm) .. (B) node [pos=0.5,below] {$\sigma, x \neq 1$};
\draw [-latex'] (C) .. controls +(310:1.3cm) and +(230:1.3cm) .. (C) node [pos=0.5,below] {$\sigma$};
\draw [-latex'] (B) -- (C) node [pos=.5,above] {$\sigma, x = 1$};
\draw[-*] (A) -- (1.5,0) node [pos=.3,above] {$\sigma$};
\draw[-latex'] (1.5,0) -- (B) node [pos=.5,above] {$x := 0$};
\draw [-latex'] (1.45,0) .. controls +(230:0.8cm) .. (A);

\end{tikzpicture}
\caption{\ATA $\mathcal{A}$}
\label{ExATA}
\end{figure}

\section{An intervals semantics for \ATA\label{sec:an-interv-semant}}
The \emph{standard} semantics for \ATA \cite{OW05,LW08} is defined as
an infinite transition system whose \emph{configurations} are finite
sets of pairs $(\ell,v)$, where $\ell$ is a location and $v$ is the
valuation of the (unique) clock. Intuitively, each configuration thus
represents the current state of all the copies (of the unique clock)
that run in parallel in the \ATA. The transition system is
\emph{infinite} because one cannot bound, a priori, the number of
different clock valuations that can appear in a single configuration,
thereby requiring peculiar techniques, such as well-quasi orderings
(see \cite{OW07}) to analyse it. In this section, we introduce a
\emph{novel} semantics for \ATA, in which \emph{configurations} are
sets of \emph{states} $(\ell, I)$, where $\ell$ is a location of the
\ATA and $I$ is an \emph{interval}, instead of a single point in
$\R^+$. Intuitively, a state $(\ell, I)$ is an abstraction of all the
states $(\ell,v)$ with $v\in I$, in the standard semantics. We further
introduce the notion of \emph{approximation function}. Roughly
speaking, an approximation function associates with each configuration
$C$ (in the interval semantics), a set of configurations that
\emph{approximates} $C$ (in a sense that will be made precise later),
\emph{and contains less states} than $C$. In
section~\ref{sec:from-mitl-timed}, we will show that the interval
semantics, combined to a proper approximation function, allows us to
build, from all MITL formula $\Phi$, an \ATA $\Aa_\Phi$ accepting
$\sem{\Phi}$, and whose \emph{reachable configurations contain a
  bounded number of intervals}. This will be the basis of our
algorithm to build a \emph{timed automaton} recognising $\Phi$ (and
hence performing automata-based model-checking of MITL).

We call \emph{state} of an \ATA $\mathcal{A} = (\Sigma, L, \ell_{0},
F, \delta)$ a couple $(\ell,I)$ where $\ell \in L$ and $I \in
\mathcal{I}(\R^{+})$.  We note $S = L \times \mathcal{I}(\R^{+})$ the
state space of $\mathcal{A}$.  A state $(\ell,I)$ is \emph{accepting}
iff $\ell \in F$.  When $I = [v,v]$ (sometimes denoted $I=\{v\}$), we
shorten $(\ell,I)$ by $(\ell,v)$. A \emph{configuration} of an \ATA
$\mathcal{A}$ is a (possibly empty) finite set of states of
$\mathcal{A}$ whose intervals associated to a same location 
are disjoint.  In the rest of the paper,
we sometimes see a configuration $C$ as a function from $L$ to
$2^{\Ii(\R^+)}$ s.t. for all $\ell\in L$: $C(\ell)=\{I\mid (\ell,I)\in
C\}$. We note $\configs{\Aa}$ the set of all configurations of
$\mathcal{A}$.  The \emph{initial configuration} of $\mathcal{A}$ is
$\lbrace(\ell_{0},0)\rbrace$. A configuration is \emph{accepting} iff
all the states it contains are accepting (in particular, the empty
configuration is accepting).  For a configuration $C$ and a delay $t$
$\in \R^{+}$, we note $C+t$ the configuration $\lbrace (\ell,I+t)
\vert (\ell,I) \in C \rbrace$. From now on, we assume that, for all
configurations $C$ and all locations $\ell$: when writing $C(\ell)$ as
$\{I_1,\ldots, I_m\}$ we have $ I_{i} < I_{i+1}$ for all $1 \leq i <
m$. Let $E$ be a finite set of intervals from
$\mathcal{I}(\R^{+})$. We let $\nbclocks{E}=|\{[a,a]\in E\}|+2\times
|\{I\in E\mid \inf(I)\neq\sup(I)\}$ denote the \emph{number of clock
  copies} of $E$. Intuitively, $\nbclocks{E}$ is the number of
\emph{individual clocks} we need to encode all the information present
in $E$, using one clock to track singular intervals, and two clocks to
retain $\inf(I)$ and $\sup(I)$ respectively for non-singular intervals
$I$. For a configuration $C$, we let $\nbclocks{C}=\sum_{\ell\in L}
\nbclocks{C(\ell)}$.

\paragraph{Interval semantics.} Our definition of the \emph{interval
  semantics} for \ATA follows the definition of the \emph{standard}
semantics as given by Ouaknine and Worrell \cite{OW05}, adapted to
cope with intervals. Let $M\in\configs{\Aa}$ be a configuration of an
\ATA \Aa, and $I \in \mathcal{I}(\R^{+})$. We define the satisfaction
relation "$\models_{I}$" on $\Gamma(L)$~as:
$$
\begin{array}{ll}
  M \models_{I} \top\\
  M \models_{I} \gamma_{1} \wedge \gamma_{2} &\text{ iff } M \models_{I} \gamma_{1} \text{ and } M \models_{I} \gamma_{2}\\
  M \models_{I} \gamma_{1} \vee \gamma_{2} &\text{ iff } M \models_{I} \gamma_{1} \text{ or } M \models_{I} \gamma_{2}
\end{array}
\qquad
\begin{array}{ll}
  M \models_{I} \ell  &\text{ iff } (\ell,I) \in M\\
  M \models_{I} x \bowtie c &\text{ iff } \forall x \in I, x \bowtie c\\
  M \models_{I} x.\gamma &\text{ iff } M \models_{[0,0]} \gamma
\end{array}
$$
We say that $M$ is a \textit{minimal model} of the formula $\gamma$
$\in$ $\Gamma(L)$ with respect to the interval $I \in
\mathcal{I}(\R^{+}) \text{ iff } M \models_{I} \gamma$ and there is no
$M' \subsetneq M$ such that ${M' \models_{I} \gamma}$.  Remark that a
formula $\gamma$ can admit several minimal models (one for each
disjunct in the case of a formula of the form
$\gamma=\underset{j}{\bigvee} \underset{k}{\bigwedge} A_{j,k}$).
Intuitively, for $\ell \in L, \sigma \in \Sigma$ and $I \in
\mathcal{I}(\R^{+})$, a minimal model of $\delta(\ell,\sigma)$ with
respect to $I$ represents a configuration the automaton can reach from
state $(\ell,I)$ by reading $\sigma$. The definition of $M \models_{I}
x \bowtie c$ only allows to take a transition $\delta(\ell,\sigma)$
from state $(\ell,I)$ if all the values in $I$ satisfy the clock
constraint $x \bowtie c$ of $\delta(\ell,\sigma)$.

\begin{example}
  Let us consider again the \ATA of Fig.~\ref{ExATA}. A minimal
  model $M$ of $\delta(\ell_{1}, \sigma)$ with respect to [1.5,2] must
  be such that : $M \models_{[1.5,2]} (\ell_{1} \wedge x \neq 1) \vee
  (\ell_{2} \wedge x = 1)$.  As $\exists v \in [1.5,2]$ s.t. $v \neq
  1$, it is impossible that $M \models_{[1.5,2]} x = 1$. However, as
  $\forall v \in [1.5,2], v \neq 1$, $M \models_{[1.5,2]} x \neq 1$
  and so $M \models_{[1.5,2]} (\ell_{1} \wedge x \neq 1) \vee
  (\ell_{2} \wedge x = 1)$ iff $M \models_{[1.5,2]} \ell_{1}$,
  i.e. $(\ell_{1}, [1.5,2]) \in M$.  So, $\lbrace (\ell_{1},
  [1.5,2])\rbrace$ is the unique \emph{minimal model} of
  $\delta(\ell_{1}, \sigma)$ wrt $[1.5,2]$.
\end{example}

\paragraph{Approximation functions.} As stated before, our goal is to
define a semantics for \ATA that enables to bound the number of clock
copies. To this end, we define the notion of approximation function:
we will use such functions to reduce the number of clock copies
associated with each location in a configuration.  An approximation
function associates with each configuration $C$ a set of
configurations $C'$ s.t. $\nbclocks{C'(\ell)}\leq\nbclocks{C(\ell)}$
and s.t. the intervals in $C'(\ell)$, \emph{cover} those of $C(\ell)$,
for all $\ell$. Then, we define the semantics of an \ATA $\mathcal{A}$
by means of a transition system $\mathcal{T}_{\mathcal{A}, f}$ whose
definition is \emph{parametrised by an approximation function $f$}.

\begin{definition}
  Let $\Aa$ be an \ATA $\Aa$. An \emph{approximation function} is a
  function $f:\configs{\Aa}\mapsto 2^{\configs{\Aa}}$ s.t. for all
  configurations $C$, for all $C'\in f(C)$, for all locations $\ell\in
  L$: $(i)$ $\nbclocks{C'(\ell)}\leq\nbclocks{C(\ell)}$, $(ii)$ for
  all $I\in C(\ell)$, there exists $J\in C'(\ell)$ s.t. $I \subseteq
  J$, $(iii)$ for all $J \in C'(\ell)$, there are $I_1, I_2 \in
  C(\ell)$ s.t. $\inf(J) = \inf(I_1)$ and $\sup(J) = \sup(I_2)$.  We
  note $\appf{\Aa}$ the set of approximation functions for~$\Aa$.
\end{definition}

\begin{definition}
  Let $\mathcal{A}$ be an \ATA and let $f\in\appf{\Aa}$ be an
  approximation function.  The \emph{$f$-semantics of $\Aa$} is the
  transition system $\mathcal{T}_{\mathcal{A}, f} = (\configs{\Aa},
  \rightsquigarrow,\break \longrightarrow_{f})$ on configurations of
  $\mathcal{A}$ defined as follows:
  \begin{itemize}
  \item the transition relation $\rightsquigarrow$ takes care of the
    elapsing of time : $\forall t\in \R^{+}, C
    \overset{t}{\rightsquigarrow} C' \text{ iff } C' = C+t$. We let
    ${\rightsquigarrow} = \underset{t \in \R^{+}}{\bigcup}
    \overset{t}{\rightsquigarrow}$.
  \item the transition relation $\longrightarrow$ takes care of
    discrete transitions between locations and of the approximation :
    $C = \lbrace(\ell_{k},I_{k})_{k \in K} \rbrace$
    $\overset{\sigma}{\longrightarrow} C'$ iff there exists a
    configuration $C"=\underset{k \in K}{\bigcup} M_{k}$ s.t. $(i)$
    for all $k$: $M_{k}$ is a minimal model of
    $\delta(\ell_{k},\sigma)$ with respect to $I_{k}$, and $(ii)$ $C'
    \in f(C")$.  We let $\longrightarrow_{f} = \underset{\sigma \in
      \Sigma}{\bigcup} \overset{\sigma}{\longrightarrow_{f}}$.
  \end{itemize}
\end{definition}

We can now define the accepted language of an \ATA (parametrised by an
approximation function $f$). Let $\theta = (\bar{\sigma},\bar{\tau})$
be a timed word s.t.  $\leng{\theta}=n$, and let $f\in\appf{\Aa}$ be
an approximation function.  Let us note $t_{i} = \tau_{i} -
\tau_{i-1}$ for all $1 \leq i \leq \vert \theta \vert$, assuming
$\tau_{0} = 0$.  An \emph{$f$-run} of $\Aa$ on $\theta$ is a finite
sequence of discrete and continuous transitions in $\Tt_{\Aa,f}$ that
is labelled by $\theta$, i.e.  a sequence of the form: $ C_{0}
\overset{t_{1}}{\rightsquigarrow} C_{1}
\overset{\sigma_{1}}{\longrightarrow_{f}} C_{2}
\overset{t_{2}}{\rightsquigarrow} C_{3}
\overset{\sigma_{2}}{\longrightarrow_{f}}
... \overset{t_{n}}{\rightsquigarrow} C_{2n-1}
\overset{\sigma_{n}}{\longrightarrow_{f}} C_{2n} $.
We say that an $f$-run is \emph{accepting} iff its last configuration
$C_{2n}$ is accepting and we say that a timed word is $f$-accepted by
$\mathcal{A}$ iff there exists an accepting $f$-run of $\mathcal{A}$
on this word.  We note $L_{f}(\mathcal{A})$ the language of all finite
timed words $f$-accepted by $\mathcal{A}$. In the reset of the paper,
we (sometimes) use the abbreviation $C_{i} \overset{t,
  \sigma}{\longrightarrow_{f}} C_{i+2}$ for $C_{i}
\overset{t}{\rightsquigarrow} C_{i+1} = C_{i}+t
\overset{\sigma}{\longrightarrow_{f}} C_{i+2}$.

Observe that this interval semantics generalises the standard \ATA
semantics \cite{OW05}. This standard semantics can be recovered by
considering $\Tt_{\Aa, Id}$, where $Id$ is the approximation function
such that $Id(C) = \{C\}$ for all $C$. Indeed, in $\Tt_{\Aa, Id}$, all
the reachable configurations contain only states of the form $(\ell,
[a,a])$, i.e., all intervals are singular. So, each state $(\ell,
[a,a])$ can be naturally mapped to a state $(\ell, a)$ in the standard
semantics. From now on, we denote $L_{Id}(\Aa)$ by~$L(\Aa)$.
\begin{example}
  Let us consider again the \ATA $\Aa$ in Fig.~\ref{ExATA}, and the
  timed word $\theta = (\sigma,0) (\sigma,0.2) (\sigma,0.5)$, with
  $\leng{\theta}=3$. Let $f$ be the approximation function s.t. for
  all $C\in\configs{\Aa}$: $f(C)=\big\{C(\ell_0)\cup C(\ell_2)\cup
  \{(\ell_1,[inf(I_1),sup(I_m)])\}\big\}$ if
  $C(\ell_1)=\{I_1,I_2,\ldots I_m\}\neq \emptyset$ (assuming, as
  mentioned before, that $I_1 < I_2 < \cdots < I_m$); and $f(C)=\{C\}$
  if $C(\ell_1)=\emptyset$. Thus, roughly speaking, $f(C)$ always
  contains one configuration, which is obtained from $C$ by merging
  all the intervals in $C(\ell_1)$ and keeping the rest of the
  configuration untouched. Then, an $f$-run on $\theta$ is:
  $\rho_1= \{(\ell_{0},0)\}\xrightarrow{0,\sigma}%
  \{(\ell_0,0), (\ell_1,0)\}\xrightarrow{0.2,\sigma}%
  \{(\ell_0,0.2), (\ell_1,[0,0.2])\}\xrightarrow{0.3,\sigma}%
  \{(\ell_0,0.5), (\ell_1,[0,0.5])\}%
  $. %
  Also, an $Id$-run on $\theta$ is:
  $ \rho_2= \{(\ell_{0},0)\} %
  \xrightarrow{0,\sigma} %
  \{(\ell_0,0), (\ell_1,0)\} %
  \xrightarrow{0.2,\sigma} %
  \{(\ell_0,0.2), (\ell_1,0),\break (\ell_1,0.2)\} %
  \xrightarrow{0.3,\sigma} %
  \{(\ell_0,0.5), (\ell_1,0), (\ell_1,0.3),
  (\ell_1,0.5)\}%
  $. %
  Now, consider the timed word $\theta'=\theta (\sigma,1.1)$. An $Id$-run
  on $\theta'$ is $\rho_3=\rho_2\xrightarrow{0.6,\sigma}
  \{(\ell_0,1.1), (\ell_1,0), (\ell_1,0.6),
  (\ell_1,0.9),\break (\ell_1,1.1)\}$ (hence $\theta'$ is
  $Id$-accepted by $\Aa$), but $\Aa$ has no $f$-run on
  $\theta'$. Indeed, letting $0.6$ t.u. elapse from $\rho_1$'s last
  configuration yields $\{(\ell_0,1.1), (\ell_1,[0.6,1.1])\}$
  from which no transition can be fired, because $[0.6,1.1]$ satisfies
  neither $x\neq 1$ nor $x=1$, which are the respective guards of the
  arcs from $\ell_1$.
\end{example}

In the rest of the paper we will rely mainly on approximation
functions that enable to bound the number of clock copies in all
configurations along all runs of an \ATA $\Aa$.  Let $k \in \N$ be a
constant. We say that $f_k\in\appf{\Aa}$ is a \emph{$k$-bounded
  approximation function} iff for all $ C \in \configs{\Aa}$, for all
$C' \in f_{k}(C)$: $\nbclocks {C'} \leq k$.

\paragraph{Accepted language and approximations.} Let us now study the
relationship between the standard semantics of \ATA and the family of
semantics obtained when relying on an approximation function that is
different from $Id$. We show that introducing approximations does not
increase the accepted language:
\begin{proposition}
\label{inclu}
For all \ATA $\Aa$, for all $f \in \appf{\Aa}$: $L_{f}(\Aa) \subseteq
L(\Aa)$.
\end{proposition}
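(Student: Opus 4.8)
The plan is to show that any word $f$-accepted by $\Aa$ is also $Id$-accepted, i.e. accepted in the standard semantics. The natural strategy is to take an accepting $f$-run and construct, step by step, a corresponding $Id$-run on the same timed word whose configurations are \emph{finer} than those of the $f$-run, where ``finer'' means: every interval appearing in the $f$-run is covered by the union of intervals lying below it in the corresponding $Id$-configuration. More precisely, I would fix a simulation invariant relating an $f$-configuration $C$ and an $Id$-configuration $D$, stating that for every $(\ell,I)\in C$ there exist states $(\ell,I_1),\dots,(\ell,I_p)\in D$ (the singular ones, since $Id$-reachable configurations only contain singular intervals) such that each point of $I$ is ``represented'': formally $\inf(I)=\inf(I_1)$, $\sup(I)=\sup(I_p)$, and more to the point, $D$ contains enough singletons to witness acceptance. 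The key is choosing the invariant strong enough to push through both the time-elapse and the discrete transitions, yet weak enough to be maintainable across the approximation step.

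\medskip

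\noindent\textbf{Key steps, in order.} First I would set up the simulation relation precisely. Because the approximation function only merges intervals and produces covering intervals (clauses $(ii)$ and $(iii)$ of the definition), the merging never invents clock values outside the convex hull of the originals; this is what makes a backward correspondence to singletons plausible. Second, I would prove the invariant is preserved by $\rightsquigarrow$: time-elapse adds the same $t$ to every interval in both configurations, so covering and endpoint conditions are trivially stable under $C\mapsto C+t$. Third, and this is the heart of the argument, I would prove preservation across $\overset{\sigma}{\longrightarrow_f}$. Given a minimal model $M_k$ of $\delta(\ell_k,I_k)$ in the $f$-run, I must produce matching minimal models in the $Id$-run for each singleton $(\ell_k,v)$ with $v\in I_k$ that the invariant provides. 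Here I would use the crucial fact that $M_k\models_{I_k} x\bowtie c$ requires \emph{all} of $I_k$ to satisfy the constraint; hence each singleton $v\in I_k$ also satisfies it, so the same disjunct of $\delta(\ell_k,\sigma)$ is enabled for every such $v$, and the corresponding singular minimal models exist. Taking their union gives an $Id$-successor $D''$, and since $Id(D'')=\{D''\}$ there is no further approximation to worry about on the $Id$ side. I would then check that $D''$ re-establishes the invariant with respect to $C'\in f(C'')$, using clauses $(ii)$–$(iii)$ to match the merged intervals of $C'$ against the singletons of $D''$.

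\medskip

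\noindent\textbf{Closing the argument.} Finally I would handle acceptance: the invariant should entail that if the last $f$-configuration $C_{2n}$ is accepting (every state has an accepting location), then the corresponding $D_{2n}$ is accepting as well, because every singleton in $D_{2n}$ sits under some interval of $C_{2n}$ and hence shares its (accepting) location. Running the induction from the common initial configuration $\{(\ell_0,0)\}$ (which trivially satisfies the invariant with itself) through all $2n$ transitions then yields an accepting $Id$-run on $\theta$, giving $\theta\in L(\Aa)$ and hence the desired inclusion.

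\medskip

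\noindent\textbf{The main obstacle} I anticipate is the discrete-transition step, specifically formulating the simulation invariant so that the singletons on the $Id$-side are genuinely sufficient to reconstruct minimal models for every interval the $f$-run uses. The subtlety is that reset branches ($x.\ell$) collapse intervals to $[0,0]$ and guards partition intervals by satisfaction; I must ensure the invariant tracks enough singular points (not merely the two endpoints) so that every enabled disjunct on the interval side is mirrored on the singleton side, and that the covering/endpoint conditions survive the merge performed by $f$. Getting this bookkeeping right — rather than any deep idea — is where the real work lies.
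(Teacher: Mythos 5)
Your overall strategy --- induct along the accepting $f$-run, simulate each step in the $Id$-semantics by firing the same arcs, exploit the fact that a guard satisfied by a whole interval is satisfied by every point of it, and conclude acceptance from location matching --- is exactly the paper's. The genuine defect is the \emph{direction} of your simulation invariant. You state it as: for every $(\ell,I)\in C$ there exist singletons in $D$ representing (the endpoints of) $I$. But what the proof actually needs --- and what you silently invoke in your closing paragraph (``every singleton in $D_{2n}$ sits under some interval of $C_{2n}$'') --- is the \emph{converse}: for every $(\ell,[v,v])\in D$ there is $(\ell,I)\in C$ with $v\in I$. This is the invariant the paper maintains, and it is needed in two places. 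First, in the discrete step: in the $Id$-semantics \emph{every} state of $D_{2k+1}$ must contribute a minimal model to the successor configuration, so every singleton of $D$ must be assigned an arc whose guard it satisfies; your invariant does not exclude ``orphan'' singletons of $D$ lying under no interval of $C$, and for those you have no arc to fire and no guarantee that any guard holds. Second, at the end: acceptance of $C_{2n}$ transfers to $D_{2n}$ only if every state of $D_{2n}$ shares its location with some state of $C_{2n}$; your stated invariant gives the correspondence in the useless direction, since it says nothing about states of $D$ that are not matched to $C$.

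A secondary flaw: your endpoint condition $\inf(I)=\inf(I_1)$, $\sup(I)=\sup(I_p)$ is not only unnecessary but fragile, because nothing in the definition of approximation functions forces the intervals produced by merging to be closed; a singleton placed at $\inf(I)$ with $\inf(I)\notin I$ may violate a guard that $I$ satisfies (e.g.\ $I=(1,2)$ against the guard $x>1$), so the interval's arc could not be fired from that representative. Both problems disappear if you replace your invariant by the paper's: with ``every singleton of $D_i$ lies inside some interval of $C_i$ at the same location'', your own discrete-step argument (choose a covering interval for each singleton, fire its arc, guards transfer pointwise, resets coincide, and clause $(ii)$ of the approximation-function definition re-establishes the invariant after the merge) goes through essentially verbatim, and acceptance of $D_{2n}$ follows immediately.
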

\begin{proof}[sketch]
  Let $C_0\timestep{t_1}C_1\xrightarrow{\sigma_1} C_2
  \timestep{t_2}C_3\cdots \xrightarrow{\sigma_n}C_{2n}$ be an
  \emph{accepting} $f$-run of $\Aa$ on $\theta$, and let us build,
  inductively, an accepting $Id$-run
  $D_0\timestep{t_1}D_1\xrightarrow{\sigma_1} D_2
  \timestep{t_2}D_3\cdots \xrightarrow{\sigma_n}D_{2n}$ on $\theta$
  s.t. the following \emph{invariant} holds: for all $1\leq i\leq 2n$,
  for all $(\ell, [v,v])\in D_i$, there is $(\ell, I)\in C_i$
  s.t. $v\in I$. The \emph{base case} is trivial since $C_0=D_0$. For
  the \emph{inductive case}, we first observe that the elapsing of
  time maintains the invariant. Thus, we have to show that each
  discrete step in the $f$-run can be simulated by a discrete step in
  the $Id$-run that maintains the invariant. A $\sigma$ labeled
  discrete step from some configuration $C_{2j+1}$ in the $f$-run
  consists in selecting an arc $a_s$ of the form
  $(\ell,\sigma,\gamma)$ for each $s=(\ell,I)$ in $C_{2j+1}$, whose
  guard is satisfied by $I$. Then, firing all these arcs yields a
  configuration $E$, and $C_{2j+2}\in f(E)$. From each $s'=(\ell,
  [v,v])$ in $D_{2j+1}$, we fire the arc $a_s$ where $s=(\ell,I)$ is a
  state in $C_{2j+1}$ s.t $v\in I$. Such an $s$ exists by induction
  hypothesis. Since the effects of the arcs are the same, and by
  properties of the approximation function, we conclude that
  $D_{2j+2}$ and $C_{2j+2}$ respect the invariant. In particular
  $D_{2n}$ and $C_{2n}$ respect it, hence $D_{2n}$ is accepting. 
\end{proof}

\section{From MITL to Timed Automata\label{sec:from-mitl-timed}}
In this section, we present our new technique to build, from any MITL
formula $\Phi$, a \emph{timed automaton} that accepts
$\sem{\Phi}$. Our technique relies on two ingredients. First, we
recall \cite{OW07} how to build, from all MITL formula $\Phi$, and
\ATA $\Aa_\Phi$ s.t. $L(\Aa_\Phi)=\sem{\Phi}$. This is not sufficient
to obtain a timed automaton, as, in general, the semantics of an \ATA
needs an \emph{unbounded} number of clock copies, which prevents us
from translating all \ATA into timed automata. The second ingredient
is the definition of a family of \emph{bounded approximation
  functions} $f^\star_\Phi$, s.t., for all MITL formula $\Phi$,
$L_{\appfunc{\Phi}}(\Aa_\Phi)=L(\Aa_\Phi)$. Since each $f^\star_\Phi$
is a \emph{bounded approximation function}, the number of clock copies
in the $f^\star_\Phi$-semantics of $\Aa_\Phi$ is \emph{bounded}, which
allows us to build a \emph{timed automaton $\Bb_\Phi$ with the same
  semantics} (thus, $\Bb_\Phi$ accepts $\sem{\Phi}$).

\paragraph{From MITL to \ATA.} We begin by recalling\footnote{Remark
  that in \cite{OW07}, the authors are concerned with MTL, but since
  MITL is a syntactic fragment of MTL, the procedure applies here.}
\cite{OW07} how to build, from any MITL formula $\Phi$ (in negative
normal form), an \ATA $\Aa_\Phi$ s.t. $L(\Aa_\Phi)=\sem{\Phi}$. We let
$\mathcal{A}_{\Phi} = (\Sigma, L, \ell_{0}, F, \delta)$ where: $L$ is
the set containing the initial copy of $\Phi$, noted `$\Phi_{init}$',
and all the formulas of $Sub(\Phi)$ whose outermost connective is
`$U$' or `$\tilde{U}$'; $\ell_{0} = \Phi_{init}$; $F$ is the set of
the elements of $L$ of the form $\varphi_{1} \tilde{U}_{I}
\varphi_{2}$. Finally $\delta $ is defined\footnote{Remark that the
  $x\leq sup(I)$ and $x>sup(I)$ conditions in the resp. definitions of
  $\delta(\varphi_{1}U_{I} \varphi_{2}, \sigma)$ and
  $\delta(\varphi_{1}\tilde{U}_{I} \varphi_{2}, \sigma)$ have been
  added here for technical reasons. This does not modify the accepted
  language. Indeed, in \cite{OW07}, these conditions are given in the
  infinite word semantics of \ATA.} by induction on the structure of
$\Phi$:
\begin{itemize}
\item $\delta(\Phi_{init}, \sigma) = x.\delta(\Phi, \sigma)$
\item $\delta(\varphi_{1} \vee \varphi_{2}, \sigma) =
  \delta(\varphi_{1}, \sigma) \vee \delta(\varphi_{2}, \sigma)$;  $\delta(\varphi_{1} \wedge \varphi_{2}, \sigma) =
  \delta(\varphi_{1}, \sigma) \wedge \delta(\varphi_{2}, \sigma)$
\item $\delta(\varphi_{1}U_{I} \varphi_{2}, \sigma) =
  (x.\delta(\varphi_{2},\sigma) \wedge x \in I) \vee
  (x.\delta(\varphi_{1},\sigma) \wedge \varphi_{1}U_{I} \varphi_{2}
  \wedge x\leq sup(I))$
\item $\delta(\varphi_{1} \tilde{U}_{I} \varphi_{2}, \sigma) =
  (x.\delta(\varphi_{2},\sigma) \vee x\notin I) \wedge
  (x.\delta(\varphi_{1},\sigma) \vee \varphi_{1} \tilde{U}_{I}
  \varphi_{2} \vee x > sup(I))$
\item $\forall \sigma_1,\sigma_2\in \Sigma$: $\delta(\sigma_{1},
  \sigma_{2}) = \left\{
    \begin{array}{ll}
      \text{true} &\text{if } \sigma_{1} \text{=} \sigma_{2} \\
      \text{false}&\text{if } \sigma_{1} \neq \sigma_{2}
    \end{array}
  \right.$
\item $\forall \sigma_1,\sigma_2\in \Sigma$: $\delta(\neg \sigma_{1}, \sigma_{2}) = \left\{
    \begin{array}{ll}
      \text{false} &\text{if } \sigma_{1} \text{=} \sigma_{2} \\
      \text{true} &\text{if } \sigma_{1} \neq \sigma_{2}
    \end{array}
  \right.$
\item $\forall \sigma\in\Sigma$: $\delta(\top,\sigma)=\top$ and
  $\delta(\bot,\sigma)=\bot$.
\end{itemize}
To simplify the following proofs, we deviate slightly from that
definition, and assume that if a formula of type $\varphi_{1} U_{I}
\varphi_{2}$ or $\varphi_{1} \tilde{U}_{I} \varphi_{2}$ appears more
than once as a sub-formula of $\Phi$, the occurrences of this formula
are supposed different and are encoded as different locations. With
this definition, we have:
\begin{theorem}[\cite{OW07}]
  For all MITL formula $\Phi$: $L(\Aa_\Phi)=\sem{\Phi}$.
\end{theorem}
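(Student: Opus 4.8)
The plan is to prove $L(\Aa_\Phi) = \sem{\Phi}$ by establishing a tight correspondence between the $Id$-runs of $\Aa_\Phi$ on a timed word $\theta$ and the satisfaction of subformulas of $\Phi$ at positions of $\theta$. The central object will be an \emph{invariant} linking states of a configuration to the semantics of MITL: intuitively, after reading the prefix of $\theta$ up to position $i$, a state $(\psi, v)$ should be present in the current configuration (in the $Id$-semantics, so $v$ is a single value) precisely when the obligation encoded by $\psi$ must still be discharged, and the clock value $v$ records the time elapsed since that obligation was created. Because $L(\Aa_\Phi) = L_{Id}(\Aa_\Phi)$ by the definition given in the excerpt, I only need to reason about the $Id$-semantics, where all intervals are singular and the transition relation reduces to the standard Ouaknine--Worrell semantics.

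The proof naturally splits into two inclusions, and I would prove both by induction on the structure of $\Phi$, with the $U_I$ and $\tilde U_I$ cases carrying all the content. For the forward direction $\sem{\Phi} \subseteq L(\Aa_\Phi)$, given $\theta \models \Phi$, I construct an accepting $Id$-run: at each position I use the satisfaction witnesses (for $\varphi_1 U_I \varphi_2$, the witnessing position $j$ and the fact that $\varphi_1$ holds in between) to choose, at each discrete step, the correct disjunct of $\delta(\psi,\sigma)$ as the minimal model. The key correctness claim, proved by a nested structural induction, is that a state $(\varphi_1 U_I \varphi_2, v)$ can be removed from the configuration exactly when the until-obligation has been met within the interval $I$ — and that the clock guards $x \in I$ and $x \le \sup(I)$ in $\delta$ force this to happen within the prescribed time window. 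The accepting condition ($F$ consists of the $\tilde U$-locations) must be checked: a surviving $U_I$-obligation at the end of the word corresponds to an unfulfilled eventuality, so a run with a leftover $U$-state is non-accepting, matching the semantics. For the converse $L(\Aa_\Phi) \subseteq \sem{\Phi}$, I start from an accepting $Id$-run and read off, for each state $(\psi,v)$ appearing at position $i$, a satisfaction statement about $\theta$ at $i$, again by induction on $\psi$; the acceptance of the final configuration guarantees no $U$-obligation was abandoned, yielding $(\theta,1) \models \Phi$.

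The main obstacle, I expect, is handling the $\tilde U_I$ modality together with the interplay between the clock guards and the interval endpoints. The dual formula $\varphi_1 \tilde U_I \varphi_2 = \neg(\neg\varphi_1 U_I \neg\varphi_2)$ has a conjunctive transition function, so its minimal models behave multiplicatively rather than by disjunctive choice, and the disjuncts $x \notin I$ and $x > \sup(I)$ must be shown to correctly encode the \emph{release}-style semantics — in particular that $\varphi_2$ is required to hold up to and including the first position where $\varphi_1$ holds, but only within the time window $I$. Getting the boundary behaviour right — whether the constraint is $x \le \sup(I)$ versus $x < \sup(I)$, and how the non-singularity of $I$ (guaranteed since $\Phi$ is MITL) interacts with the closed/open endpoints — is exactly where the technical care is needed, and this is presumably why the footnote in the excerpt remarks that the $x \le \sup(I)$ and $x > \sup(I)$ conditions were added ``for technical reasons.'' Since the statement is attributed to \cite{OW07}, I would also cross-check my invariant against the $\Phi_{init}$ wrapping, which performs an initial reset via $\delta(\Phi_{init},\sigma) = x.\delta(\Phi,\sigma)$ and ensures the outermost clock starts at $0$ at position $1$.

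Because this is a restatement of a known result, the cleanest route may in fact be to cite \cite{OW07} directly for the standard semantics and merely verify that the slight deviations introduced here — encoding repeated subformulas as distinct locations, and adding the $x \le \sup(I)$ / $x > \sup(I)$ guards — do not alter the accepted language. The footnote already asserts the latter, so a careful argument that these guards are redundant in the finite-word setting (any run using a state whose clock exceeds $\sup(I)$ can be matched by one that does not, without changing acceptance) would reduce the problem to the cited theorem, making that equivalence check the lightest-weight path to the result.
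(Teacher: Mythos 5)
Your proposal is correct and, in its final paragraph, lands on exactly what the paper does: the paper gives no proof of this theorem at all, but simply cites \cite{OW07}, with the only paper-specific content being the footnoted assertions that the added guards $x \le \sup(I)$ and $x > \sup(I)$ and the encoding of repeated subformulas as distinct locations do not change the accepted language. Your closing argument for why those guards are harmless on finite words (a $U$-copy whose clock exceeds $\sup(I)$ can never discharge its obligation and sits in a non-accepting location either way; a $\tilde U$-copy with clock beyond $\sup(I)$ sits in an accepting location and satisfies $x \notin I$ at every later step, so allowing it to vanish changes nothing) is precisely the verification the paper leaves implicit, and your longer inductive sketch is a reconstruction of the cited Ouaknine--Worrell argument rather than something the paper itself contains.
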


\begin{figure}[t]
\centering
\begin{tikzpicture}[scale=.9]

\begin{scope}
\draw (0,0) node [circle, double, draw, inner sep=3pt] (A) {$\ell_{\Box}$};
\draw (3,0) node [circle, draw, inner sep=3pt] (B) {$\ell_\lozenge$};

\draw [-latex'] (-1,0) -- (A) ;
\draw [-latex'] (A) .. controls +(230:1.2cm) and +(310:1.2cm) .. (A) node [pos=0.5,below] {$b$};
\draw [-latex'] (B) .. controls +(230:1.2cm) and +(310:1.2cm) .. (B) node [pos=0.5,below] {$a,b$};
\draw [-latex'] (B) -- (5.2,0) node [pos=.5,above] {$\begin{array}{c}b\\ x \in [1,2]\end{array}$};
\draw[-*] (A) -- (1.5,0) node [pos=.3,below] {$a$};
\draw[-latex'] (1.5,0) -- (B) node [pos=.5,below] {$x := 0$};
\draw [-latex'] (1.4,0) .. controls +(-220:0.8cm) .. (A);
\end{scope}

\begin{scope}[shift={(5.7,0)},scale=.9]
\fill (0,0) circle (0.05cm);
\draw (0,-0.3) node {\scriptsize{0}};
\fill (2,0) circle (0.05cm);
\draw (2,-0.3) node {\scriptsize{1}};
\draw (4,-0.3) node {\scriptsize{2}};
\draw (6,-0.3) node {\scriptsize{3}};

\draw[-latex'] (0,0) -- (7,0);
\draw (7,-0.3) node {\tiny{time}};

\fill (0.6,0) circle (0.03cm);
\fill (0.8,0) circle (0.03cm);
\fill (1,0) circle (0.03cm);
\fill (1.2,0) circle (0.03cm);
\fill (1.4,0) circle (0.03cm);
\fill (1.6,0) circle (0.03cm);
\fill (1.8,0) circle (0.03cm);

\fill (2.2,0) circle (0.03cm);
\fill (2.4,0) circle (0.03cm);
\fill (2.6,0) circle (0.03cm);
\fill (2.8,0) circle (0.03cm);
\fill (3,0) circle (0.03cm);
\fill (3.2,0) circle (0.03cm);
\fill (3.4,0) circle (0.03cm);
\fill (3.6,0) circle (0.03cm);

\fill (4.2,0) circle (0.03cm);
\fill (4.4,0) circle (0.03cm);
\fill (4.6,0) circle (0.03cm);
\fill (4.8,0) circle (0.03cm);
\fill (5,0) circle (0.03cm);
\fill (5.2,0) circle (0.03cm);
\fill (5.4,0) circle (0.03cm);
\fill (5.6,0) circle (0.03cm);
\fill (5.8,0) circle (0.03cm);

\fill (6.2,0) circle (0.03cm);
\fill (6.4,0) circle (0.03cm);
\fill (6.6,0) circle (0.03cm);
\fill (6.8,0) circle (0.03cm);

\fill (0.2,0) circle (0.05cm);
\draw (0.2,0.3) node {$a$};
\fill (0.4,0) circle (0.05cm);
\draw (0.4,0.3) node {$a$};
\fill (3.8,0) circle (0.05cm);
\draw (3.8,0.3) node {$a$};
\fill (4,0) circle (0.05cm);
\draw (4,0.3) node {$b$};
\fill (6,0) circle (0.05cm);
\draw (6,0.3) node {$b$};

\draw[-] (0.05,0.5) -- (0.53,0.5);
\draw[-] (0.05,0.3) -- (0.05,0.5);
\draw[-] (0.53,0.3) -- (0.53,0.5);
\draw[-latex'] (0.3,0.5) .. controls +(80:0.9cm) and +(130:0.9cm) .. (4,0.5);

\draw[-] (3.65,0.5) -- (3.91,0.5);
\draw[-] (3.65,0.3) -- (3.65,0.5);
\draw[-] (3.91,0.3) -- (3.91,0.5);
\draw[-latex'] (3.8,0.5) .. controls +(80:0.9cm) and +(130:0.9cm) .. (6,0.5);
\end{scope}
\end{tikzpicture}
\caption{(left) \ATA $\mathcal{A}_{\Phi_1}$ with $\Phi_1=\Box ( a
  \Rightarrow \lozenge_{[1,2]} b)$. \ (right) The grouping of clocks. \label{ExGilles}\label{Graph1}}
\end{figure}
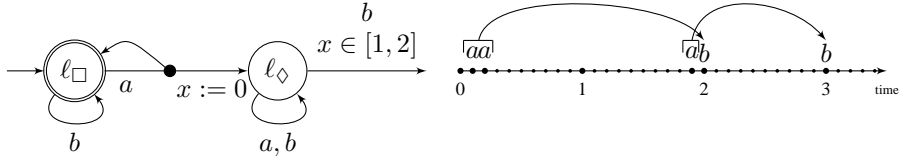

\begin{example}
  As an example consider the formula $\Phi_1=\Box( a \Rightarrow
  \lozenge_{[1,2]} b)$, which is a shorthand for
  $\bot\tilde{U}_{[0,+\infty)}\big(a\Rightarrow (\top
  U_{[1,2]}b)\big)$. The \ATA $\Aa_{\Phi_1}$ is given in
  Fig.~\ref{ExGilles} (left), where the location $\ell_\Box$ corresponds to
  $\Phi_1$ and the location $\ell_\lozenge$ corresponds to $\top
  U_{[1,2]}b$. One can check that this automaton follows strictly the
  above definition, after simplification of the formulas,
  \emph{except} that we have remove the $\Phi_{init}$ location and
  used the $\ell_\Box$ location as initial location instead, to
  enhance readability of the example (this does not modify the
  accepted language, in the present case). Observe the edge labeled by
  $b,x\in[1,2]$ from $\ell_\lozenge$, without target state: it depicts
  the fact that, after simplification: $\delta(\top U_{[1,2]}b,
  b)=(x\in[1,2])\vee (\top U_{[1,2]}b)$. Intuitively, this means that,
  when the automaton has a copy in location `$\lozenge$' with a
  clock valuation in $[1,2]$, the copy can be \emph{removed} from the
  automaton, because a minimal model of $x\in[1,2]$ wrt to a valuation
  $v$ with $v\in[1,2]$ is $\emptyset$.

  To help us build an intuition of the $\appfunc{\Phi_1}$ function,
  let us consider the $Id$-run $\rho_1$ of $\Aa_{\Phi_1}$ on
  $\theta_1=(a,0.1)(a,0.2)(a,0.3)(b,2)$ depicted in
  Fig.~\ref{reprun}.  Observe that $\theta_1\models\Phi_1$, and that
  $\rho_1$'s last configuration is indeed accepting. Also note that,
  as in the example of Fig.~\ref{ExATA}, the number of clock copies
  necessary in the $Id$-semantics cannot be bounded.  Now, let us
  discuss the intuition behind $\appfunc{\Phi_1}$ by considering
  $\theta_1$ again. Consider $\rho_1'$ the run prefix of $\rho_1$
  ending in $\{(\ell_\Box ,0.2),(\ell_\lozenge ,0),(\ell_\lozenge
  ,0.1)\}$.  Clearly, the last configuration of $\rho_1'$ can be
  \emph{over-approximated} by \emph{grouping} the two clock values $0$
  and $0.1$ into the \emph{smallest interval that contains them both},
  i.e. $[0,0.1]$. This intuitions is compatible with the definition of
  \emph{bounded approximation function}, and yields the
  \emph{accepting} run $\rho_1''$ depicted in Fig.~\ref{reprun}.
  Nevertheless, we must be be careful when grouping clock copies. Let
  us consider $\theta_2 = (a,0.1) (a,0.2) (a,1.9) (b,2) (b,3)\in
  \sem{\Phi_1}$, as witnessed by $\rho_2$ depicted in
  Fig.~\ref{reprun}. When grouping \emph{in the same interval}, the
  three clock copies created in $\ell_\lozenge$ (along $\rho_2$) by
  the reading of the three $a$'s (and letting further $0.1$ time unit
  elapse) yields the run prefix of $\rho_2'$ depicted in
  Fig.~\ref{reprun} ending in $\{(\ell_\Box ,2),(\ell_\lozenge
  ,[0.1,1.9])\} $. From the last configuration of this run, the edge
  with guard $x\in[1,2]$ and origin $\ell_\lozenge$ cannot be
  taken. Thus, the only way to extend this prefix is through $\rho_2'$
  (depicted in Fig.~\ref{reprun}) which yields a run that \emph{does
    not accept $\theta_2$}. Obviously, by grouping the two clock
  copies created in $\ell_\lozenge$ by the two first $a$'s, and by
  keeping the third one apart, one obtains the accepting run
  $\rho_2''$ (depicted in Fig.~\ref{reprun}).  Fig.~\ref{Graph1} (right)
  shows the intuition behind the grouping of clocks. The two first
  positions (with $\sigma_1=\sigma_2=a$) of the word satisfy $\Phi_1$,
  \emph{because of the $b$ in position $4$} (with $\tau_4=2$), while
  position 3 (with $\sigma_3=a$) satisfies $\Phi_1$ \emph{because of
    the $b$ in position $5$} (with $\tau_5=3$). This explains why we
  group the two first copies (corresponding to the two first $a$'s)
  and keep the third one apart.

\begin{figure}[t]
\centering
\begin{tikzpicture}[xscale=.82,yscale=.9]
\everymath{\scriptstyle}

\draw (-1.75,-.35) node [below] (Ainit) {$\rho_1$};

\draw (-1,0) node [rectangle,inner sep=1pt,rounded corners=2mm,below]
(Ainit) {$\begin{array}{l} \ell_\Box\\ \ell_\lozenge \end{array}$};

\draw (0,0) node [rectangle,fill=black!10!white,inner sep=1pt,rounded
  corners=2mm,below] (A) {$\begin{array}{l}
    \{0\}\\ \phantom{a} \end{array}$};

\draw (2,0) node [rectangle,fill=black!10!white,inner sep=1pt,rounded
  corners=2mm,below] (B) {$\begin{array}{l}
    \{0.1\}\\ \{0\} \end{array}$};

\draw (4.5,0) node [rectangle,fill=black!10!white,inner sep=1pt,rounded
  corners=2mm,below] (C) {$\begin{array}{l}
    \{0.2\}\\ \{0\},\{0.1\} \end{array}$};

\draw (7.5,0) node [rectangle,fill=black!10!white,inner
  sep=1pt,rounded corners=2mm,below] (D) {$\begin{array}{l}
    \{0.3\}\\ \{0\},\{0.1\},\{0.2\} \end{array}$};

\draw (10,0) node [rectangle,fill=black!10!white,inner sep=1pt,rounded
  corners=2mm,below] (E) {$\begin{array}{l}
    \{2\}\\ \phantom{a}  \end{array}$};

\draw [-latex'] (A) -- (B) node [pos=0.5,above] {$0.1,a$};
\draw [-latex'] (B) -- (C) node [pos=0.5,above] {$0.1,a$};
\draw [-latex'] (C) -- (D) node [pos=0.5,above] {$0.1,a$};
\draw [-latex'] (D) -- (E) node [pos=0.5,above] {$1.7,b$};

\begin{scope}[yshift=-1.5cm]

\draw (-1.75,-.25) node [below] (Ainit) {$\rho_1''$};

\draw (-1,0) node [rectangle,inner sep=1pt,rounded corners=2mm,below]
(Ainit) {$\begin{array}{l} \ell_\Box\\ \ell_\lozenge \end{array}$};

\draw (0,0) node [rectangle,fill=black!10!white,inner sep=1pt,rounded
  corners=2mm,below] (A) {$\begin{array}{l}
    \{0\}\\ \phantom{a} \end{array}$};

\draw (2,0) node [rectangle,fill=black!10!white,inner sep=1pt,rounded
  corners=2mm,below] (B) {$\begin{array}{l}
    \{0.1\}\\ \{0\} \end{array}$};

\draw (4.5,0) node [rectangle,fill=black!10!white,inner sep=1pt,rounded
  corners=2mm,below] (C) {$\begin{array}{l}
    \{0.2\}\\ \left[0,0.1\right] \end{array}$};

\draw (7.5,0) node [rectangle,fill=black!10!white,inner
  sep=1pt,rounded corners=2mm,below] (D) {$\begin{array}{l}
    \{0.3\}\\ \left[0,0.2\right] \end{array}$};

\draw (10,0) node [rectangle,fill=black!10!white,inner sep=1pt,rounded
  corners=2mm,below] (E) {$\begin{array}{l}
    \{2\}\\ \phantom{a} \end{array}$};

\draw [-latex'] (A) -- (B) node [pos=0.5,above] {$0.1,a$};
\draw [-latex'] (B) -- (C) node [pos=0.5,above] {$0.1,a$};
\draw [-latex'] (C) -- (D) node [pos=0.5,above] {$0.1,a$};
\draw [-latex'] (D) -- (E) node [pos=0.5,above] {$1.7,b$};
\end{scope}

\begin{scope}[yshift=-3cm]

\draw (-1.75,-.25) node [below] (Ainit) {$\rho_2$};

\draw (-1,0) node [rectangle,inner sep=1pt,rounded corners=2mm,below]
(Ainit) {$\begin{array}{l} \ell_\Box\\ \ell_\lozenge \end{array}$};

\draw (0,0) node [rectangle,fill=black!10!white,inner sep=1pt,rounded
  corners=2mm,below] (A) {$\begin{array}{l}
    \{0\}\\ \phantom{a} \end{array}$};

\draw (2,0) node [rectangle,fill=black!10!white,inner sep=1pt,rounded
  corners=2mm,below] (B) {$\begin{array}{l}
    \{0.1\}\\ \{0\} \end{array}$};

\draw (4.5,0) node [rectangle,fill=black!10!white,inner sep=1pt,rounded
  corners=2mm,below] (C) {$\begin{array}{l}
    \{0.2\}\\ \{0\},\{0.1\} \end{array}$};

\draw (7.5,0) node [rectangle,fill=black!10!white,inner
  sep=1pt,rounded corners=2mm,below] (D) {$\begin{array}{l}
    \{1.9\}\\ \{0\},\{1.7\},\{1.8\} \end{array}$};

\draw (10,0) node [rectangle,fill=black!10!white,inner sep=1pt,rounded
  corners=2mm,below] (E) {$\begin{array}{l}
    \{2\}\\ \{0.1\} \end{array}$};

\draw (12,0) node [rectangle,fill=black!10!white,inner sep=1pt,rounded
  corners=2mm,below] (F) {$\begin{array}{l}
    \{3\}\\ \phantom{a} \end{array}$};

\draw [-latex'] (A) -- (B) node [pos=0.5,above] {$0.1,a$};
\draw [-latex'] (B) -- (C) node [pos=0.5,above] {$0.1,a$};
\draw [-latex'] (C) -- (D) node [pos=0.5,above] {$1.7,a$};
\draw [-latex'] (D) -- (E) node [pos=0.5,above] {$0.1,b$};
\draw [-latex'] (E) -- (F) node [pos=0.5,above] {$1,b$};
\end{scope}

\begin{scope}[yshift=-4.5cm]

\draw (-1.75,-.25) node [below] (Ainit) {$\rho'_2$};

\draw (-1,0) node [rectangle,inner sep=1pt,rounded corners=2mm,below]
(Ainit) {$\begin{array}{l} \ell_\Box\\ \ell_\lozenge \end{array}$};

\draw (0,0) node [rectangle,fill=black!10!white,inner sep=1pt,rounded
  corners=2mm,below] (A) {$\begin{array}{l}
    \{0\}\\ \phantom{a} \end{array}$};

\draw (2,0) node [rectangle,fill=black!10!white,inner sep=1pt,rounded
  corners=2mm,below] (B) {$\begin{array}{l}
    \{0.1\}\\ \{0\} \end{array}$};

\draw (4.5,0) node [rectangle,fill=black!10!white,inner sep=1pt,rounded
  corners=2mm,below] (C) {$\begin{array}{l}
    \{0.2\}\\ \left[0,0.1\right] \end{array}$};

\draw (7.5,0) node [rectangle,fill=black!10!white,inner
  sep=1pt,rounded corners=2mm,below] (D) {$\begin{array}{l}
    \{1.9\}\\ \left[0,1.8\right] \end{array}$};

\draw (10,0) node [rectangle,fill=black!10!white,inner sep=1pt,rounded
  corners=2mm,below] (E) {$\begin{array}{l}
    \{2\}\\ \left[0.1,1.9\right] \end{array}$};

\draw (12,0) node [rectangle,fill=black!10!white,inner sep=1pt,rounded
  corners=2mm,below] (F) {$\begin{array}{l}
    \{3\}\\ \left[2.1,2.9\right] \end{array}$};

\draw [-latex'] (A) -- (B) node [pos=0.5,above] {$0.1,a$};
\draw [-latex'] (B) -- (C) node [pos=0.5,above] {$0.1,a$};
\draw [-latex'] (C) -- (D) node [pos=0.5,above] {$1.7,a$};
\draw [-latex'] (D) -- (E) node [pos=0.5,above] {$0.1,b$};
\draw [-latex'] (E) -- (F) node [pos=0.5,above] {$1,b$};
\end{scope}

\begin{scope}[yshift=-6cm]

\draw (-1.75,-.25) node [below] (Ainit) {$\rho''_2$};

\draw (-1,0) node [rectangle,inner sep=1pt,rounded corners=2mm,below]
(Ainit) {$\begin{array}{l} \ell_\Box\\ \ell_\lozenge \end{array}$};

\draw (0,0) node [rectangle,fill=black!10!white,inner sep=1pt,rounded
  corners=2mm,below] (A) {$\begin{array}{l}
    \{0\}\\ \phantom{a} \end{array}$};

\draw (2,0) node [rectangle,fill=black!10!white,inner sep=1pt,rounded
  corners=2mm,below] (B) {$\begin{array}{l}
    \{0.1\}\\ \{0\} \end{array}$};

\draw (4.5,0) node [rectangle,fill=black!10!white,inner sep=1pt,rounded
  corners=2mm,below] (C) {$\begin{array}{l}
    \{0.2\}\\ \left[0,0.1\right] \end{array}$};

\draw (7.5,0) node [rectangle,fill=black!10!white,inner
  sep=1pt,rounded corners=2mm,below] (D) {$\begin{array}{l}
    \{1.9\}\\ \{0\},\left[1.7,1.8\right] \end{array}$};

\draw (10,0) node [rectangle,fill=black!10!white,inner sep=1pt,rounded
  corners=2mm,below] (E) {$\begin{array}{l}
    \{2\}\\ \{0.1,\} \end{array}$};

\draw (12,0) node [rectangle,fill=black!10!white,inner sep=1pt,rounded
  corners=2mm,below] (F) {$\begin{array}{l}
    \{3\}\\ \phantom{a} \end{array}$};

\draw [-latex'] (A) -- (B) node [pos=0.5,above] {$0.1,a$};
\draw [-latex'] (B) -- (C) node [pos=0.5,above] {$0.1,a$};
\draw [-latex'] (C) -- (D) node [pos=0.5,above] {$1.7,a$};
\draw [-latex'] (D) -- (E) node [pos=0.5,above] {$0.1,b$};
\draw [-latex'] (E) -- (F) node [pos=0.5,above] {$1,b$};
\end{scope}
\end{tikzpicture}
\caption{Several \ATA runs.}
\label{reprun}
\end{figure}

\end{example}

\paragraph{The approximation functions $\appfunc{\Phi}$.} Let us now
formally define the family of \emph{bounded} approximation functions
that will form the basis of our translation to timed automata. We
first give an \emph{upper bound} $M(\Phi)$ on the number of clock
copies (intervals) we need to consider in the configurations to
recognise an MITL formula $\Phi$.  The precise definition of the bound
$M(\Phi)$ is technical and is given by induction on the structure of
the formula. It can be found in the appendix. However, for all MITL
formula $\Phi$: 
$$
M(\Phi)\leq |\Phi| \times \max_{I\in \Ii_\Phi} \left( 4 \times \left\{\left\lceil
  \frac{\inf(I)}{\vert I \vert} \right\rceil\right\} +2, 2 \times \left\{\left\lceil
  \frac{\sup(I)}{\vert I \vert} \right\rceil\right\} +2 \right)
$$ where $\Ii_\Phi$ is the set of all the intervals that occur in
  $\Phi$\footnote{The first component of the maximum comes from $U$
    and the second from $\tilde{U}$.}.

Equipped with this bound, we can define the $\appfunc{\Phi}$
function. Throughout this description, we assume an \ATA $\Aa$ with
set of locations $L$. Let
$S=\{(\ell,I_0),(\ell,I_1),\break\ldots,(\ell,I_m)\}$ be a set of
states of $\Aa$, all in the same location $\ell$, with, as usual $I_0
< I_1 < \cdots < I_m$. Then, we let $\mer{S}=\{(\ell, [0,sup(I_1)]),
(\ell, I_2),\ldots, (\ell,I_m)\}$ if $I_0=[0,0]$ and $\mer{S}=S$
otherwise, i.e., $\mer{S}$ is obtained from $S$ by \emph{grouping}
$I_0$ and $I_1$ iff $I_0=[0,0]$, otherwise $\mer{S}$ does not modify
$S$. Observe that, in the former case, if $I_{1}$ is not a singleton,
then $\nbclocks{\mer{S}}=\nbclocks{S}-1$. Now, we can lift the
definition of \mername to configurations. Let $C$ be a configuration
of $\Aa$ and let $k\in \N$. We let:
\begin{align*}
  \mer{C,k} &= \big\{C'\mid \nbclocks{C'}\leq k\text{ and } \forall
  \ell\in L: C'(\ell)\in\{\mer{C(\ell)}, C(\ell)\}\big\}
\end{align*}
Observe that $\mer{C,k}$ is a (possibly empty) \emph{set of
  configurations}, where each configuration $(i)$ has at most $k$
clock copies, and $(ii)$ can be obtained by applying (if possible) or
not the $\mername$ function to each $C(\ell)$. Let us now define a
family of $k$-bounded approximation functions, based on
$\mername$. Let $k\geq 2\times |L|$ be a bound and let $C$ be a
configuration, assuming that $C(\ell)=\{I^\ell_1,\ldots,
I^\ell_{m_\ell}\}$ for all $\ell\in L$. Then:
\begin{align*}
  F^k(C) &=
  \left\{
    \begin{array}{ll}
     \mer{C,k} &\text{If } \mer{C,k}\neq\emptyset\\
     \big\{(\ell, [inf(I^\ell_1),sup(I^\ell_{m_\ell})])\mid \ell\in L\big\} &\text{otherwise}
    \end{array}
  \right.
\end{align*}
Roughly speaking, the $F^k(C)$ function tries to obtain configurations
$C'$ that approximate $C$ and s.t. $\nbclocks{C'}\leq k$, using the
$\mername$ function. If it fails to, i.e., when $\mer{C,k}=\emptyset$,
$F^k(C)$ returns a single configuration, obtained from $C$ be grouping
all the intervals in each location. The latter case occurs in the
definition of $F^k$ for the sake of completeness. When the \ATA $\Aa$
has been obtained from an MITL formula $\Phi$, and for $k$ big enough
(see hereunder) each $\theta\in\sem{\Phi}$ will be recognised by at
least one $F^k$-run of $\Aa$ that traverses only configurations
obtained thanks to $\mername$.  We can now finally define
$\appfunc{\Phi}$ for all MITL formula $\Phi$, by letting
$\appfunc{\Phi}=F^K$, where $K=\max\{2\times |L|,M(\Phi)\}$. It is
easy to observe that $\appfunc{\Phi}$ is indeed a \emph{bounded
  approximation function}. Then, we can show that, for all MITL
formula $\Phi$, the $\appfunc{\Phi}$-semantics of $\Aa_\Phi$ accepts
exactly $\sem{\Phi}$. To obtain this result, we rely on the following
proposition\footnote{Stated here for the $U$ modality, a similar
  proposition holds for $\tilde{U}$.}:
\begin{proposition}
  \label{PropUntil}
  Let $\Phi$ be an MITL formula, let $K$ be a set of index and,
  $\forall k \in K$, let $\Phi_{k} = \varphi_{1, k}U_{I_{k}}
  \varphi_{2, k}$ be subformulas of $\Phi$. For all $k\in K$, let
  $\ell_{\Phi_{k}}$ be their associated locations in $\Aa_\Phi$.  Let
  $\theta = (\bar{\sigma},\bar{\tau})$ be a timed word and let $J_{k}
  \in \mathcal{I}(\R^{+})$ be closed intervals. The automaton
  $\mathcal{A}_{\Phi}$ $Id$-accepts $\theta$
  from configuration $\lbrace (\ell_{\Phi_{k}},J_{k})_{k \in K}
  \rbrace \text{ iff } \forall k \in K, \exists m_{k} \geq 1 :
  (\theta,m_{k}) \models \varphi_{2} \wedge \tau_{m_{k}} \in
  I_{k}-\inf(J_{k}) \wedge \tau_{m_{k}} \in I_{k}-\sup(J_{k}) \wedge
  \forall 1 \leq m'_{k} < m_{k} : (\theta,m'_{k}) \models
  \varphi_{1}$.
\end{proposition}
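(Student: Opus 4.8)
The plan is to prove both implications directly, reading off the shape of $\delta(\varphi_{1,k}U_{I_k}\varphi_{2,k},\sigma)$ and invoking the correctness of the translation (the fact that $L(\Aa_\Phi)=\sem{\Phi}$, in its standard inductive form) on the strictly smaller subformulas $\varphi_{1,k},\varphi_{2,k}$. Two observations are used throughout. First, a clock-bookkeeping fact: while the copy associated with $\ell_{\Phi_k}$ stays in $\ell_{\Phi_k}$ its clock is never reset, since the term $\varphi_{1,k}U_{I_k}\varphi_{2,k}$ occurs \emph{without} the $x.$ prefix in the second disjunct of $\delta$; hence, reading $\sigma_j$ (recall $\tau_0=0$ and $t_i=\tau_i-\tau_{i-1}$), this copy carries the interval $J_k+\tau_j$. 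Second, a convexity fact: because $J_k$ is \emph{closed} and $I_k$ is convex, $M\models_{J_k+\tau_{m_k}} x\in I_k$ iff $J_k+\tau_{m_k}\subseteq I_k$ iff $\tau_{m_k}\in(I_k-\inf(J_k))\cap(I_k-\sup(J_k))$; this is exactly the pair of membership conditions in the statement, and it is what allows a \emph{single} witness position $m_k$ to serve the whole interval $J_k$ at once.

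For the ``only if'' direction, I fix an accepting $Id$-run from $\{(\ell_{\Phi_k},J_k)_{k\in K}\}$ and, for each $k$, follow the unique lineage of copies issued from $(\ell_{\Phi_k},J_k)$ that remains in $\ell_{\Phi_k}$ (unique, since a minimal model selects a single disjunct, and no other copy can enter $\ell_{\Phi_k}$, as argued below). Since $\ell_{\Phi_k}\notin F$ (it is a $U$-location, and only $\tilde U$-locations are accepting) and the run is finite and accepting, the lineage must leave $\ell_{\Phi_k}$ through the first disjunct of $\delta$; let $m_k$ be the position where this happens. At each earlier position $j<m_k$ the lineage uses the second disjunct, whose guard $x\le\sup(I_k)$ forces $\sup(J_k)+\tau_j\le\sup(I_k)$, and whose reset copy of $\delta(\varphi_{1,k},\sigma_j)$ admits an accepting continuation, giving $(\theta,j)\models\varphi_{1,k}$ by correctness of the translation on $\varphi_{1,k}$. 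At $m_k$ the first disjunct applies with interval $J_k+\tau_{m_k}$: its guard $x\in I_k$ yields the two membership conditions through the convexity observation, and its reset copy of $\delta(\varphi_{2,k},\sigma_{m_k})$ accepts, giving $(\theta,m_k)\models\varphi_{2,k}$.

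For the ``if'' direction, I build a run by composing, for each $k$, the natural sub-run: keep the $\ell_{\Phi_k}$-copy alive through the second disjunct at positions $1,\dots,m_k-1$, then fulfil it through the first disjunct at $m_k$. All guards are met: the fulfilment guard $x\in I_k$ holds by the convexity observation, and each continuation guard holds because $\sup(J_k)+\tau_j\le\sup(J_k)+\tau_{m_k}\le\sup(I_k)$ (using $\tau_j\le\tau_{m_k}$ and $J_k+\tau_{m_k}\subseteq I_k$). The copies created by $x.\delta(\varphi_{1,k},\sigma_j)$ for $j<m_k$ and by $x.\delta(\varphi_{2,k},\sigma_{m_k})$ admit accepting continuations, since $(\theta,j)\models\varphi_{1,k}$ and $(\theta,m_k)\models\varphi_{2,k}$, again by correctness of the translation. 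I then take the synchronous union of these per-$k$ sub-runs into a single $Id$-run on $\theta$, which is accepting because each component is.

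The step I expect to require the most care is justifying that acceptance ``factorises'' over $k$, i.e. that the per-index sub-runs merge into genuine configurations (pairwise-disjoint intervals per location) without the set-semantics silently fusing two intervals or two lineages clashing in a shared location. I would discharge this with the standing convention that distinct subformula occurrences are encoded by distinct locations: the location sets explored for distinct indices $k$, and the $\ell_{\Phi_k}$-copies themselves, are then pairwise disjoint, so no interval conflict can arise and no copy other than the tracked lineage can re-enter $\ell_{\Phi_k}$. Finally, the $\tilde U$ analogue is dual: there $\ell_{\Phi_k}$ \emph{is} accepting, the two disjuncts of $\delta$ become conjuncts, and ``there is a good witness $m_k$'' is replaced by ``there is no bad witness'', while the clock-tracking and convexity arguments carry over verbatim.
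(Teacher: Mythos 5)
Your proposal follows essentially the same route as the paper's own proof: for each $k$, track the lineage of the copy sitting in $\ell_{\Phi_k}$, use the fact that this copy is never reset while the looping disjunct is taken (so it carries $J_k+\tau_j$), convert the guard $x\in I_k$ into the two endpoint conditions via closedness of $J_k$ and convexity of $I_k$, let $m_k$ be the position where the fulfilment disjunct fires (which must exist in the only-if direction since $\ell_{\Phi_k}\notin F$ and the run is finite and accepting), and appeal to correctness of the translation on the strict subformulas for the claims about $\varphi_{1,k}$ and $\varphi_{2,k}$. The paper packages that last ingredient as Lemma~\ref{claimOW} (a minimal model of $x.\delta(\varphi,\sigma_i)$ embedded in an accepting run implies $(\theta,i)\models\varphi$) together with Proposition~\ref{PropOW} and an ``assimilation'' of the locations of $\Aa_{\varphi_{1,k}}$ into those of $\Aa_\Phi$; your ``correctness of the translation in its standard inductive form'' is exactly this, so that part is sound.

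The genuine weak point is how you discharge the factorisation over $k$. You claim that, by the distinct-occurrences convention, the location sets explored for distinct indices are pairwise disjoint, and that no copy other than the tracked lineage can enter $\ell_{\Phi_k}$. This is false when the occurrences are nested: take $\Phi_{k_1}=\varphi\, U_{I_1}\Phi_{k_2}$ and $\Phi_{k_2}=\varphi'\, U_{I_2}\varphi''$ with both indices in $K$. When the sub-run issued from $(\ell_{\Phi_{k_1}},J_{k_1})$ fires its fulfilment disjunct, it spawns $x.\delta(\Phi_{k_2},\sigma)$, whose second disjunct contains the atom $x.\Phi_{k_2}$ and therefore injects a fresh copy \emph{into} $\ell_{\Phi_{k_2}}$ --- precisely the location whose lineage you are tracking for $k_2$. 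So neither the no-interval-conflict claim nor the uniqueness-of-lineage claim follows from location-disjointness in the joint run. The correct fix is the paper's Lemma~\ref{LemProp2}: acceptance from $\lbrace(\ell_{\Phi_k},J_k)_{k\in K}\rbrace$ is equivalent to acceptance from each singleton $\lbrace(\ell_{\Phi_k},J_k)\rbrace$, and this holds because of the conjunctive, per-state union semantics of minimal models (a general property of alternating automata), not because the explored locations are disjoint. Performing this reduction \emph{first}, as the paper does, makes the lineage argument clean: in a run started from the single state $(\ell_{\Phi_k},J_k)$, all spawned copies land in locations of strict subformula occurrences of $\Phi_k$, which by the convention are genuinely distinct from $\ell_{\Phi_k}$. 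With that reordering, the rest of your argument matches the paper and goes through.
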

To illustrate this proposition, let us consider $\Phi_2 \equiv \top
U_{[2,3]} b$, the associated automaton $\mathcal{A}_{\Phi_2}$ and the
timed word $\theta=(a,0)(b,1)(b,2)$.  Assume that
$\mathcal{A}_{\Phi_2}$ is in configuration $C = \lbrace
(\ell_{\Phi_2}, [0,2]) \rbrace$ and we must read $\theta$ from $C$.
Observe that for all value $y\in [0,2]$, there is a position $m$ in
$\theta$ s.t. $\tau_m \in [2,3]-y$ (i.e., $\tau_m$ satisfies the
temporal constraint of the modality), $(\theta, m) \models b$ and all
intermediate positions satisfy $\top$: $\forall 1 \leq m' < m,
(\theta, m') \models \top$. In other words, $\forall y \in [0,2],
(\theta, 1) \models \top U_{[2,3]-y} b$. Yet, the conditions of the
propositions are not satisfied and, indeed, there is no accepting
$Id$-run of $\mathcal{A}_{\Phi_2}$ from $C$. Indeed, after reading the
first (resp. second) $b$, the resulting configuration contains
$(\ell_{\Phi_2},[1,3])$ (resp. $(\ell_{\Phi_2},[2,4])$). In both
cases, the interval associated to $\ell_{\Phi_2}$ does not satisfy the
clock constraint $x \in [2,3]$ of the transition $x.\delta(\top,b)
\wedge x \in [2,3]$ of $\mathcal{A}_{\Phi_2}$ that enables to leave
location $\ell_{\Phi_2}$. This example also shows that
Proposition~\ref{PropUntil} cannot be obtained as a corollary of the
results by Ouaknine and Worrell \cite{OW07} and deserves a dedicated
proof as part of our contribution.

The property given by Proposition~\ref{PropUntil} is thus crucial to
determine, given an accepting run, whether we can \emph{group} several
intervals and retain an accepting run or not. This observation will be
central to the proof of our main theorem:
\begin{theorem}
  \label{ThmBorne}
  For all MITL formula $\Phi$, $\appfunc{\Phi}$ is a \emph{bounded
    approximation function} and
  $L_{\appfunc{\Phi}}(\mathcal{A}_{\Phi}) =
  L(\mathcal{A}_{\Phi})=\sem{\Phi}$.
\end{theorem}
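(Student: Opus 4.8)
The plan is to prove the three-way equality by establishing two inclusions on top of the already-proven chain $L(\Aa_\Phi)=\sem{\Phi}$ (from the recalled theorem) and $L_{\appfunc{\Phi}}(\Aa_\Phi)\subseteq L(\Aa_\Phi)$ (Proposition~\ref{inclu}). The first task is to check that $\appfunc{\Phi}=F^K$ is genuinely a \emph{bounded} approximation function: one verifies the three defining conditions of an approximation function for $F^K$ (the covering and endpoint conditions are immediate from the definition of \mername and from the ``otherwise'' branch that groups all intervals per location), and then observes that in either branch of $F^K$ the resulting configuration has at most $K$ clock copies, so $\appfunc{\Phi}$ is $K$-bounded. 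Combined with Proposition~\ref{inclu}, this already gives $L_{\appfunc{\Phi}}(\Aa_\Phi)\subseteq L(\Aa_\Phi)=\sem{\Phi}$.

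The core of the theorem is the reverse inclusion $\sem{\Phi}\subseteq L_{\appfunc{\Phi}}(\Aa_\Phi)$: every model of $\Phi$ must be accepted by \emph{some} $\appfunc{\Phi}$-run, despite the bounded number of clock copies. I would fix $\theta\in\sem{\Phi}=L(\Aa_\Phi)$ and take an accepting $Id$-run $\rho$ of $\Aa_\Phi$ on $\theta$, then construct an accepting $\appfunc{\Phi}$-run by applying \mername-style grouping at each discrete step so as to keep the number of clock copies below $K$. The construction proceeds by induction along the run, maintaining an invariant that each configuration $C_i$ of the $\appfunc{\Phi}$-run is an over-approximation of the corresponding $Id$-configuration $D_i$ (every singular state of $D_i$ lies inside some interval of $C_i$, matching the invariant of Proposition~\ref{inclu}), and additionally that $\nbclocks{C_i}\le K$. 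The key point is that Proposition~\ref{PropUntil} certifies exactly when grouping intervals at a $U$-location (and its dual for $\tilde U$) preserves the existence of an accepting continuation: for a location $\ell_{\Phi_k}$ carrying intervals $J_k$, acceptance is governed by the existence of witnessing positions $m_k$ whose timestamps lie in $I_k-\inf(J_k)$ and $I_k-\sup(J_k)$ simultaneously, i.e.\ in an intersection that is preserved precisely when the grouped interval still admits a common witness.

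The main obstacle is showing that grouping can always be done aggressively enough to respect the bound $K=\max\{2|L|,M(\Phi)\}$ while never destroying acceptance. This is where the bound $M(\Phi)$ must be reconciled with Proposition~\ref{PropUntil}: one has to argue that among the intervals accumulated in a single $U_I$- (or $\tilde U_I$-) location, at most $M(\Phi)$ of them can require \emph{pairwise distinct} witnessing positions, so that intervals sharing a common witness (by the criterion of Proposition~\ref{PropUntil}) can be merged without losing the accepting witness, keeping the count within the stated bound. Concretely, I expect the argument to partition the intervals in each $U$-location according to the interval $I_k-\inf(J_k)\cap I_k-\sup(J_k)$ they must meet, bound the number of such classes by the per-location contribution to $M(\Phi)$ (the $4\lceil\inf(I)/|I|\rceil+2$ and $2\lceil\sup(I)/|I|\rceil+2$ quantities counting how many disjoint such constraint-windows fit), and merge within each class. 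Since this grouping coincides with what $\mername$/$F^K$ can realise, the constructed run is a valid $\appfunc{\Phi}$-run; its last configuration is accepting because acceptance of $D_{2n}$ together with the over-approximation invariant and Proposition~\ref{PropUntil} transfers acceptance to $C_{2n}$. The dual reasoning for $\tilde U$ (where locations are accepting and the roles of the constraints are swapped) proceeds symmetrically, completing the inclusion and hence the equality.
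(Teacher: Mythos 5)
Your skeleton is the paper's: one inclusion follows from Proposition~\ref{inclu} together with $L(\Aa_\Phi)=\sem{\Phi}$, and the converse is proved by transforming an accepting $Id$-run into an accepting $\appfunc{\Phi}$-run, deciding at each discrete step whether to apply \mername using the witness criterion of Proposition~\ref{PropUntil}. However, there are two genuine gaps, and they sit exactly where the work of the theorem lies. First, the inductive invariant you state --- pointwise over-approximation of the $Id$-configurations by the grouped configurations, plus the bound $\nbclocks{C_i}\leq K$ --- is too weak to carry the induction. Over-approximation does not preserve the existence of an accepting continuation: in the paper's own example (Fig.~\ref{reprun}), the fully grouped run $\rho_2'$ over-approximates $\rho_2$ pointwise at every step, yet from $\{(\ell_\Box,2),(\ell_\lozenge,[0.1,1.9])\}$ the guard $x\in[1,2]$ can never be taken and the word is rejected. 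Moreover, to invoke Proposition~\ref{PropUntil} as a merging criterion at step $k{+}1$ you must already know that the configuration built so far (with all earlier groupings) still $Id$-accepts the time-shifted suffix $\theta^{k+1}$; this ``hybrid run'' property (grouped prefix, $Id$-accepting suffix on the shifted word) is precisely the invariant the paper maintains, and it is not a consequence of over-approximation. Your final step, ``acceptance of $D_{2n}$ plus the invariant transfers acceptance to $C_{2n}$,'' therefore does not follow from what you maintain.

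Second, the bound is the crux of the theorem and your plan only asserts it. Note that $\appfunc{\Phi}$ cannot ``merge within each class'': \mername\ only ever groups a freshly created $[0,0]$ with its immediate successor interval, so all grouping is greedy and decided at creation time; your claim that the class-based grouping ``coincides with what $\mername/F^K$ can realise'' needs an argument (it can be salvaged via convexity of the witness windows, but you do not give it). More importantly, bounding the number of unmerged intervals is where the proof effort goes, and ``I expect the argument to partition\ldots'' is not that proof. The paper establishes it by contradiction: if two intervals cannot be merged, the negation of the Proposition~\ref{PropUntil} criterion forces $\sup(J^i_j)-\sup(J^i_{j-2})\geq \vert I_i\vert$ for all $j$, because otherwise the witness windows $I_i-\sup(J^i_j)$ and $I_i-\sup(J^i_{j-2})$ would overlap and cover $I_i-\sup(J^i_{j-1})$, contradicting non-mergeability of an adjacent pair. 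Hence too many intervals force $\sup(J^i_{m_i})>\sup(I_i)$, which is incompatible with the run being accepting (such a copy can never again satisfy $x\in I_i$ or $x\leq\sup(I_i)$). This spacing argument, its $\tilde U$ analogue, and the structural induction assembling the per-location bounds into $M(\Phi)$ (the paper's Step~2) are all missing from your proposal, so the inclusion $\sem{\Phi}\subseteq L_{\appfunc{\Phi}}(\Aa_\Phi)$ is not yet established.
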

\begin{proof}[sketch]
  By definition, $\appfunc{\Phi}$ is a \emph{bounded approximation
    function}. Hence, by Proposition~\ref{inclu},
  $L_{\appfunc{\Phi}}(\mathcal{A}_{\Phi}) \subseteq
  L(\mathcal{A}_{\Phi})$. Let $\theta = (\bar{\sigma},\bar{\tau})$ be
  a timed word in $L(\mathcal{A}_{\Phi})$, and let us show that
  $\theta\in L_{\appfunc{\Phi}}(\mathcal{A}_{\Phi})$, by building an
  accepting $\appfunc{\Phi}$-run $\rho'$ on $\theta$. Since, $\theta\in
  L(A_{\Phi})$, there is an accepting $Id$-run $\rho$ of
  $\mathcal{A}_{\Phi}$ on $\theta$. We assume that $\rho
  =C_{0}\xrightarrow{\tau_1,\sigma_1}_{Id} C_{1}
  \xrightarrow{\tau_2-\tau_1,\sigma_2}_{Id} C_{2}\cdots
  \xrightarrow{\tau_n-\tau_{n-1},\sigma_n} C_{n}$.

  We build, by induction on the length of $\rho$, a sequence of runs
  $\rho_0, \rho_1,\ldots,\rho_n$ s.t. for all $0\leq j\leq n$,
  $\rho_{j}=D_0^j\xrightarrow{\tau_1,\sigma_1}_{Id} \cdots
  \xrightarrow{\tau_n-\tau_{n-1},\sigma_n} D_{n}^j$ is an accepting
  run on $\theta$ with the following properties: $(i)$ for all $0\leq
  k\leq j$: $\nbclocks{D_k^j}\leq 4 \times \left\lceil
    \frac{\inf(I_{i})}{\vert I_{i} \vert} \right\rceil +2$, and $(ii)$
  assuming $\tau_0=0$:
  $D_{j}^j\xrightarrow{\tau_{j+1}-\tau_{j},\sigma_{j}}_{Id}\cdots
  \xrightarrow{\tau_n-\tau_{n-1},\sigma_n}_{Id} D_{n}^j$ is an accepting
  $Id$-run on
  $\theta^{j+1}=(\sigma_{j+1}\sigma_{j+2}\ldots\sigma_n,\tau')$, where
  $\tau'=(\tau_{j+1}-\tau_j)(\tau_{j+2}-\tau_{j})\ldots(\tau_n-\tau_{j})$,
  assuming $\tau_{-1}=0$, i.e., $\theta^j$ is the suffix of length
  $n-j$ of $\theta$, where all the timed stamps have been decreased by
  $\tau_{j}$. Clearly, letting $\rho_0=\rho$ satisfied these
  properties. We build $\rho_{k+1}$ from $\rho_k$, by first letting
  $D^{k+1}_0,D^{k+1}_1,\ldots,D^{k+1}_k=D^{k}_0,D^{k}_1,\ldots,D^{k}_k$,
  and then showing how to build $D^{k+1}_{k+1}$ from $D^k_{k+1}$ by
  merging intervals. Let $\ell\in L$. We use the criterion given by
  Proposition~\ref{PropUntil} to decide when to group intervals in
  $D^k_{k+1}(\ell)$. Assume $D^{k}_{k+1}(\ell)=\{J_1, J_2,\ldots,
  J_m\}$. Then:
\begin{itemize}
\item If $D^k_{k+1}(\ell)$ is either empty, or a singleton, we
  let $D^{k+1}_{k+1}(\ell) = D^k_{k+1}(\ell)$.
\item Else, if $J_1\neq [0,0]$, then the reading of $\sigma_{k+1}$ has
  not created a new copy in $\ell$ and we let $D^{k+1}_{k+1}(\ell) =
  D^k_{k+1}(\ell)$ too.
\item Else, $J_1 = [0,0]$ and we must decide whether we group this
  clock copy with $J_2$ or not. Assume\footnote{when $\ell$
    corresponds to the sub-formula $\varphi_1 \tilde{U}_I \varphi_2$,
    we use the proposition similar to Proposition \ref{PropUntil} for
    $\tilde{U}$ to decide whether we group $J_1$ with $J_2$ or not.}
  $\ell$ corresponds to the sub-formula $\varphi_1 U_I
  \varphi_2$. Then:
  \begin{enumerate}
  \item if $\exists m \geq 1$ such that :
    $(\theta^{k+1},\tau^{k+1}_{m}) \vDash \varphi_{2} \wedge
    \tau^{k+1}_{m} \in I - \sup(J_{2}) \wedge \tau^{k+1}_{m}
    \in I \wedge \forall 1 \leq m' < m :
    (\theta^{k+1},\tau^{k+1}_{m'}) \models \varphi_{1}$, then, we
    let $D^{k+1}_{k+1}(\ell) = \mer{D^k_{k+1}(\ell)}$,
\item else, we let $D^{k+1}_{k+1}(\ell) = D^k_{k+1}(\ell)$.
\end{enumerate}
\end{itemize}
We finish the construction of $\rho_{k+1}$ by firing, from
$D^{k+1}_{k+1}$ the same arcs as in the $D^{k}_{k+1} D^k_{k+1}\ldots
D^k_n$ suffix of $\rho_k$, using the
$Id$-semantics. Proposition~\ref{PropUntil} guarantees that we have
grouped the intervals in such a way that this suffix is an
$Id$-accepting run on $\theta^{k+1}$.  Finally, we let $\rho'=\rho_n$
which is an accepting run on $\theta$. We finish the proof by a
technical discussion showing that $\rho_n$ is an
$\appfunc{\Phi}$-run.
\end{proof}

\paragraph{From \ATA to timed automata.} Let us show how we can now
translate $\Aa_\Phi$ into a \emph{timed automaton} that accepts
$\sem{\Phi}$. The crucial point is to define a \emph{bound}, $M(\Phi)$,
on the number of clocks that are necessary to recognise models of
$\Phi$.

A timed automaton (TA) is a tuple $\Bb=(\Sigma, L, \ell_0, X, F,
\delta)$, where $\Sigma$ is a finite \emph{alphabet}, $L$ is finite
set of \emph{locations}, $\ell_0\in L$ is the \emph{initial location},
$X$ is a finite set of \emph{clocks}, $F\subseteq L$ is the set of
\emph{accepting locations}, and $\delta\subseteq
L\times\Sigma\times\mathcal{G}(X)\times 2^X\times L$ is a finite set
of \emph{transitions}, where $\mathcal{G}(X)$ denotes the set of
\emph{guards on $X$}, i.e. the set of all finite conjunctions of
\emph{clock constraints} on clocks from $X$. For a transition $(\ell,
\sigma, g, r, \ell')$, we say that $g$ is its \emph{guard}, and $r$
its \emph{reset}. A \emph{configuration} of a TA is a pair $(\ell,
v)$, where $v:X\mapsto\R^+$ is a \emph{valuation} of the clocks in
$X$. We denote by $\configs{\Bb}$ the set of all configurations of
$\Bb$, and we say that $(\ell,v)$ is \emph{accepting} iff $\ell\in
F$. For all $t\in \R^+$, we have (time successor)
$(\ell,v)\timestep{t}(\ell',v')$ iff $\ell=\ell'$ and $v'=v+t$ where
$v+t$ is the valuation s.t. for all $x\in X$: $(v+t)(x)=v(x)+t$. For
all $\sigma\in \Sigma$, we have (discrete successor)
$(\ell,v)\xrightarrow{\sigma}(\ell',v')$ iff there is $(\ell, \sigma,
g, r, \ell')\in \delta$ s.t. $v\models g$, for all $x\in r$: $v'(x)=0$
and for all $x\in X\setminus r$: $v'(x)=v(x)$. We write
$(\ell,v)\xrightarrow{t,\sigma}(\ell',v')$ iff there is
$(\ell'',v'')\in \configs{\Bb}$
s.t. $(\ell,v)\timestep{t}(\ell'',v'')\xrightarrow{\sigma}(\ell',v')$.
A timed word $\theta=(\bar{\sigma},\bar{\tau})$ with
$\bar{\sigma}=\sigma_1\sigma_2\cdots\sigma_n$ and
$\bar{\tau}=\tau_1\tau_2\cdots\tau_n$ is \emph{accepted} by $\Bb$ iff
there is a \emph{run} of $\Bb$ on $\theta$, i.e. a sequence of
configurations $(\ell_1,v_1)$,\ldots, $(\ell_n,v_n)$ s.t. for all
$0\leq i\leq n-1$:
$(\ell_i,v_i)\xrightarrow{\tau_i-\tau_{i-1},\sigma_i}(\ell_{i+1},
v_{i+1})$, where $v_0$ assigns $0$ to all clocks, and assuming that
$\tau_{-1}$ denotes $0$. We denote by $L(\Bb)$ the \emph{language} of
$\Bb$, i.e. the set of words accepted by $\Bb$.

We can now sketch the translation, the full details can be found in
Appendix~\ref{sec:towards-timed-autom}. Let $\Phi$ be an MITL formula,
and assume $\Aa_\Phi=(\Sigma, L^\Phi, \ell_0^\Phi,
F^\Phi,\delta^\Phi)$. Let us show how to build the TA
$\Bb_\Phi=(\Sigma, L,\ell_0, X, F, \delta)$
s.t. $L(\Bb_\Phi)=L_{\appfunc{\Phi}}(\Aa_\Phi)$. The TA $\Bb_\Phi$ is
built as follows. For a set of clocks $X$, we let $\loc(X)$ be the set
of functions $S$ that associate with each $\ell\in L^\Phi$ a finite
sequence $(x_1,y_1),\ldots,(x_n,y_n)$ of pairs of clocks from $X$,
s.t. each clock occurs only once in all the $S(\ell)$. Then,
$L=\loc(X)$. Observe that $L$ is indeed a finite set. Intuitively, a
configuration $(S, v)$ of $\Bb_\Phi$ encodes the configuration $C$ of
$\Aa_\Phi$ s.t. for all $\ell\in L^\Phi$: $C(\ell)=\{[v(x), v(y)]\mid
(x,y)\in S(\ell)\}$. The other components of $\Bb_\Phi$ are defined as
follows. $\ell_0$ is s.t. $\ell_0(\ell_0^\Phi)=(x,y)$, where $x$ and
$y$ are two clocks arbitrarily chosen from $X$, and
$\ell_0(\ell)=\emptyset$ for all $\ell\in L^\Phi\setminus
\{\ell_0^\Phi\}$. $X$ is a set of clocks s.t. $|X|=M(\Phi)$. $F$ is
the set of all locations $S$ s.t. $\{\ell\mid
S(\ell)\neq\emptyset\}\subseteq F^\Phi$. Finally, $\delta$ allows
$\Bb_\Phi$ to simulate the $\appfunc{\Phi}$-semantics of $\Aa_\Phi$ :
the non determinism of $\delta$ enables to guess which clocks must be
grouped to form appropriate intervals (see appendix for all the
details).

\begin{theorem}\label{theo:TA-from-MITL}
  For all MITL formula $\Phi$, $\Bb_\Phi$ has $M(\Phi)$ clocks and $O(
  (\vert \Phi \vert)^{(m . \vert \Phi \vert)})$ locations, where $m =
  \max_{I\in\Ii_\Phi} \left\{ 2 \times \left\lceil
      \frac{\inf(I)}{\vert I \vert} \right\rceil + 1, \left\lceil
      \frac{\sup(I)}{\vert I \vert} \right\rceil + 1 \right\}$.
\end{theorem}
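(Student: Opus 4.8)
The plan is to establish the two quantities—the number of clocks and the number of locations—separately, since they correspond to the two components $X$ and $L=\loc(X)$ in the construction of $\Bb_\Phi$. The clock count is immediate: by construction $X$ is declared to be a set of clocks with $|X|=M(\Phi)$, so $\Bb_\Phi$ has exactly $M(\Phi)$ clocks by fiat. The substance of the theorem therefore lies entirely in the location count, i.e.\ in bounding $|\loc(X)|$, the number of functions $S$ assigning to each location $\ell\in L^\Phi$ a finite sequence of clock-pairs, with each clock used at most once across all the $S(\ell)$.

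\paragraph{Bounding the number of locations.} First I would recall that $L^\Phi$ has $O(|\Phi|)$ locations, since $\Aa_\Phi$ contains one location per $U$/$\tilde U$ subformula plus the initial copy, and $|\Phi|$ counts exactly these modalities. Next I would count, for a fixed set $X$ of $M(\Phi)$ clocks, the number of functions $S\in\loc(X)$. Such an $S$ is determined by (a) a partition of some subset of the $M(\Phi)$ clocks into ordered pairs, and (b) an assignment of these pairs, arranged in sequences, to the $O(|\Phi|)$ locations. The key simplification is that by Theorem~\ref{ThmBorne}, every reachable configuration contains at most $M(\Phi)$ clock copies, and—crucially—the $\appfunc{\Phi}$ bound can be tracked \emph{per location}: each location $\ell$ needs at most $m = \max_{I\in\Ii_\Phi}\{2\lceil\inf(I)/|I|\rceil+1, \lceil\sup(I)/|I|\rceil+1\}$ pairs of clocks (this is $M(\Phi)/|\Phi|$ up to the constants, matching the per-modality contribution in the bound on $M(\Phi)$ stated earlier). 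Since each of the $O(|\Phi|)$ locations receives a bounded sequence of at most $m$ clock-pairs, and there are $O(|\Phi|)$ choices of clock-name per slot (clocks being drawn from a pool of size $M(\Phi)=O(|\Phi|)$), the number of $S$ is bounded by $\bigl(O(|\Phi|)^{m}\bigr)^{O(|\Phi|)} = O\bigl((|\Phi|)^{m\cdot|\Phi|}\bigr)$, which is the claimed bound.

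\paragraph{Main obstacle.} The delicate point will be justifying the \emph{per-location} clock bound $m$ and reconciling it with the global bound $M(\Phi)$. The displayed upper bound on $M(\Phi)$ is a product $|\Phi|\times(\text{per-interval term})$, so I must argue that the clock copies are distributed across locations in a way that lets each location carry only $O(m)$ pairs, rather than one location hoarding all $M(\Phi)$ copies. This follows from the structure of $\appfunc{\Phi}=F^K$: the $\mername$ operation and the derivation of $M(\Phi)$ in the appendix bound the copies location-by-location (the factor $|\Phi|$ in $M(\Phi)$ being precisely the sum over the $O(|\Phi|)$ locations of their individual $O(m)$ contributions). I would invoke the appendix's inductive definition of $M(\Phi)$ to extract this per-location bound, then feed it into the counting argument above. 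The rest is routine combinatorics: counting injective assignments of clock-pairs to location-slots and collapsing the polynomial factors into the stated $O((|\Phi|)^{m\cdot|\Phi|})$ expression.
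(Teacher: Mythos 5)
Your decomposition (the clock count holds by construction, so the content is in bounding the number of locations) is the same as the paper's, but your counting argument has a genuine flaw that the paper's proof avoids. You count \emph{clock names per slot}: each of the $O(|\Phi|)$ locations of $\Aa_\Phi$ receives at most $m$ pairs, and each slot is filled by a clock ``drawn from a pool of size $M(\Phi)=O(|\Phi|)$''. That last identity is false: $M(\Phi)=O(m\cdot|\Phi|)$, and $m$ depends on the \emph{numerical endpoints} of the intervals in $\Ii_\Phi$ (through $\lceil \inf(I)/\vert I\vert\rceil$ and $\lceil \sup(I)/\vert I\vert\rceil$), not on $|\Phi|$, which only counts modalities; $m$ can be arbitrarily large for fixed $|\Phi|$. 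If you correct the pool size, your count becomes roughly $\bigl(m|\Phi|\bigr)^{O(m|\Phi|)}=2^{O\left(m|\Phi|(\log m+\log|\Phi|)\right)}$, which exceeds the claimed $O\bigl(|\Phi|^{m\cdot|\Phi|}\bigr)=2^{O(m|\Phi|\log|\Phi|)}$ whenever $m$ is large relative to $|\Phi|$. So the argument as written does not establish the stated bound; it only does so under the unstated (and unwarranted) assumption that $m$ is a constant.

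The paper's counting flips the orientation so that the \emph{base} of the exponential is independent of $m$: it treats the $M(\Phi)/2$ clock pairs as pre-formed units and counts, for each pair, which of the $|L^\Phi|+1$ destinations it is assigned to (one of the $O(|\Phi|)$ locations of $\Aa_\Phi$, or ``unused''). This yields $(|L^\Phi|+1)^{M(\Phi)/2}$, i.e.\ base $O(|\Phi|)$ and exponent $M(\Phi)/2\leq m\cdot|\Phi|$, which is the stated $O\bigl((\vert\Phi\vert)^{m\cdot\vert\Phi\vert}\bigr)$; the pairing of clocks and the ordering inside each sequence are taken as canonical, locations being identified up to clock renaming. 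Note also that your appeal to Theorem~\ref{ThmBorne} for a \emph{per-location} cap of $m$ pairs is unnecessary for the count (and does not by itself repair it): what matters is not how many slots each location has, nor how many clock names can fill a slot, but that each pair is a single unit choosing among $O(|\Phi|)$ destinations.
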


\section{Future works: towards efficient MITL model checking\label{sec:future-works:-toward}}
Let us close this work by several observations that could yield
efficient model checking algorithm for MITL. Let $\Cc$ be a timed
automaton, and let $\Phi$ be an MITL formula. Obviously, one can
perform \emph{automaton-based model checking} by computing a TA
$\Bb_{\neg\Phi}$ accepting $\sem{\neg\Phi}$ (using the technique
presented in Section~\ref{sec:from-mitl-timed}, or the technique
of~\cite{AFH96}), and explore their synchronous product
$\Cc\times\Bb_{\neg\Phi}$ using classical region-based or zone-based
techniques \cite{AD94}.  This approach has an important drawback in
practice: the number $M$ of clocks of the $\Bb_{\neg\Phi}$ TA is
usually very high (using our approach or the~\cite{AFH96} approach),
and the algorithm exploring $\Cc\times\Bb_{\neg\Phi}$ will have to
maintain data structures (regions or zone) ranging over $N+M$ clocks,
where $N$ is the number of clocks of $\Cc$.

A way to avoid this blow up in the number of clocks is to perform the
model-checking using the \ATA $\Aa_{\neg\Phi}$ (using its
$\appfunc{\neg\Phi}$ semantics) instead of the TA
$\Bb_{\neg\Phi}$. First, the size of $\Aa_{\neg\Phi}$ is \emph{linear}
in the size of $\Phi$, and is straightforward to build. Second, a
configuration of $\Cc\times\Aa_{\neg\Phi}$ stores only the clocks that
correspond to \emph{active copies} of $\Aa_{\neg\Phi}$, which, in
practice, can be much smaller than the number of clocks of
$\Bb_{\neg\Phi}$. Third, this approach allows to retain the structure
of the \ATA in the transition system of $\Cc\times\Aa_{\neg\Phi}$,
which allows to define \emph{antichain based algorithms} \cite{DR10},
that rely on a \emph{partial order on the state space} to detect
redundant states and avoid exploring them. Such an approach, has been
applied in the case of LTL model-checking \cite{DDMR08}. It relies
crucially on the translation of LTL formulas to \emph{alternating
  automata}, and yields dramatic improvements in the practical
performance of the algorithm.

To obtain such algorithms, we need a \emph{symbolic data structure} to
encode the configurations of $\Cc\times\Aa_{\neg\Phi}$. Such a data
structure can be achieved by lifting, to our \emph{interval
  semantics}, the technique from \cite{OW05} that consists in encoding
\emph{regions} of \ATA configurations by means of \emph{finite
  words}. Remark that this encoding differs from the classical regions
for TA \cite{AD94}, in the sense that the \emph{word} encoding allows
the number of clocks to change along paths of the transition system.

These ideas explain what we believe are the \emph{benefits} of using
an \ATA based characterisation of MITL formulas. The precise
definition of the model checking algorithm sketched above is the topic
of our current research.

%%% Local Variables: 
%%% mode: latex
%%% TeX-master: "article"
%%% End: 

\clearpage

\appendix

\section{Proof of Proposition \ref{inclu}}

Before proving Proposition~\ref{inclu}, we make several useful
observations about the transition relation of an \ATA. Let $\delta$ be
the transition function of some \ATA, let $\ell$ be a location, let
$\sigma$ be a letter, and assume $\delta(\ell,\sigma)=\bigvee_j a_j$,
where each $a_j$ is an \emph{arc}, i.e. a conjunction of atoms of the
form: $\ell'$, $x.\ell'$, $x\bowtie c$, $0\bowtie c$, $\top$ or
$\bot$. Then, we observe that \emph{each minimal model} of
$\delta(\ell,\sigma)$ wrt some interval $I$ corresponds to
\emph{firing one of the arcs $a_j$ from $(\ell,I)$}. That, each
minimal model can be obtained by choosing an arc $a_j$ from
$\delta(\ell,\sigma)$, and applying the following procedure. Assume
$a_j= \ell_1\wedge\cdots\wedge \ell_n\wedge
x.(\ell_{n+1}\wedge\cdots\wedge \ell_{m})\wedge \varphi$, where
$\varphi$ is a conjunction of clock constraints. Then, $a_j$ is
firable from a minimal model $(\ell,\sigma)$ iff $I\models \varphi$
(otherwise, no minimal model can be obtained from $a_j$). In this
case, the minimal model is $\{(\ell_i,I)\mid 1\leq i\leq n\}\cup
\{(\ell_i,[0,0])\mid n+1\leq i\leq m\}$. This generalises to
configurations $C=\{(\ell_1,I_1),\ldots, (\ell_n,I_n)\}$: successors
$C'$ are obtained by selecting a firable arc from each $(\ell_i,
I_i)$, and taking the union of the resulting configurations.

\noindent \textbf{Proposition \ref{inclu}.} \textit{For all \ATA
  $\Aa$, for all $f \in \appf{\Aa}$: $L_{f}(\Aa) \subseteq L(\Aa)$.}
\begin{proof}
  Let us consider a timed word $\theta = (\bar{\sigma}, \bar{\tau})$
  in $L_{f}(\mathcal{A})$ and let us show that $\theta\in L(\Aa)$. Let
  us assume that $\leng{\theta} = n$. Let
  $\rho=\lbrace(\ell_{0},[0,0])\rbrace
  \overset{t_{1}}{\rightsquigarrow} C_{1}
  \overset{\sigma_{1}}{\longrightarrow_{f}} C_{2}
  \overset{t_{2}}{\rightsquigarrow} C_{3}
  \overset{\sigma_{2}}{\longrightarrow_{f}} \dots
  \overset{t_{n}}{\rightsquigarrow} C_{2n-1}
  \overset{\sigma_{n}}{\longrightarrow_{f}} C_{2n}$, for
  $\lbrace(\ell_{0},[0,0])\rbrace = C_{0}$ be the accepting $f$-run of
  $\mathcal{A}$ on $\theta$, with $\lbrace(\ell_{0},[0,0])\rbrace =
  C_{0}$ and where $C_{2n}$ is accepting.  Let us build, from $\rho$,
  an accepting $Id$-run $\rho'=D_{0} \overset{t_{1}}{\rightsquigarrow}
  D_{1} \overset{\sigma_{1}}{\longrightarrow_{Id}} D_{2}
  \overset{t_{2}}{\rightsquigarrow} D_{3}
  \overset{\sigma_{2}}{\longrightarrow_{Id}} \dots
  \overset{t_{n}}{\rightsquigarrow} D_{2n-1}
  \overset{\sigma_{n}}{\longrightarrow_{Id}} D_{2n}$ s.t.:
  \begin{enumerate}
  \item $D_{0} = \lbrace(\ell_{0}, [0,0])\rbrace$,
  \item $D_{2n}$ is accepting, and
  \item for all $1\leq i\leq n$, for all $(\ell,[v,v]) \in D_{2i}$
    there is $(\ell,I) \in C_{2i}$ s.t. $v \in I$
  \end{enumerate}

  We build $\rho'$ by induction on the positions along $\rho$:
  
  \textbf{\underline{Basis}} : $(i=0)$ Since $D_0=C_0$, the property
  holds trivially.
  
  \textbf{\underline{Induction}} : $(i=k+1)$ The induction hypothesis
  is that we have built the prefix of $\rho'$ up to $D_{2k}$, and
  that, for all $j\leq k$, for all $(\ell,[v,v]) \in D_{2j}$ there is
  $(\ell,I) \in C_{2j}$ s.t. $v \in I$. Let us show how to build
  $D_{2k+1}$ and $D_{2(k+1)}$.

  \begin{itemize}
  \item We first take care of the time transition. Let $D_{2k+1} =
    D_{2k} + t_{k+1}$. Clearly, $D_{2k}
    \overset{t_{k+1}}{\rightsquigarrow} D_{2k+1}$.  In $\rho$, we have
    $C_{2k} \overset{t_{k+1}}{\rightsquigarrow} C_{2k+1}$, and so
    $C_{2k+1} = C_{2k} + t_{k+1}$.  It is straightforward to prove
    that this maintains the induction hypothesis:
    \begin{eqnarray}
      \forall(\ell,[v,v]) \in D_{2k+1}: \exists(\ell,I) \in
      C_{2k+1} : v \in I\label{eq:1}
    \end{eqnarray}  
  \item We must build $D_{2k+2}$ corresponding to the transition
    $D_{2k+1} \xrightarrow{\sigma_{k+1}}_{Id} D_{2k+2}$.  Let us
    assume that $D_{2k+1}=\{(\ell_1,[v_1,v_1]),\ldots (\ell_p,
    [v_p,v_p])\}$ for some $p$, and let us assume that $C_{2k+1} =
    \{(\ell'_1,I_1),\ldots, (\ell'_q,I_q)\}$ for some $q$.  Let
    $h:\{1,\ldots,p\}\mapsto\{1,\ldots,q\}$ be a function s.t. for all
    $1\leq j\leq p$: $v_j\in I_{h(j)}$. This function exists,
    by~(\ref{eq:1}), and we will rely on it to build $D_{2k+2}$ from
    $C_{2k+2}$.

    In $\rho$, $C_{2k+1} \xrightarrow{\sigma_{k+1}}_f C_{2k+2}$. Then,
    $C_{2k+2}\in f(E)$, where $E=\cup_{1\leq j\leq q}E_j$ and each
    $E_j$ is a minimal model of $\delta(\ell_{j}',\sigma_{k+1})$ with
    respect to $I_{j}$.  Each of those minimal models corresponds to
    an arc starting from $\ell_j'$, let us denote this arc by
    $a_j$. Remark that, for all $j$, $I_j$ satisfies the guard of
    $a_j$, since $\rho$ is a genuine run.

    Then, we let $D_{2k+2}$ be the configuration obtained by taking,
    for all $1\leq j\leq p$, the arc $a_{h(j)}$ from $(\ell_j,
    [v_j,v_j])\in D_{2k+1}$. Clearly, $v_j$ satisfies the guard $g$ of
    $a_{h(j)}$, because $v_j\in I_{h(j)}$, by definition of $h$, and
    $I_{h(j)}$ satisfies $g$. It is also easy to check that
    $D_{2k+2}=\cup_{1\leq j\leq p} M_j$, where each $M_j$ is a minimal
    model of $\delta(\ell_j, \sigma_{k+1})$ wrt. $[v_j,v_j]$. Hence,
    $D_{2k+1}\longrightarrow_{Id}D_{2k+2}$. Finally, since we fire the
    same arcs in both $\rho$ and $\rho'$, the resets are the same in
    both cases. By definition of the approximation function, we
    conclude that $C_{2k+1}$ and $D_{2k+2}$ satisfy the induction
    hypothesis.
\end{itemize}
This induction builds the run $\rho'$ and shows that for all
$(\ell,[v,v]) \in D_{2n}$, there is $(\ell,I) \in C_{2n}$ s.t. $v \in
I $.  As $\rho$ is accepting, $C_{2n}$ is an accepting configuration
and all the states it contains are accepting, i.e. $\forall (\ell,I)
\in C_{2n}, \ell$ is an accepting location.  We deduce from this that
$D_{2n}$ is an accepting configuration.
\end{proof}

\section{Proof of Theorem \ref{ThmBorne}}

Before proving Theorem~\ref{ThmBorne}, we give a precise
characterisation of the bound $M(\Phi)$. Let $\Phi$ be an MITL formula
in negative normal form.  We define $M(\Phi)$, thanks to
$M^{\infty}(\Phi)$ and $M^{1}(\Phi)$ defined as follows
\begin{itemize}
\item if $\Phi = \sigma$ or $\Phi=\neg \sigma$ (with $\sigma\in
  \Sigma$), then $M(\Phi) = 2$ and $M^{\infty}(\Phi) = M^{1}(\Phi) =
  0$.
\item if $\Phi = \varphi_{1} \wedge \varphi_{2}$, then $M(\Phi) = \max \{2,
  M^{1}(\varphi_{1}) + M^{1}(\varphi_{2})\}$, $M^{\infty}(\Phi) =
  M^{\infty}(\varphi_{1}) + M^{\infty}(\varphi_{2})$ and $M^{1}(\Phi)
  = M^{1}(\varphi_{1}) + M^{1}(\varphi_{2})$.
\item if $\Phi = \varphi_{1} \vee \varphi_{2}$, then $M(\Phi) = \max
  \{2, M^{1}(\varphi_{1}), M^{1}(\varphi_{2})\}$, $M^{\infty}(\Phi) =\break
  \max \{ M^{\infty}(\varphi_{1}), M^{\infty}(\varphi_{2})\}$ and
  $M^{1}(\Phi) = \max \{ M^{1}(\varphi_{1}), M^{1}(\varphi_{2})\}$.
\item if $\Phi = \varphi_{1}U_{I} \varphi_{2}$, then $M(\Phi) = \max
  \{ 2, M^{\infty}(\varphi_{1}) +M^{1}(\varphi_{2}) +1 \}$,
  $M^{\infty}(\Phi) = \left(4\times\left\lceil \frac{inf(I)}{\vert I
        \vert} \right\rceil +2\right) +M^{\infty}(\varphi_{1})
  +M^{\infty}(\varphi_{2})$ and $M^{1}(\Phi) = M^{\infty}(\varphi_{1})
  +M^{1}(\varphi_{2}) +1$.
\item if $\Phi = \varphi_{1} \tilde{U}_{I} \varphi_{2}$, then $M(\Phi)
  = \max \{2, M_{1}(\varphi_{1}) +M_{\infty}(\varphi_{2}) +1\}$,
  $M_{\infty}(\Phi) = \left(2\times \left\lceil \frac{sup(I)}{\vert I
        \vert} \right\rceil +2\right) +M_{\infty}(\varphi_{1})
  +M_{\infty}(\varphi_{2})$ and $M_{1}(\Phi) = M_{1}(\varphi_{1})
  +M_{\infty}(\varphi_{2}) +1$.
\end{itemize}

Then, let us recall useful results from \cite{OW05,OW07} that enable
to prove Proposition~\ref{PropUntil}. 

\begin{proposition}[\cite{OW05}]
\label{PropOW}
Let $\Phi = \varphi_{1}U_{I} \varphi_{2}$ or
$\Phi=\varphi_{1}\tilde{U}_{I} \varphi_{2}$ be an MITL formula and
$\ell_{\Phi}$ the associated location in $\Aa_\Phi$.  Let $\theta$ be
a timed word. The automaton $\Aa_\Phi$ $Id$-accepts $\theta$
from configuration $\lbrace(\ell_{\Phi},0)\rbrace \text{ iff }
(\theta,1) \models \Phi$.
\end{proposition}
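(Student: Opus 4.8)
The plan is to prove both directions by a nested induction: an outer induction on the number of $U/\tilde{U}$ connectives of $\Phi$ (so that the sub-formulas $\varphi_1,\varphi_2$ are covered by the induction hypothesis), and, for each of the two modal cases, an inner induction on the length $n=|\theta|$ of the timed word. The inner induction is forced by the fact that the location $\ell_\Phi$ occurs inside its own transition formula $\delta(\ell_\Phi,\sigma)$. To make this inner induction go through I would first strengthen the statement, replacing the start configuration $\{(\ell_\Phi,0)\}$ by $\{(\ell_\Phi,v)\}$ for an arbitrary $v\in\R^+$ and asking, on the right-hand side, for a witness position whose metric constraint is the interval $I$ shifted by $v$; the proposition is then the instance $v=0$. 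Alongside this I would record a \emph{companion fact} for the non-modal sub-formulas (a letter, its negation, $\top$, $\bot$, or a Boolean combination): firing the arcs of $\delta(\psi,\sigma_1)$ from a freshly reset clock and then accepting the suffix is equivalent to $(\theta,1)\models\psi$. This fact is proved by unfolding the Boolean clauses of $\delta$, its base cases being $\delta(\sigma_1,\sigma_2)\in\{\top,\bot\}$ (whose minimal models are $\emptyset$ or non-existent). The reason the clock bookkeeping works is that every occurrence of a sub-formula in $\delta$ is guarded by a reset ``$x.$'', so the spawned copies of $\varphi_1,\varphi_2$ measure time from their spawn position, matching the suffix-position reading of their formula; the copy that stays in $\ell_\Phi$ carries no reset and hence keeps, as its valuation, the time elapsed since the run started (equal, under the standard convention of \cite{OW05,OW07}, to the quantity the until-constraint of position $1$ compares with $I$).

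For $\Phi=\varphi_1U_I\varphi_2$ I would read an accepting $Id$-run through the unfolding $\delta(\ell_\Phi,\sigma)=(x.\delta(\varphi_2,\sigma)\wedge x\in I)\vee(x.\delta(\varphi_1,\sigma)\wedge\ell_\Phi\wedge x\le\sup(I))$. For the ($\Leftarrow$) direction, from a witness position $m$ with $(\theta,m)\models\varphi_2$, shifted time-stamp in $I$, and $(\theta,k)\models\varphi_1$ for all $1\le k<m$, I build the run that takes the right (``continue'') disjunct at every position $k<m$ and the left (``exit'') disjunct at $m$: each continue step is enabled because the $\ell_\Phi$-clock does not exceed $\sup(I)$ there and, by the companion fact, the reset $\varphi_1$-copies accept; the exit step is enabled because the clock lies in $I$ and, by the outer hypothesis applied to $\varphi_2$, the reset $\varphi_2$-copies accept. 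Since $\ell_\Phi\notin F$, the surviving copy must exit before the word ends, which is exactly why a witness position is needed. For the ($\Rightarrow$) direction, an accepting run cannot leave a copy in $\ell_\Phi\notin F$ at the end, so the exit disjunct is taken at some $m$; its guard yields the metric constraint, the intermediate continue guards together with the companion fact yield the $\varphi_1$-obligations, and acceptance of the spawned copies yields $(\theta,m)\models\varphi_2$. The inner induction on $n$ discharges the recursive occurrence of $\ell_\Phi$ after the first continue step, where the clock has become non-zero -- precisely what the strengthening to arbitrary $v$ accommodates.

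The dual case $\Phi=\varphi_1\tilde{U}_I\varphi_2$ is, I expect, the main obstacle. Here $\delta(\ell_\Phi,\sigma)=(x.\delta(\varphi_2,\sigma)\vee x\notin I)\wedge(x.\delta(\varphi_1,\sigma)\vee\ell_\Phi\vee x>\sup(I))$ is a conjunction, $\ell_\Phi\in F$ so a copy is now allowed to survive to the end of $\theta$, and the two conjuncts encode the ``release'' reading: $\varphi_2$ must hold at every position whose shifted time-stamp lies in $I$, unless the obligation has been released earlier by a position satisfying $\varphi_1$, the obligation also lapsing once the clock exceeds $\sup(I)$. The delicate points are to check that the minimal models of this conjunction are in bijection with the admissible behaviours of the release modality (requiring a case analysis of which conjunct/disjunct combinations are simultaneously firable for a given clock value relative to $I$), and to align the two terminal situations -- the word ending with the copy still in $\ell_\Phi\in F$, and the empty suffix -- with the vacuous/``forever'' case of $\tilde{U}$. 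Once these correspondences are established, the two directions and the inner induction on $n$ proceed as in the $U$ case, with the roles of disjunction/conjunction and of the acceptance condition dualised.
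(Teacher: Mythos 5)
The paper contains no proof of this proposition: it is imported as a black box from Ouaknine and Worrell (the bracketed citation in the proposition's header is the entire in-paper justification), so there is no argument of the paper for your attempt to match, and your sketch must be judged as a free-standing reconstruction. As such it is sound --- it is essentially the Ouaknine--Worrell structural-induction argument --- and the instructive comparison is with how the paper organises the material \emph{around} the citation. Your strengthening to an arbitrary initial valuation $v$ with shifted interval $I-v$ is exactly the paper's Lemma~\ref{LemSingleton}; but the paper derives that lemma \emph{from} Proposition~\ref{PropOW} by time-shifting the word ($\theta\models\Phi-v$ iff $\theta+v\models\Phi$) and splitting the first time step of the run, whereas you build the generalisation into your inner induction on $|\theta|$ --- necessary for you, since you prove rather than cite, and it makes the induction self-contained. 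Likewise, your handling of the two modal cases (the exit disjunct forced because $\ell_\Phi\notin F$ for $U$; the dual minimal-model case analysis with $\ell_\Phi\in F$ for $\tilde{U}$) is precisely the technology the paper deploys for its genuinely new Propositions~\ref{PropUntil} and~\ref{PropTilde}, so your plan is consistent with the paper's proofs of the neighbouring results. If you write the sketch out, make explicit two steps you currently use silently: first, that acceptance from a multi-state configuration decomposes into acceptance from each of its states separately (the paper isolates this as Lemma~\ref{LemProp2}; it follows from the union-of-minimal-models definition of the discrete transition, but it is a step); second, that your ``companion fact'' has \emph{modal} atoms as base cases too, not only letters and constants, and there you need the identification of the locations of $\Aa_{\varphi_i}$ with the corresponding locations of $\Aa_\Phi$ (made explicitly inside the paper's proof of Proposition~\ref{PropUntil}) together with the outer induction hypothesis applied to the suffix word. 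Finally, your parenthetical appeal to the ``standard convention'' about what the un-reset clock in $\ell_\Phi$ is compared with is doing real work: the paper's appendix statements anchor timestamps absolutely (constraints of the form $\tau_m\in I-v$), which matches your reading even though the MITL semantics as literally defined anchors position~$1$ at $\tau_1$; this wrinkle is the paper's as much as yours, and flagging the convention, as you do, is the right way to handle it.
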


The following corollary directly follows from
Proposition~\ref{PropOW}. 

\begin{corollary}
\label{CorOW}
Let $\Phi$ be an MITL formula, let $K$ be a set of index and, $\forall
k \in K$ let $\Phi_{k} = \varphi_{1, k}U_{I_{k}} \varphi_{2, k}$ or
$\Phi_k\varphi_{1, k}\tilde{U}_{I_{k}} \varphi_{2, k}$ be subformulas
of $\Phi$. For all $k\in K$, let $\ell_{\Phi_{k}}$ the associated
locations in $\Aa_\Phi$.  Let $\theta$ be a timed word and, for all
$k\in K$. Then, $\mathcal{A}_\Phi$ $Id$-accepts $\theta$ from
configuration $\lbrace(\ell_{\Phi_{k}},0)_{k \in K} \rbrace \text{ iff
} (\theta,1) \models \underset{k \in K}{\bigwedge} \varphi_{1, k}
U_{I_{k}} \varphi_{2, k}$.
\end{corollary}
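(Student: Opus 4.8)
The plan is to deduce the corollary from Proposition~\ref{PropOW} applied separately to each index $k\in K$, the only substantial ingredient being the general fact that, for \ATA, $Id$-acceptance from a configuration coincides with the conjunction of the $Id$-acceptances from its individual states. Concretely, I would first isolate the following lemma: for every configuration $C$ and every timed word $\theta$, $\Aa_\Phi$ $Id$-accepts $\theta$ from $C$ if and only if $\Aa_\Phi$ $Id$-accepts $\theta$ from $\{s\}$ for every $s\in C$. Granting this lemma the corollary is immediate: taking $C=\{(\ell_{\Phi_k},0)_{k\in K}\}$ (a legal configuration, since the $\ell_{\Phi_k}$ are distinct locations), $Id$-acceptance of $\theta$ from $C$ is equivalent to $Id$-acceptance from each singleton $\{(\ell_{\Phi_k},0)\}$, which by Proposition~\ref{PropOW} is equivalent to $(\theta,1)\models\Phi_k$ for every $k$, i.e. to $(\theta,1)\models\bigwedge_{k\in K}\Phi_k$.

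It thus remains to prove the lemma, which I would split into its two directions. For the forward direction I would establish the stronger monotonicity statement that $C_1\subseteq C_2$ implies that every $\theta$ which is $Id$-accepted from $C_2$ is also $Id$-accepted from $C_1$. Given an accepting run from $C_2$, I restrict it to $C_1$ by keeping, at each discrete step and for each state of the current subconfiguration, the very minimal model that the run on $C_2$ selects for that state; since these choices are made per state and configurations are sets, an easy induction (time elapse trivially preserving inclusion) shows that the restricted configuration stays included in the corresponding configuration of the run on $C_2$ at every step. As the final configuration of the run on $C_2$ is accepting and every subset of an accepting configuration is accepting, the restricted run is accepting as well. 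Applying this with $C_1=\{s\}$ for each $s\in C$ gives the forward implication.

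For the backward direction I would build a single accepting run from $C$ out of the individually accepting runs from each $\{s\}$, $s\in C$. Rather than naively unioning those runs configuration by configuration, I would carry the invariant that every state of the current configuration individually $Id$-accepts the remaining suffix of $\theta$. The invariant holds initially by hypothesis. At each step, for every state $(\ell,v)$ of the current configuration I use a run witnessing that $(\ell,v)$ accepts the remaining suffix: after elapsing the appropriate delay it fires some minimal model of $\delta(\ell,\sigma)$, and I let the next configuration be the union of these minimal models over all states. The monotonicity just established guarantees that each state so produced still accepts the shorter remaining suffix, which reinstates the invariant. After all of $\theta$ has been read, the invariant forces every surviving state to accept the empty suffix, i.e. to be accepting, so the constructed run ends in an accepting configuration.

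I expect the backward (merging) direction to be the main obstacle. The naive idea of defining the merged run as the stepwise union of the individual runs fails, because a single state may occur in several of them with conflicting arc choices and, configurations being sets, only one minimal model may be selected for it. The invariant-based construction circumvents this by reasoning about \emph{what each state accepts} rather than about the provenance of states, and it is precisely here that the per-state monotonicity from the forward direction is reused.
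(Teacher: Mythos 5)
Your proposal is correct and takes essentially the same route as the paper: the paper declares the corollary to ``directly follow'' from Proposition~\ref{PropOW}, the implicit ingredient being exactly your decomposition lemma (acceptance from a configuration iff acceptance from each of its singleton states), which the paper states in general form as Lemma~\ref{LemProp2} and dismisses as straightforward from the definition of runs. Your only deviation is that you actually prove that decomposition (subset-monotonicity for one direction, the invariant-based merging for the other), supplying the details the paper omits.
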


Let us now adapt this result to the cases where the automaton reads
the word from states of the form $(\ell, v)$, with $v$ potentially
$\neq 0$:

\begin{lemma}
  \label{LemSingleton}
  Let $\Phi$ be an MITL formula, let $K$ be a set of index and,
  $\forall k \in K$ let $\Phi_{k} = \varphi_{1, k}U_{I_{k}}
  \varphi_{2, k}$ or $\Phi_{k} = \varphi_{1, k}\tilde{U}_{I_{k}}
  \varphi_{2, k}$ be sub-formulas of $\Phi$. For all $k\in K$, let
  $\ell_{\Phi_{k}}$ be their associated locations in $\Aa_\Phi$.  Let
  $\theta$ be a timed word and $v_{k} \in \R^{+}$ $(\forall k \in
  K)$. The automaton $\mathcal{A}_\Phi$ $Id$-accepts $\theta$ from
  configuration $\lbrace(\ell_{\Phi_{k}},v_k)_{k \in K} \rbrace \text{
    iff } (\theta,1) \models \underset{k \in K}{\bigwedge}
  (\Phi_k-v_k)$, where, for all $k\in K$, $\Phi_k-v_k$ denotes the
  formula obtained from $\Phi_k$ by replacing the $I_k$ interval on
  the modality by $I_k-v_k$.
\end{lemma}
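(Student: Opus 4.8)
The plan is to prove Lemma~\ref{LemSingleton} by reducing it to Corollary~\ref{CorOW}, which handles the special case where every clock valuation $v_k$ equals $0$. The key observation is that running the automaton $\Aa_\Phi$ from a state $(\ell_{\Phi_k}, v_k)$ with $v_k \neq 0$ is, semantically, the same as running it from $(\ell_{\Phi_k}, 0)$ but with the interval $I_k$ on the modality shifted by $v_k$. This is exactly the syntactic transformation $\Phi_k \mapsto \Phi_k - v_k$ that appears in the statement. So the heart of the argument is to make this "shift" intuition precise.

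First I would set up the correspondence at the level of the transition function. Recall that $\delta(\varphi_1 U_I \varphi_2, \sigma) = (x.\delta(\varphi_2,\sigma)\wedge x\in I)\vee(x.\delta(\varphi_1,\sigma)\wedge \varphi_1 U_I\varphi_2\wedge x\leq \sup(I))$, and similarly for $\tilde U$. The only place where the clock value of the copy sitting in $\ell_{\Phi_k}$ is tested against $I_k$ is in these guards $x\in I_k$ and $x\leq\sup(I_k)$ (resp.\ $x\notin I_k$, $x>\sup(I_k)$). Now I would argue that, along any $Id$-run, a copy that starts in $\ell_{\Phi_k}$ with value $v_k$ and never gets reset will, at the position reading $\sigma_m$, hold the value $\tau_m$ (since $\tau_0=0$ is the start and the copy accumulates elapsed time); testing $\tau_m + v_k \in I_k$, i.e.\ $\tau_m \in I_k - v_k$ — which is precisely the guard that the location $\ell_{\Phi_k - v_k}$ would carry if we had started from value $0$. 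The branches spawning sub-copies of $\varphi_1$ and $\varphi_2$ reset the clock (the $x.\gamma$ prefix), so they are entirely insensitive to the initial offset $v_k$ and thus behave identically in both runs. This gives a bijection between $Id$-runs of $\Aa_\Phi$ from $\{(\ell_{\Phi_k}, v_k)_{k\in K}\}$ and $Id$-runs of $\Aa_{\Phi'}$ from $\{(\ell_{\Phi_k - v_k}, 0)_{k\in K}\}$, where $\Phi'$ is obtained from $\Phi$ by performing the shifts, and this bijection preserves acceptance.

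Having established that the run from $\{(\ell_{\Phi_k},v_k)_{k\in K}\}$ accepts $\theta$ iff the run from $\{(\ell_{\Phi_k-v_k},0)_{k\in K}\}$ accepts $\theta$, I would then simply invoke Corollary~\ref{CorOW} applied to the shifted subformulas $\Phi_k - v_k = \varphi_{1,k} U_{I_k - v_k} \varphi_{2,k}$ (resp.\ the $\tilde U$ variant). The corollary yields that the latter run accepts $\theta$ iff $(\theta,1)\models\bigwedge_{k\in K}(\Phi_k - v_k)$, which is exactly the right-hand side of the lemma. The reduction to $0$ and the appeal to Corollary~\ref{CorOW} therefore close the argument.

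The main obstacle I anticipate is making the "shift bijection" fully rigorous rather than merely intuitive. The subtle point is that $I_k - v_k$ may no longer be an interval with non-negative endpoints (its infimum can become negative), so one must check that the constraint $x \in I_k - v_k$ is interpreted correctly: in the $Id$-semantics clocks are always $\geq 0$, so the negative part of $I_k - v_k$ is simply never witnessed, and the semantic equivalence still holds. I would want to verify that the guard $x\leq\sup(I_k)$, which controls how long a copy is allowed to persist before it must fire the "now" branch, shifts consistently to $x \leq \sup(I_k) - v_k = \sup(I_k - v_k)$, and that no copy can outlive its shifted deadline. A clean way to discharge this is to prove the claim by a straightforward induction on the length of $\theta$, tracking the single invariant that the surviving copy in $\ell_{\Phi_k}$ always carries value $v_k$ plus elapsed time; the inductive step then reduces to checking that the guard tests agree under the shift, which is immediate from the definitions of $\langle a,b\rangle$ and $I+t$ given in the preliminaries.
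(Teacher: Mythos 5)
Your overall strategy---reduce to the $v_k=0$ case handled by the Ouaknine--Worrell correspondence---is the same idea the paper uses, but the mechanism you choose creates a genuine gap. You shift the \emph{formula} (and hence the automaton): you introduce $\Phi'$ with intervals $I_k - v_k$ and invoke Corollary~\ref{CorOW} on $\Aa_{\Phi'}$. But $\Phi_k - v_k$ is in general not an MITL formula in the paper's sense: MITL requires intervals in $\mathcal{I}(\R_{\N})$, whereas $I_k - v_k$ can have non-integer and even negative endpoints; likewise, the clock constraints $x \bowtie c$ of an \ATA require $c\in\N$, so $\Aa_{\Phi'}$ is not a well-defined \ATA and is certainly not an output of the construction to which Proposition~\ref{PropOW} and Corollary~\ref{CorOW} apply. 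You do notice the negative-endpoint issue, but you only address its harmless semantic side (the negative part of a guard is never witnessed since clocks are non-negative); the real problem is that the key result you invoke is simply not available for the shifted objects. To close the argument along your route you would have to re-prove Proposition~\ref{PropOW} for a generalized class of formulas and automata with arbitrary real constants, which is exactly the kind of external content one cannot just assert.

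The paper realizes the same ``shift'' intuition while staying inside the framework: instead of shifting the formula, it shifts the \emph{word}. From $\theta\models\varphi_{1,k}U_{I_k-v_k}\varphi_{2,k}$ it deduces $\theta+v_k\models\Phi_k$, where $\theta+v_k$ adds $v_k$ to every timestamp and is still a legitimate timed word; it then applies Proposition~\ref{PropOW} to the \emph{original} $\Phi_k$ and $\Aa_\Phi$ to obtain an accepting $Id$-run on $\theta+v_k$ from $(\ell_{\Phi_k},0)$, and finally decomposes the first time-elapse $\tau_1+v_k$ as $v_k$ followed by $\tau_1$, so that the tail of that run is an accepting $Id$-run on $\theta$ from $(\ell_{\Phi_k},v_k)$; the converse direction is symmetric. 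Your run-by-run correspondence between runs of $\Aa_\Phi$ started at clock value $v_k$ and runs started at $0$ ``with shifted guards'' is the right picture (modulo the slip that the surviving copy holds $v_k+\tau_m$, not $\tau_m$, when $\sigma_m$ is read), but to turn it into a proof you should shift the input word rather than the guards, precisely because timestamps live in $\R^+$ while guard constants must stay in $\N$.
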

\begin{proof}
  We prove that, for all $k\in K$ s.t. the outer modality of $\Phi_k$
  is $U$: $\Aa_\Phi$ $Id$-accept $\theta$ from $(\ell_{\Phi_{k}},v_k)$
  iff $\theta\models\varphi_{1, k} U_{I_{k}-v_{k}} \varphi_{2,
    k}$. The same arguments adapt to the $\tilde{U}$ case, and the
  Lemma follows.

  Assume $\theta= (\bar{\sigma},\bar{\tau})$ with
  $\bar{\sigma}=\sigma_1\sigma_2\ldots\sigma_n$ and
  $\bar{\tau}=\tau_1\tau_2\ldots\tau_n$. For all $v\in \R^+$, let
  $\theta+v$ denote the timed word $(\bar{\sigma}, \tau')$, where
  $\tau'=(\tau_1+v)(\tau_2+v)\ldots(\tau_n+v)$. Observe that, by
  definition of the semantics of MITL, $\theta\models \varphi_1 U_I
  \varphi_2$ iff $\theta+v\models \varphi_1 U_{I+v}\varphi_2$ (remark
  that the $\varphi_1$ and $\varphi_2$ formulas are preserved).

  First, assume that $\theta\models \varphi_{1, k} U_{I_{k}-v_{k}}
  \varphi_{2, k}$ and let us show that $\Aa_\Phi$ $Id$-accepts
  $\theta$ from $(\ell_{\Phi_{k}},v_k)$.  Since $\theta\models \varphi_{1, k}
  U_{I_{k}-v_{k}} \varphi_{2, k}$, $\theta+v_k\models \varphi_{1,k}
  U_{I_k}\varphi_{2,k} \equiv \Phi_{k}$. Then, by Proposition~\ref{PropOW}, $\Aa_\Phi$
  $Id$-accepts $\theta+v_k$ from $(\ell_{\Phi_{k}},0)$. Let
  $(\ell_\Phi,0)\xrightarrow{\tau_1+v_k,\sigma_1}_{Id} C_1\xrightarrow{\tau_2-\tau_1,\sigma_2}_{Id}
  \cdots\xrightarrow{\tau_n-\tau_{n-1},\sigma_n}_{Id} C_n$ be an accepting
  $Id$-run of $\Aa_\Phi$ from $(\ell_{\Phi_{k}},0)$ on $\theta+v_k$. Obviously, the first time
  step can be decomposed as follows:
  $(\ell_{\Phi_{k}},0)\timestep{v_k}(\ell_{\Phi_{k}},v_k)\xrightarrow{\tau_1,\sigma_1}_{Id} C_1
  \xrightarrow{\tau_2-\tau_1,\sigma_2}_{Id}
  \cdots\xrightarrow{\tau_n-\tau_{n-1},\sigma_n}_{Id} C_n$, where the prefix
  starting in $(\ell_{\Phi_{k}},v_k)$ is an accepting $Id$-run on
  $\theta$. We conclude that $\Aa_\Phi$ $Id$-accepts $\theta$ from
  $(\ell_{\Phi_{k}},v_k)$.

  By using the same arguments, we can prove that $\Aa_\Phi$
  $Id$-accepts $\theta$ from $(\ell_{\Phi_{k}},v_k)$ \emph{implies} that
  $\theta\models \varphi_{1, k} U_{I_{k}-v_{k}} \varphi_{2, k}$.  
\end{proof}

Let us show that Lemma~\ref{LemSingleton} extends to non singular intervals:

\begin{lemma}
\label{LemProp2}
Let $\Phi$ be an MITL formula, let $K$ a set of index and, $\forall k
\in K$, let $\Phi_{k}$ be sub-formulas of $\Phi$ of the form either
$\varphi_{1, k}U_{I_{k}} \varphi_{2, k}$ or $\varphi_{1,
  k}\tilde{U}_{I_{k}} \varphi_{2, k}$. For all $k\in K$, let
$\ell_{\Phi_{k}}$ be their associated locations in $\Aa_\Phi$.  Let
$\theta$ be a timed word and $J_{k} \in \mathcal{I}(\R^{+})$ $(\forall
k \in K)$. The automaton $\mathcal{A}_{\Phi}$ $Id$-accepts $\theta$
from configuration $\lbrace(\ell_{\Phi_{k}},J_{k})_{k \in K} \rbrace
\text{ iff } \forall k \in K, \mathcal{A}_\Phi$ $Id$-accepts $\theta$
from configuration $\lbrace (\ell_{\Phi_{k}},J_{k}) \rbrace$.
\end{lemma}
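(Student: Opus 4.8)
The plan is to derive Lemma~\ref{LemProp2} from a single purely automaton-theoretic fact about the $Id$-semantics: that $Id$-acceptance \emph{distributes over unions of configurations}. Precisely, I would prove that for all configurations $C,C'\in\configs{\Aa_\Phi}$ and every timed word $\theta$, $\Aa_\Phi$ $Id$-accepts $\theta$ from $C\cup C'$ iff it $Id$-accepts $\theta$ from $C$ and from $C'$. Lemma~\ref{LemProp2} is then the $|K|$-fold instance of this statement, obtained by a trivial induction on $|K|$, after observing that the distinct-occurrence convention on locations places the states $(\ell_{\Phi_k},J_k)$ at pairwise distinct locations, so that each partial union is a legal configuration. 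The distributivity fact itself I would establish by induction on $\leng{\theta}$, proving \emph{both} implications simultaneously, since each is used in the inductive step of the other. The base case $\leng{\theta}=0$ is immediate: acceptance means the starting configuration is accepting, i.e.\ all its states are, and a union is accepting iff both operands are.

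For the forward implication I would reuse the \emph{locality} of the transition relation already exploited for Proposition~\ref{inclu}: a discrete successor of a configuration is the union of minimal models chosen independently, one per state. Hence, given an accepting $Id$-run from $C\cup C'$, restricting every configuration along it to the states descending from $C$ (resp.\ $C'$), and keeping for each retained state the very arc it fires in the global run, yields a genuine $Id$-run from $C$ (resp.\ from $C'$); since each restricted configuration is a subset of an accepting configuration, it is itself accepting. In particular this yields a \emph{monotonicity} principle that I will need below: $Id$-acceptance of $\theta$ from a configuration is inherited by each of its subconfigurations.

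The backward implication is the crux. After the first delay $t_1$ and the reading of $\sigma_1$, the accepting run from $C$ selects minimal models producing a successor $D_C$ of $C+t_1$ from which the suffix $\theta'$ of $\theta$ (of length $\leng{\theta}-1$) is accepted, and likewise the run from $C'$ produces such a $D_{C'}$. The obstacle is that a state shared by $C+t_1$ and $C'+t_1$ may fire \emph{different} arcs in the two runs, so their naive union is not a single well-defined successor. I would resolve this by committing, on every shared state, to the arc chosen by the $C$-run: this produces a legitimate successor $D=D_C\cup D_{C'}^{\mathrm{rest}}$ of $(C\cup C')+t_1$, where $D_{C'}^{\mathrm{rest}}\subseteq D_{C'}$ gathers the contributions of the states private to $C'+t_1$. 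Applying the induction hypothesis to the shorter word $\theta'$ then finishes the argument: the forward (monotonicity) direction turns acceptance of $\theta'$ from $D_{C'}$ into acceptance from its subset $D_{C'}^{\mathrm{rest}}$, and the backward direction, combined with acceptance from $D_C$, gives acceptance of $\theta'$ from $D=D_C\cup D_{C'}^{\mathrm{rest}}$ — that is, $\Aa_\Phi$ $Id$-accepts $\theta$ from $C\cup C'$. The main difficulty is exactly this arc-conflict on shared states; the key point making it harmless is that the forward half of the same induction lets us drop the conflicting contributions of one run without losing acceptance, which is precisely why the two implications must be established together.
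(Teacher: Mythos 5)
Your proposal is correct and takes essentially the same route as the paper: the paper's own proof is a one-line appeal to the fact that, in the $Id$-semantics, time elapse and letter-reading act independently on each state of a configuration (one minimal model chosen per state, then merged into the successor), which is precisely the union-distributivity fact you isolate and prove by induction on $\leng{\theta}$. The only substantive addition is your explicit resolution of the arc-conflict on shared states (via the monotonicity supplied by the forward direction), a subtlety that the paper's ``straightforward by definition'' proof passes over silently.
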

\begin{proof}
  It is straightforward by definition of runs on $\mathcal{A}_{\Phi}$: the
  time elapsed is reported on each state $(\ell_{\Phi_{k}},J_{k})$ and
  the reading of a letter gives a minimal model for each state
  $(\ell_{\Phi_{k}},J_{k})$ before to merge them into a unique new
  configuration. 
\end{proof}

Now, we recall a result from \cite{OW07}:

\begin{lemma}[\cite{OW07}]
\label{claimOW}
Let $\Phi$ be an MITL formula and $\varphi$ a sub-formula of $\Phi$.
Let $\theta = (\bar{\sigma}, \bar{\tau})$ be a timed word and $\rho$ :
$C_{0} = \lbrace(\ell_{\Phi},0)\rbrace
\overset{t_{1}}{\rightsquigarrow} C_{1}
\overset{\sigma_{1}}{\longrightarrow_{Id}} C_{2}
\overset{t_{2}}{\rightsquigarrow} C_{3}
\overset{\sigma_{2}}{\longrightarrow_{Id}} \dots
\overset{t_{n}}{\rightsquigarrow} C_{2n-1}
\overset{\sigma_{n}}{\longrightarrow_{Id}} C_{2n}$ an accepting
$Id$-run of $\mathcal{A}_{\Phi}$ on $\theta$. $\forall 1 \leq i \leq
n$, if $C_{2i} \models_{[0,0]} \delta(\varphi, \sigma_{i})$, then
$(\theta,i) \models \varphi$.
\end{lemma}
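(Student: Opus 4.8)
The plan is to prove the statement by structural induction on the subformula $\varphi$, keeping the accepting run $\rho$ and the reading position $i$ fixed, so that the induction hypothesis is exactly Lemma~\ref{claimOW} instantiated at the immediate subformulas of $\varphi$ (at position $i$, or at nearby positions along $\rho$).

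First I would dispatch the cases in which $\varphi$ carries no location of $\mathcal{A}_\Phi$. If $\varphi\in\{\top\}\cup\Sigma$ or $\varphi=\neg\sigma$, then $\delta(\varphi,\sigma_i)$ is literally $\top$ or $\bot$ according to the letter test, so $C_{2i}\models_{[0,0]}\delta(\varphi,\sigma_i)$ holds exactly when that test succeeds, which is the definition of $(\theta,i)\models\varphi$. If $\varphi=\varphi_1\wedge\varphi_2$ or $\varphi=\varphi_1\vee\varphi_2$, then both $\delta$ and the relation $\models_{[0,0]}$ distribute over the connective, so the hypothesis reduces to $C_{2i}\models_{[0,0]}\delta(\varphi_j,\sigma_i)$ for the relevant $j$, and the induction hypothesis gives $(\theta,i)\models\varphi_j$, hence $(\theta,i)\models\varphi$.

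The real content is the case where $\varphi$ is $\varphi_1U_I\varphi_2$ or $\varphi_1\tilde U_I\varphi_2$, i.e. where $\varphi$ owns a location $\ell_\varphi$. Here I would reduce to Proposition~\ref{PropOW} rather than redo the eventuality/safety bookkeeping. Since $C_{2i}\models_{[0,0]}\delta(\ell_\varphi,\sigma_i)$, there is a minimal model $M\subseteq C_{2i}$ with $M\models_{[0,0]}\delta(\ell_\varphi,\sigma_i)$, and by definition $M$ is a configuration reachable by reading $\sigma_i$ at time $0$ from the singleton $\{(\ell_\varphi,0)\}$, i.e. $\{(\ell_\varphi,0)\}\xrightarrow{\sigma_i}_{Id}M$. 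I would then take the accepting suffix $C_{2i}\rightsquigarrow C_{2i+1}\xrightarrow{\sigma_{i+1}}_{Id}\cdots\xrightarrow{\sigma_n}_{Id}C_{2n}$ of $\rho$ and restrict it to the states descending from $M$: because in the $Id$-semantics every copy fires its own arc and a configuration is the union of the resulting minimal models, following only the arcs already taken in $\rho$ by the descendants of $M$ yields a legal $Id$-run from $M$ whose final configuration is $\subseteq C_{2n}$, hence accepting. Prepending the $\sigma_i$-step produces an accepting $Id$-run of $\mathcal{A}_\Phi$ from $\{(\ell_\varphi,0)\}$ on the shifted suffix $\theta^{\geq i}=(\sigma_i,0)(\sigma_{i+1},\tau_{i+1}-\tau_i)\cdots(\sigma_n,\tau_n-\tau_i)$. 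Applying Proposition~\ref{PropOW} to $\varphi$ (the part of $\mathcal{A}_\Phi$ reachable from $\ell_\varphi$ being exactly $\mathcal{A}_\varphi$) gives $(\theta^{\geq i},1)\models\varphi$, and since $\theta^{\geq i}$ has the same letters and the same relative time-stamps as $\theta$ read from position $i$ (the modality only inspects $\tau_j-\tau_i$), this is equivalent to $(\theta,i)\models\varphi$.

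I expect the main obstacle to be the run-restriction step: one must check that deleting the states of $C_{2i}\setminus M$ together with all their descendants from the accepting suffix of $\rho$ leaves a genuine $Id$-run (each surviving state still fires a firable arc, guards being unaffected by the deletion) and that its last configuration, being a subset of the accepting $C_{2n}$, is still accepting; here the independence of copies and the monotonicity of $\models_I$ under $\subseteq$ do the work. As a self-contained alternative for the $U_I$ case, one can trace the single copy $(\ell_\varphi,0)$ forward along $\rho$: since $\ell_\varphi\notin F$ and $\rho$ accepts, this copy cannot survive to $C_{2n}$, so it must take the disjunct $x.\delta(\varphi_2,\cdot)\wedge x\in I$ at some first position $i'>i$, which forces $\tau_{i'}-\tau_i\in I$ and $C_{2i'}\models_{[0,0]}\delta(\varphi_2,\sigma_{i'})$, while at every earlier read the reproducing disjunct forces $C_{2k}\models_{[0,0]}\delta(\varphi_1,\sigma_k)$; the induction hypothesis then yields the witness $j=i'$ together with $(\theta,k)\models\varphi_1$ for all $i\leq k<i'$. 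The $\tilde U_I$ case is dual (the copy may legitimately persist, and one verifies the release obligation at each position whose delay falls in $I$), which is precisely why routing both through Proposition~\ref{PropOW} is the more economical option.
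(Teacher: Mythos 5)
Your proposal is correct, but there is nothing in the paper to compare it against: the paper states Lemma~\ref{claimOW} as a recalled result of \cite{OW07} and gives no proof at all, using it as a black box in the proof of Proposition~\ref{PropUntil}. Your derivation is thus a genuinely different route -- you prove the lemma from Proposition~\ref{PropOW} (itself taken on faith from \cite{OW05}) by structural induction on $\varphi$ (legitimate, since $\Phi$ is in negative normal form, so the only cases are atoms, boolean connectives, and $U_I/\tilde U_I$), with the temporal case handled by extracting a minimal model $M\subseteq C_{2i}$ of $\delta(\ell_\varphi,\sigma_i)$ wrt $[0,0]$, noting $\{(\ell_\varphi,0)\}\xrightarrow{\sigma_i}_{Id}M$, and restricting the accepting suffix of $\rho$ to the descendants of $M$. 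The restriction step is sound for exactly the reasons you give: in the $Id$-semantics each state fires its own arc, successor configurations are unions of the resulting minimal models, so the union over a subset of states is a legal successor of that subset, and any subset of the accepting $C_{2n}$ is accepting. What this buys is a proof that is self-contained relative to the results this paper already assumes; note, however, that in \cite{OW07} the dependency runs the other way (the claim is a stepping stone towards the correctness theorem you invoke), so your argument is valid inside this paper but could not replace the citation at the source. Two smaller points. First, rather than Proposition~\ref{PropOW} plus the identification of $\Aa_\varphi$ with the part of $\Aa_\Phi$ reachable from $\ell_\varphi$, you could directly invoke the paper's Lemma~\ref{LemSingleton} with $v_k=0$, which is stated for sub-formula locations inside $\Aa_\Phi$ and makes that identification unnecessary (the paper does use the same identification elsewhere, so your version is also acceptable). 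Second, your self-contained alternative for $U_I$ is slightly imprecise: the hypothesis $C_{2i}\models_{[0,0]}\delta(\varphi,\sigma_i)$ does not guarantee that a copy $(\ell_\varphi,0)$ is present in $C_{2i}$ -- it may instead be witnessed by the disjunct $x.\delta(\varphi_2,\sigma_i)\wedge x\in I$, in which case no copy is created, $0\in I$, and the witness is $j=i$ itself -- so ``first position $i'>i$'' should read $i'\geq i$, with $i'=i$ covering immediate satisfaction. This does not affect your main argument, which treats both disjuncts uniformly.
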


We can now prove Proposition~\ref{PropUntil}.

\noindent \textbf{Proposition \ref{PropUntil} : } \textit{Let $\Phi$ be an
MITL formula, let $K$ be a set of index and, $\forall k \in K$, let
$\Phi_{k} := \varphi_{1, k}U_{I_{k}} \varphi_{2, k}$ be sub-formulas of
$\Phi$. For all $k\in K$, let $\ell_{\Phi_{k}}$ be their associated
locations in $\Aa_\Phi$.  Let $\theta = (\bar{\sigma},\bar{\tau})$ be
a timed word and $J_{k} \in \mathcal{I}(\R^{+})$ closed.
The automaton $\mathcal{A}_{\Phi}$ $Id$-accepts $\theta$ from
configuration $\lbrace (\ell_{\Phi_{k}},J_{k})_{k \in K} \rbrace
\text{ iff } \forall k \in K$: $\exists m_{k} \geq 1$: $(\theta,m_{k})
\models \varphi_{2} \wedge \tau_{m_{k}} \in I_{k}-\inf(J_{k}) \wedge
\tau_{m_{k}} \in I_{k}-\sup(J_{k}) \wedge \forall 1 \leq m'_{k} <
m_{k} : (\theta,m'_{k}) \models \varphi_{1}$.}
\begin{proof}
Thanks to Lemma \ref{LemProp2}, we only need to prove the following.
Let $\Psi$ be an MITL formula, $\Phi := \varphi_{1}U_{I} \varphi_{2}$
a sub-formula of $\Psi$ and $\ell_{\Phi}$ its associated location in
$\Aa_\Psi$.  Let $\theta = (\bar{\sigma},\bar{\tau})$ be a timed word
and $J \in \mathcal{I}(\R^{+})$ closed. The automaton
$\Aa_\Psi$ $Id$-accepts $\theta$ from configuration
$\lbrace (\ell_{\Phi},J) \rbrace \text{ iff } \exists m \geq 1 :
(\theta,m) \models \varphi_{2} \wedge \tau_{m} \in I-\inf(J) \wedge
\tau_{m} \in I-\sup(J) \wedge \forall 1 \leq m' < m : (\theta,m')
\models \varphi_{1}$.

\medskip

\noindent ($\Rightarrow$) As automaton $\Aa_\Psi$
$Id$-accepts $\theta$ from configuration
$\lbrace(\ell_{\Phi},J)\rbrace$, there exists an accepting
$Id$-run of $\mathcal{A}_{\Psi}$ on $\theta$ from
$\lbrace(\ell_{\Phi},J)\rbrace$, say 
\[
\rho = C_{0}  \overset{t_{1}}{\rightsquigarrow} C_{1}
\overset{\sigma_{1}}{\longrightarrow_{Id}} C_{2}
\overset{t_{2}}{\rightsquigarrow} C_{3}
\overset{\sigma_{2}}{\longrightarrow_{Id}} \dots
\overset{t_{n}}{\rightsquigarrow} C_{2n-1}
\overset{\sigma_{n}}{\longrightarrow_{Id}} C_{2n},
\] 
where $C_{0} = \lbrace(\ell_{\Phi},J)\rbrace$ and $C_{2n}$ is
accepting.  For all i, when reading $\sigma_{i}$, two transitions can
be taken: either $x.\delta(\varphi_{2},\sigma) \wedge x \in I$ or
$x.\delta(\varphi_{1},\sigma) \wedge \varphi_{1} U_{I} \varphi_{2}
\wedge x \leq \sup(I)$.  Let $m$ be the first position in the run
where the transition $x.\delta(\varphi_{2},\sigma) \wedge x \in I$ is
taken. Such a position must exist because $\ell_{\Phi}$ is not an
accepting location but $\rho$ is an accepting $Id$-run. Then, for all
$m' < m$, when reading $\sigma_{m'}$, the transition
$x.\delta(\varphi_{1},\sigma_{j}) \wedge \varphi_{1} U_{I} \varphi_{2}
\wedge x \leq \sup(I)$ is taken : it does not reset clock copies that
stay in $\ell_{\Phi}$.  So, the part of configuration $C_{2m-1}$
associated with location $\ell_{\Phi}$ is
$\lbrace(\ell_{\Phi},J+\tau_{m})\rbrace$.  As the transition
$x.\delta(\varphi_{2},\sigma_{m}) \wedge x \in I$ is then taken,
$J+\tau_{m}$ must satisfy $x \in I$, i.e. (by definition of the
minimal model) $\forall v+\tau_{m} \in J+\tau_{m}, v+\tau_{m} \in I$,
i.e. : $\forall v \in J, \tau_{m} \in I-v$ and in particular (as J
closed) $\tau_{m} \in I-\inf(J) \wedge \tau_{m} \in I-\sup(J)$.
Moreover, as $\rho$ is an accepting $Id$-run, the part
$x.\delta(\varphi_{2},\sigma_{m})$ of the transition taken from
$\lbrace(\ell_{\Phi},J+\tau_{m})\rbrace$ corresponds to the fact that
$C_{2m} \models_{[0,0]} \delta(\varphi_{2},\sigma_{m})$, thanks to
Lemma \ref{claimOW}, we know it means that $(\theta,m) \models
\varphi_{2}$. In the same way, with the reading of $\sigma_{m'}$, for
$1 \leq m' < m$, the transition $x.\delta(\varphi_{1},\sigma_{m'})
\wedge \varphi_{1} U_{I} \varphi_{2} \wedge x \leq \sup(I)$ was taken.
As $\rho$ is an accepting $Id$-run, the part
$x.\delta(\varphi_{1},\sigma_{m'})$ of the transition taken from
$\lbrace(\ell_{\Phi},J+\tau_{m})\rbrace$ corresponds to the fact that
$C_{2m'} \models_{[0,0]} \delta(\varphi_{1},\sigma_{m'})$, thanks to
Lemma \ref{claimOW}, we know it means that $(\theta,m') \models
\varphi_{1}$. We conclude that $\exists m \geq 1 : (\theta,m) \models
\varphi_{2} \wedge \tau_{m} \in I-\inf(J) \wedge \tau_{m} \in
I-\sup(J) \wedge \forall 1 \leq m' < m : (\theta,m') \models
\varphi_{1}$.

\medskip

\noindent ($\Leftarrow$) In the sequel, we use the following notation.
Assume that $\theta = (\bar{\sigma},\bar{\tau})$, where $\bar{\sigma}
= \sigma_{1} \sigma_{2} \dots \sigma_{n} \text{ and } \bar{\tau} =
\tau_{1} \tau_{2} \dots \tau_{n}$.  For all $1\leq k \leq n$, we
denote by $\theta_{k} = (\bar{\sigma}_{k},\bar{\tau}_{k})$, where
$\bar{\sigma}_{k} = \sigma_{k} \sigma_{k+1} \dots \sigma_{n} \text{
  and } \bar{\tau}_{k} = \tau'_{1} \tau'_{2} \dots \tau'_{n-k}$ the
timed word such that $\forall 1 \leq i \leq n-k, \tau'_{i} =
\tau_{i+k} - \tau_{k}$.\\
We will
construct an accepting $Id$-run $\rho$ of $\mathcal{A}_{\Psi}$ on
$\theta$ from configuration $\lbrace(\ell_{\Phi},J)\rbrace$, say
$C_{0} = \lbrace(\ell_{\Phi},J)\rbrace
\overset{t_{1}}{\rightsquigarrow} C_{1}
\overset{\sigma_{1}}{\longrightarrow_{Id}} C_{2}
\overset{t_{2}}{\rightsquigarrow} C_{3}
\overset{\sigma_{2}}{\longrightarrow_{Id}} \dots
\overset{t_{n}}{\rightsquigarrow} C_{2n-1}
\overset{\sigma_{n}}{\longrightarrow_{Id}} C_{2n}$, where $C_{2n}$ is
accepting.  By hypothesis, $\exists m \geq 0$ such that (a)
$(\theta,m) \models \varphi_{2}$, (b) $\tau_{m} \in I-\inf(J) \wedge
\tau_{m} \in I-\sup(J)$ and (c) $\forall 1 \leq m' < m : (\theta,m')
\models \varphi_{1}$.  From $\ell_{\Phi}$ we have two possible
transitions $x.\delta(\varphi_{2},\sigma) \wedge x \in I$ and
$x.\delta(\varphi_{1},\sigma) \wedge \varphi_{1} U_{I} \varphi_{2}
\wedge x \leq \sup(I)$.  We construct $\rho$ in way it consists in
following the transition $x.\delta(\varphi_{1},\sigma_{m'}) \wedge
\varphi_{1} U_{I} \varphi_{2} \wedge x \leq \sup(I)$, $\forall 1 \leq
m' < m$, and the transition $x.\delta(\varphi_{2},\sigma_{m}) \wedge x
\in I$ reading $\sigma_{m}$.  We must prove that $\rho$ is an
accepting $Id$-run of $\mathcal{A}_{\Psi}$ on $\theta$.\\ Remark that
following transition $x.\delta(\varphi_{1},\sigma_{m'}) \wedge
\varphi_{1} U_{I} \varphi_{2} \wedge x \leq \sup(I)$, $\forall 1 \leq
m' < m$, in particular, we loop on $\Phi$ without reset of clock.  It
means that, $\forall 1 \leq m' < m$, $C_{2m'+1}(\ell_{\Phi}) = \lbrace
J+\tau_{m'}\rbrace$.  So, it is possible to take transition
$x.\delta(\varphi_{1},\sigma_{m'}) \wedge \varphi_{1} U_{I}
\varphi_{2} \wedge x \leq \sup(I)$ reading $\sigma_{m'}$ from
$\ell_{\Phi}$ because the interval associated to this location is then
$J+\tau_{m'}$ and satisfies the clock constraint $x \leq \sup(I)$ :
(b) implies that $\tau_{m'} < \tau_{m} \leq \sup(I) - \sup(J)$, so
$\sup(J) + \tau_{m'} \leq \sup(I)$, and so $\forall j+\tau_{m'} \in
J+\tau_{m'}, j+\tau_{m'} \leq \sup(I)$.  Moreover, we know that
$\forall 1 \leq m' < m, (\theta,m') \models \varphi_{1}$.  It means
that, $\forall 1 \leq m' < m$, the automaton
$\mathcal{A}_{\varphi_{1}}$ $Id$-accepts $\theta^{m'}$ from
$\lbrace(\varphi_{1, init},0)\rbrace$, i.e. there is an accepting
$Id$-run of $\mathcal{A}_{\varphi_{1}}$ on $\theta^{m'}$ taking
transition $x.\delta(\varphi_{1},\sigma_{m'})$ (the unique transition
we can take from location $\varphi_{1, init}$).  However, the
locations of $\mathcal{A}_{\varphi_{1}}$ in which leads
$x.\delta(\varphi_{1},\sigma_{m'})$ can be assimilated to the
locations of $\mathcal{A}_{\Phi}$ corresponding to the same formulas
(see definitions of such automata and their locations).  So, there is
also an accepting run of $\mathcal{A}_{\Phi}$ on $\theta^{m'}$ taking
transition $x.\delta(\varphi_{1},\sigma_{m'})$ ($\forall 1 \leq m' <
m$).  As transitions $x.\delta(\varphi_{1},\sigma_{m'}) \wedge
\varphi_{1} U_{I} \varphi_{2} \wedge x \leq \sup(I)$ loop on
$\ell_{\Phi}$, when reading $\sigma_{m}$, the interval $J+\tau_{m}$ is
still associated to location $\ell_{\Phi}$.  $\rho$ then consists in
taking transition $x.\delta(\varphi_{2},\sigma_{m}) \wedge x \in
I$. It is possible to take this transition reading $\sigma_{m}$
because $J+\tau_{m}$ satisfies the clock constraint $x \in I$ : as
$\tau_{m} \in I-\inf(J) \wedge \tau_{m} \in I-\sup(J)$ and J is an
interval, $\forall j \in J, \tau_{m} \in I-j$, i.e. $\forall j \in J,
j+\tau_{m} \in I$ and so $\forall v \in J+\tau_{m}, v \in I$.
Moreover, we know that $(\theta,m') \models \varphi_{2}$.  It means
that the automaton $\mathcal{A}_{\varphi_{2}}$ $Id$-accepts
$\theta^{m}$ from $\lbrace(\varphi_{2, init},0)\rbrace$, i.e. there is
an accepting $Id$-run of $\mathcal{A}_{\varphi_{2}}$ on $\theta^{m}$
taking transition $x.\delta(\varphi_{2},\sigma_{m})$ (the unique
transition we can take from location $\varphi_{2, init}$). By the same
argument than for $\varphi_{1}$, there is also an accepting run of
$\mathcal{A}_{\Phi}$ on $\theta^{m'}$ taking transition
$x.\delta(\varphi_{2},\sigma_{m})$ which enables to completely
construct $\rho$.
\end{proof}

\begin{proposition}
\label{PropTilde}
Let $K$ be a set of index, $\forall k \in K, \Phi_{k} := \varphi_{1,
  k} \tilde{U}_{I_{k}} \varphi_{2, k}$ be MITL formulas,
$\ell_{\Phi_{k}}$ the associated locations in an \ATA $\mathcal{A}$ of
OW, representing an MITL formula, and $J_{k} \in \mathcal{I}(\R^{+})$.
Let $\theta = (\bar{\sigma},\bar{\tau})$ be a timed word. The
automaton $\mathcal{A}$ Id-accepts $\theta$ from configuration
$\lbrace (\ell_{\Phi_{k}},J_{k})_{k \in K} \rbrace \text{ iff }
\forall k \in K, \forall v \in J_{k}$, the automaton $\mathcal{A}$
accepts $\theta$ from configuration
$\lbrace(\ell_{\Phi},[v,v])\rbrace$ (i.e. : $\forall k \in K, \forall
v \in J_{k}, (\theta,1) \models \varphi_{1, k} \tilde{U}_{I_{k}-v_{k}}
\varphi_{2, k}$).
\end{proposition}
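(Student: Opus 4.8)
The plan is to mirror the structure of the proof of Proposition~\ref{PropUntil}, but to exploit the \emph{universal}, safety-like nature of $\tilde{U}$, which is precisely what makes the interval semantics decompose \emph{pointwise} here (in contrast with the single-witness condition one gets for $U$). First I would invoke Lemma~\ref{LemProp2} to reduce the claim about the whole configuration $\{(\ell_{\Phi_k},J_k)_{k\in K}\}$ to a single location: it suffices to show that, writing $\Phi=\varphi_1\tilde{U}_I\varphi_2$ and $\ell_\Phi$ for its location, $\mathcal{A}$ $Id$-accepts $\theta$ from $\{(\ell_\Phi,J)\}$ iff for all $v\in J$, $\mathcal{A}$ $Id$-accepts $\theta$ from $\{(\ell_\Phi,[v,v])\}$ (the parenthetical reformulation in terms of $\varphi_1\tilde{U}_{I-v}\varphi_2$ then follows from Lemma~\ref{LemSingleton}). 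The only clock guards appearing on the arcs leaving $\ell_\Phi$ are $x\notin I$ and $x>\sup(I)$, and I would record the elementary monotonicity fact that if an interval $K$ satisfies such a guard (i.e. $K\cap I=\emptyset$, resp. $\inf(K)>\sup(I)$), then so does every singleton $[v,v]$ with $v\in K$.

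For the ($\Rightarrow$) direction I would take an accepting $Id$-run from $\{(\ell_\Phi,J)\}$ and \emph{project} it onto each point $v\in J$. Along such a run $\ell_\Phi$ carries at each step a single time-shifted copy $J+t$ (the location only loops on itself, while the branches into the sub-automata for $\varphi_1,\varphi_2$ reset the clock to $0$ and hence stay singular in the $Id$-semantics), so one simply fires the same arcs from $[v+t,v+t]$. By the monotonicity fact every guard satisfied by $J+t$ is satisfied by $v+t\in J+t$, so the arcs remain firable, and the spawned $\varphi_1$/$\varphi_2$ copies are literally the same accepting sub-runs; hence the projected run accepts $\theta$ from $[v,v]$.

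The ($\Leftarrow$) direction is the heart of the argument, and I expect it to be the main obstacle. Assuming $(\theta,1)\models\varphi_1\tilde{U}_{I-v}\varphi_2$ for every $v\in J$, I would build one interval run from $\{(\ell_\Phi,J)\}$ by looping on $\Phi$ (keeping $\ell_\Phi$ at $J+\tau_j$) and, at each position $j$ where $J+\tau_j$ meets $I$, firing the $x.\delta(\varphi_2,\sigma)$ branch of the first conjunct; this is carried on until the \emph{first} index $k_0$ with $(\theta,k_0)\models\varphi_1$ (provided $\tau_{k_0}$ lies in or before the union window $\bigcup_{v\in J}(I-v)=[\inf(I)-\sup(J),\sup(I)-\inf(J)]$), where I fire $x.\delta(\varphi_1,\sigma)$ and \emph{remove} the $\ell_\Phi$ copy, or else until $J+\tau_j$ entirely passes $I$, where the empty minimal model of $x\notin I\wedge x>\sup(I)$ removes the copy. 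The delicate point is soundness of the $\varphi_2$-branches taken before $k_0$: for a window position $j\le k_0$ no $\varphi_1$ holds strictly before $j$ (as $k_0$ is the first such index), so the $\tilde{U}_{I-v}$ hypothesis, read at any $v$ whose window $I-v$ contains $j$, forces $(\theta,j)\models\varphi_2$, and Lemma~\ref{claimOW} together with the correspondence between sub-automaton locations guarantees that each spawned copy accepts. Crucially, the single release at $k_0$ is \emph{uniform}: it discharges every position $>k_0$ for \emph{all} $v$ at once, while the checks at positions $\le k_0$ cover each per-point window $I-v\subseteq[\inf(I)-\sup(J),\sup(I)-\inf(J)]$; this is exactly why a single interval run suffices and the decomposition is pointwise.

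I would close by noting that the reverse reading used throughout — that a successful $\varphi_2$- or $\varphi_1$-check witnesses satisfaction of the corresponding subformula — is precisely Lemma~\ref{claimOW} and Proposition~\ref{PropOW}, and that the bookkeeping for the $\tilde{U}$ guards parallels the $U$ reasoning of Proposition~\ref{PropUntil}. The genuine difficulty is isolating the first release index $k_0$ and verifying that the one interval run it drives meets the obligation of \emph{every} point $v\in J$ simultaneously; the remainder is routine case analysis on the position of $J+\tau_j$ relative to $I$.
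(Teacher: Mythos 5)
Your proposal is correct and follows essentially the same route as the paper's proof: reduction to a single $\tilde{U}$-location via Lemma~\ref{LemProp2}, a projection/simulation argument (as in Proposition~\ref{inclu}) for the left-to-right direction, and, for the converse, the same case analysis on the first $\varphi_1$-release position relative to the union window $\bigcup_{v\in J}(I-v)$ (the paper's set $K$), with the no-release case closed by the arc guarded by $x\notin I\wedge x>\sup(I)$ and the in-window release firing $x.\delta(\varphi_2,\sigma)\wedge x.\delta(\varphi_1,\sigma)$. One small citation slip: the fact that the spawned $\varphi_1$/$\varphi_2$ copies accept follows from Proposition~\ref{PropOW} (plus the correspondence of sub-automaton locations), not from Lemma~\ref{claimOW}, which gives the converse implication; this does not affect the soundness of your argument.
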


\begin{proof}
Thanks to Lemme \ref{LemProp2}, we only need to prove the following.
Let $\Phi := \varphi_{1, k} \tilde{U}_{I_{k}} \varphi_{2, k}$ be MITL formulas, $\ell_{\Phi}$ the associated location in an \ATA $\mathcal{A}$ of OW, representing an MITL formula, and $J \in \mathcal{I}(\R^{+})$.  Let $\theta = (\bar{\sigma},\bar{\tau})$ be a timed word. The automaton $\mathcal{A}$ $Id$-accepts $\theta$ from configuration $\lbrace (\ell_{\Phi},J) \rbrace \text{ iff } \forall v \in J_{k}$, the automaton $\mathcal{A}$ accepts $\theta$ from configuration $\lbrace(\ell_{\Phi},[v,v])\rbrace$ (i.e. : $\forall v \in J_{k}, (\theta,1) \models \varphi_{1, k} \tilde{U}_{I_{k}-v_{k}} \varphi_{2, k}$)..\\
($\Rightarrow$) Let $k \in K$ and $v \in J_{k}$, we must prove the automaton $\mathcal{A}$ accepts $\theta$ from configuration $\lbrace(\ell_{\Phi},v)\rbrace$.  This proof is similar to proof of Proposition \ref{inclu}, the unique difference is that the initial state $D_{0}$ is now $\lbrace(\ell_{\Phi},v)\rbrace$.\\
($\Leftarrow$) We have an accepting $Id$-run $\rho_{v}$ of $\mathcal{A}$ on $\theta$ from each configuration $\lbrace(\ell_{\Phi},v)\rbrace$, say $C_{0}^{v} = \lbrace(\ell_{\Phi},v)\rbrace \overset{t_{1}}{\rightsquigarrow} C_{1}^{v} \overset{\sigma_{1}}{\longrightarrow_{Id}} C_{2}^{v} \overset{t_{2}}{\rightsquigarrow} \dots \overset{t_{n}}{\rightsquigarrow} C_{2n-1}^{v} \overset{\sigma_{n}}{\longrightarrow_{Id}} C_{2n}^{v}$.  From the transitions taken along these runs, we can deduce the instants in which $\varphi_{1, k}$ and $\varphi_{2, k}$ are verified (See proof of Lemma \ref{LemSingleton}). We will construct an accepting $Id$-run $\rho$' of $\mathcal{A}$ on $\theta$ from configuration $\lbrace(\ell_{\Phi},J_{k})\rbrace$, say C$_{0} = \lbrace(\ell_{\Phi},J_{k})\rbrace \overset{t_{1}}{\rightsquigarrow} C_{1} \overset{\sigma_{1}}{\longrightarrow_{Id}} C_{2} \overset{t_{2}}{\rightsquigarrow} \dots \overset{t_{n}}{\rightsquigarrow} C_{2n-1} \overset{\sigma_{n}}{\longrightarrow_{Id}} C_{2n}$.  Remark that the six transitions we can take on this run from $\ell_{\Phi}$ are : "$x.\delta(\varphi_{2, k},\sigma_{i}) \wedge x.\delta(\varphi_{1, k},\sigma_{i})$", "$x.\delta(\varphi_{2, k},\sigma_{i}) \wedge \varphi_{1, k} \tilde{U}_{I_{k}} \varphi_{2, k}$", "$x.\delta(\varphi_{2, k},\sigma_{i}) \wedge x > I_{k}$", "$x \notin I_{k} \wedge x.\delta(\varphi_{1, k},\sigma_{i})$", "$x \notin I_{k} \wedge \varphi_{1, k} \tilde{U}_{I_{k}} \varphi_{2, k}$" and "$x \notin I_{k} \wedge x > I_{k}$".  So, as long as a transition containing "$\varphi_{1, k} \tilde{U}_{I_{k}} \varphi_{2, k}$" is taken, the clock present in $\ell_{\Phi}$ is not reset an the part of configurations $C_{2i}$ associated to $\ell_{\Phi}$ will be $\lbrace(\ell_{\Phi},J_{k}+\tau_{i})\rbrace$ (assuming $\tau_{0} =0$). We distinguish several cases to construct $\rho$' :
\begin{itemize}
\item \underline{if $\varphi_{2, k}$ is verified on each reading of a letter in $K :=$} $\underset{v \in J}{\bigcup}$ \underline{$I_{k}-v$ :} then $\rho$' consists of taking the transition "$x \notin I_{k} \wedge \varphi_{1, k} \tilde{U}_{I_{k}} \varphi_{2, k}$" on each reading of a letter in an instant $\tau_{i} < K$.  In such instants, the part of configuration associated to $\ell_{\Phi}$ we are in is $\lbrace(\ell_{\Phi},J_{k}+\tau_{i})\rbrace$ and we well satisfy $\forall u \in J_{k}+\tau_{i}, u \notin I_{k}$ ; else $\exists u \in J_{k}+\tau_{i}$ such that $u \in I$ and so $u-\tau_{i} \in J_{k}$ and $u - (u- \tau_{i}) = \tau_{i} \in K$, what contradict our hypothesis. On the other hand, $\rho$' consists of taking the transition "$x.\delta(\varphi_{2, k},\sigma_{i}) \wedge \varphi_{1, k} \tilde{U}_{I_{k}} \varphi_{2, k}$" on each reading of a letter in an instant in $K$.  Finally, on the first reading of a letter after $K$, say in $\tau_{j} > K$, $\rho$' consists of taking the transition "$x \notin I_{k} \wedge x > I_{k}$".  It is possible because, then, the part of configuration associated to $\ell_{\Phi}$ we are in is $\lbrace(\ell_{\Phi},J_{k}+\tau_{j})\rbrace$ and $\forall u \in J_{k}+\tau_{j}$ : $u > I_{k}$.  To prove it, suppose that $\exists u \in J_{k}+\tau_{j} : u < I_{k} \text{ or } u \in I_{k}$.  On the first hand, if $u <I$, as $u\in J_{k}+\tau_{j}$, $\exists v \in J : u = v+\tau_{j} < I_{k}$, i.e. : $\exists v \in J : \tau_{j} < I_{k}-v$, what contradicts that $\tau_{j} > K$.  On the second hand, if $u \in I_{k}$, as $u\in J_{k}+\tau_{j}$, $\exists v \in J_{k} : u = v+\tau_{j} \in I_{k}$, i.e. : $\exists v \in J_{k} : \tau_{j} \in I_{k}-v$, what contradicts that $\tau_{j} > K$.  $\rho$' is accepting thanks to the hypothesis of this case.
\item \underline{else, $\varphi_{1, k}$ is verified in a certain
    instant in $L=\lbrace u' \vert \exists u \in K : 0 \leq u' \leq u
    \rbrace$}.\break Then, there exists a smallest instant $\tau_{i} \in L$
  such that $\varphi_{1}$ is satisfied in $\tau_{i}$.  Moreover, as
  for each $v \in J_{k}, (\theta,1) \models \varphi_{1, k}
  \tilde{U}_{I_{k}-v} \varphi_{2, k}$, each instant $\tau_{j}$ with $0
  \leq j \leq i$ and $\tau_{j} \in K$ is an instant in which
  $\varphi_{2}$ must be satisfied. We must again distinguish two cases:
\begin{itemize}
\item If $\tau_{i} < K$, then $\rho$' consists of taking the
  transition "$x \notin I_{k} \wedge \varphi_{1, k} \tilde{U}_{I_{k}}
  \varphi_{2, k}$" on each reading of a letter in an instant
  $\tau_{j}$ with $0\leq j < i$ (in such instants, we well satisfy
  $\forall u \in J_{k}+\tau_{j}$, $u \notin I_{k}$ because $\tau_{j}
  \notin K$) and taking the transition "$x\notin I_{k} \wedge
  x.\delta(\varphi_{1, k},\sigma_{i})$" with the reading of
  $\sigma_{i}$ (it is possible because $\tau_{i} \notin K$).  This run
  is accepting because the transitions chosen only verify the
  satisfaction of $\varphi_{1, k}$ in an instant in which we know this
  formula is verified.
\item If $\tau_{i} \in K$, then $\rho$' consists of : taking the
  transition "$x \notin I_{k} \wedge \varphi_{1, k} \tilde{U}_{I_{k}}
  \varphi_{2, k}$" on each reading of a letter in an instant
  $\tau_{j}$ with $0\leq j < i$ and $\tau_{j} \notin K$ (in such
  instants, we well satisfy $\forall u \in J_{k}+\tau_{j}$, $u\notin
  I_{k}$ because $\tau_{j} \notin K$) ; taking the transition
  "$x.\delta(\varphi_{2, k},\sigma_{i}) \wedge \varphi_{1, k}
  \tilde{U}_{I_{k}} \varphi_{2, k}$" on each reading of a letter in an
  instant $\tau_{j}$ with $0\leq j < i$ and $\tau_{j} \in K$ (we know
  $\varphi_{2, k}$ is verified in such instants) and taking the
  transition "$x.\delta(\varphi_{2, k},\sigma_{i}) \wedge
  x.\delta(\varphi_{1, k},\sigma_{i})$" with the reading of
  $\sigma_{i}$ (it is possible because as $\tau_{i} \in K$,
  $\varphi_{2, k}$ is satisfied in this instant).  This run is
  accepting because the transitions chosen always verify the
  satisfaction of $\varphi_{1, k}$, or $\varphi_{2, k}$, in instants
  in which we know these formulas are verified.
\end{itemize}
\end{itemize}
\end{proof}

Thanks to the previous results, we can now prove Theorem
\ref{ThmBorne}. We first recall the definition of $\mer{S}$. Let
$S=\{(\ell,I_0),(\ell,I_1),\ldots,(\ell,I_m)\}$ be a set of states of
$\Aa$, all in the same location $\ell$, with, as usual $I_0 < I_1 <
\cdots < I_m$. Then, we let $\mer{S}=\{(\ell, [0,sup(I_1)]), (\ell,
I_2),\ldots, (\ell,I_m)\}$ if $I_0=[0,0]$ and $\mer{S}=S$ otherwise,
i.e., $\mer{S}$ is obtained from $S$ by \emph{grouping} $I_0$ and
$I_1$ iff $I_0=[0,0]$, otherwise $\mer{S}$ does not modify
$S$.

\medskip

\noindent \textbf{Theorem \ref{ThmBorne} :} {\it For all MITL formula
  $\Phi$, $\appfunc{\Phi}$ is a \emph{bounded approximation function}
  and $L_{\appfunc{\Phi}}(\mathcal{A}_{\Phi}) =
  L(\mathcal{A}_{\Phi})=\sem{\Phi}$.} 
\begin{proof}
  The definition of $\appfunc{\Phi}$ guarantees it is a bounded
  approximation function. The equality
  $L(\mathcal{A}_{\Phi})=\sem{\Phi}$ have already been established and
  Theorem \ref{inclu} proves the inclusion
  $L_{\appfunc{\Phi}}(\mathcal{A}_{\Phi}) \subseteq
  L(\mathcal{A}_{\Phi})$.  In the sequel, we so present a prove of the
  last needed inclusion : $L_{\appfunc{\Phi}}(\mathcal{A}_{\Phi})
  \supseteq L(\mathcal{A}_{\Phi})$.
  \medskip

  Let $\theta = (\bar{\sigma},\bar{\tau}) \in L(\mathcal{A}_{\Phi})$.
  There is an accepting $Id$-run $\rho$ of $\mathcal{A}_{\Phi}$ on
  $\theta$, say $C_{0} = \lbrace(\ell_{0},0)\rbrace
  \overset{t_{1}}{\rightsquigarrow} C_{1}
  \overset{\sigma_{1}}{\longrightarrow_{Id}} C_{2}
  \overset{t_{2}}{\rightsquigarrow} C_{3}
  \overset{\sigma_{2}}{\longrightarrow_{Id}} \dots
  \overset{t_{n}}{\rightsquigarrow} C_{2n-1}
  \overset{\sigma_{n}}{\longrightarrow_{Id}} C_{2n}$.  We must find an
  $\appfunc{\Phi}$-accepting $\appfunc{\Phi}$-run of
  $\mathcal{A}_{\Phi}$ on $\theta$ (for $k = \max \lbrace M(\Phi),
  2. \vert L \vert \rbrace$).  Our proof is divided in two parts.  In
  the first one, we will construct an accepting $f_{\theta}$-run
  $\rho$' of $\mathcal{A}_{\Phi}$ on $\theta$ (forming intervals
  following result of Proposition \ref{PropUntil}), for a certain
  approximation function $f_{\theta}$ (later, we will show that
  $f_{\theta}$ corresponds to $\appfunc{\Phi}$).  This run will be
  $D_{0} = \lbrace(\ell_{0},J)\rbrace
  \overset{t_{1}}{\rightsquigarrow} D_{1}
  \overset{\sigma_{1}}{\longrightarrow_{f_{\theta}}} D_{2}
  \overset{t_{2}}{\rightsquigarrow} D_{3}
  \overset{\sigma_{2}}{\longrightarrow_{f_{\theta}}} \dots
  \overset{t_{n}}{\rightsquigarrow} D_{2n-1}
  \overset{\sigma_{n}}{\longrightarrow_{f_{\theta}}} D_{2n}$.
  Simultaneously, we will prove that, the way we group the clock
  copies with $f_{\theta}$, each location associated with a formula
  $\varphi_{1} U_{I} \varphi_{2}$ will contain at most $4 . \lceil
  \frac{\inf(I)}{\vert I \vert} \rceil +2$ clock copies all along
  $\rho$'.  In the second step, we will deduce from the last point
  that $\rho$' is an accepting $\appfunc{\Phi}$-run.

  In the sequel, we will use the following notation.  Assume that
  $\theta = (\bar{\sigma},\bar{\tau})$, where $\bar{\sigma} =
  \sigma_{1} \sigma_{2} \dots \sigma_{n} \text{ and } \bar{\tau} =
  \tau_{1} \tau_{2} \dots \tau_{n}$.  For all $1\leq k \leq n$, we
  denote by $\theta_{k} = (\bar{\sigma}_{k},\bar{\tau}_{k})$, where
  $\bar{\sigma}_{k} = \sigma_{k} \sigma_{k+1} \dots \sigma_{n} \text{
    and } \bar{\tau}_{k} = \tau'_{1} \tau'_{2} \dots \tau'_{n-k}$ the
  timed word such that $\forall 1 \leq i \leq n-k, \tau'_{i} =
  \tau_{i+k} - \tau_{k}$.

  \underline{\textbf{Step 1 :}} We construct $\rho$' inductively,
  using $\rho$.  We reduce the number of clock copies in the
  configuration reached after each reading of a letter.  Our method is
  to group the last clock copy associated with a location $\ell_{i}$
  with the previous interval associated with this location if it is
  possible, i.e. if we still have an accepting run thanks to
  Proposition~\ref{PropUntil} (this corresponds to the $\mer{}$
  function).  In the same time, we will prove that, in each
  configuration $D = \underset{\ell_{i} \in L}{\bigcup} D(\ell_{i})$
  reached with $\rho'$, if $\ell_{i}$ corresponds to a formula
  $\varphi_{1} U_{I_{i}} \varphi_{2}$, $\nbclocks{D(\ell_{i})} \leq 4
  . \lceil \frac{\inf(I_{i})}{\vert I_{i} \vert} \rceil +2$.

\medskip

  The induction hypothesis (at step k+1) is that we have an accepting
  run on~$\theta$:
\[
D_{0} \overset{t_{1}}{\rightsquigarrow}
  D_{1} \cdots D_{2k-1} \overset{\sigma_{k}}{\longrightarrow_{Id}} D_{2k}
  \overset{t_{k+1}}{\rightsquigarrow} E_{2k+1} \cdots E_{2n-1}
  \overset{\sigma_{n}}{\longrightarrow_{Id}} E_{2n},
\]
 such that $\forall 0 \leq j \leq 2k, \forall \ell_{i} \in L$
 corresponding to a formula $\varphi_{1} U_{I_{i}} \varphi_{2}$, we
 have that $\nbclocks{D_{j}(\ell_{i})} \leq 4 . \lceil
 \frac{\inf(I_{i})}{\vert I_{i} \vert} \rceil +2$ and $D_{2k}
 \overset{t_{k+1}}{\rightsquigarrow} E_{2k+1} \cdots E_{2n-1}
 \overset{\sigma_{n}}{\longrightarrow_{Id}} E_{2n}$ is an accepting
 $Id$-run on $\theta^{k+1}$.  Thanks to this hypothesis, we will show
 how to build
\[D_{2k}
  \overset{t_{k+1}}{\rightsquigarrow} D_{2k+1}
  \overset{\sigma_{k+1}}{\longrightarrow_{Id}} D_{2k+2}
  \overset{t_{k+2}}{\rightsquigarrow} F_{2k+3} \cdots F_{2n-1}
  \overset{\sigma_{n}}{\longrightarrow_{Id}} F_{2n}\] 
such that 
\begin{enumerate}
\item[(i)] $D_{0} \overset{t_{1}}{\rightsquigarrow} D_{1} \cdots
D_{2k} \overset{\sigma_{k}}{\longrightarrow_{Id}} D_{2k+1}
\overset{\sigma_{k+1}}{\longrightarrow} D_{2k+2}
\overset{t_{k+2}}{\rightsquigarrow} F_{2k+3} \cdots F_{2n-1}
\overset{\sigma_{n}}{\longrightarrow_{Id}} F_{2n}$ is an accepting run
on $\theta$,
\item[(ii)] $\forall \ell_{i} \in L$ corresponding to a formula
$\varphi_{1} U_{I_{i}} \varphi_{2}$ : $\nbclocks{D_{2k+2}(\ell_{i})}
\leq 4 . \lceil \frac{\inf(I_{i})}{\vert I_{i} \vert} \rceil +2$,
\item[(iii)] $D_{2k+2} \overset{t_{k+2}}{\rightsquigarrow} F_{2k+3} \cdots
F_{2n-1} \overset{\sigma_{n}}{\longrightarrow_{Id}} F_{2n}$ is an
accepting $Id$-run on $\theta^{k+1}$.
\end{enumerate}

  \underline{Basis :} (k=0) we define $D_{0}
  = C_{0} = \lbrace(\ell_{0},[0,0])\rbrace$.  We still have an
  accepting $Id$-run of $\mathcal{A}_{\Phi}$ on $\theta$ : $\rho$.
  (The number of copies in $\ell_{0}$ will be discussed later because
  this location does not always correspond to a formula $\varphi_{1}
  U_{I_{i}} \varphi_{2}$.) 

  \underline{Induction :} (k+1)\\
  We know there is an accepting $Id$-run of $\mathcal{A}_{\Phi}$ from
  $D_{2k}$ on $\theta^{k+1}$, say the two first steps of this $Id$-run
  are : $D_{2k} \overset{t_{k+1}}{\rightsquigarrow} E_{2k+1}
  \overset{\sigma_{k+1}}{\longrightarrow_{Id}} E_{2(k+1)}$.  We define
  configuration $D_{2k+1}$ of $\rho$' as $D_{2k+1} := E_{2k+1}$. As
  $\forall \ell_{i} \in L$ corresponding to a formula $\varphi_{1}
  U_{I_{i}} \varphi_{2}$, $\nbclocks{D_{2k}(\ell_{i})} \leq 4 . \lceil
  \frac{\inf(I_{i})}{\vert I_{i} \vert} \rceil +2$ (induction
  hypothesis), we also have $\nbclocks{D_{2k+1}(\ell_{i})} \leq 4
  . \lceil \frac{\inf(I_{i})}{\vert I_{i} \vert} \rceil +2$.  Assume
  that $E_{2k+2}(\ell_{i}) = \lbrace J^{i}_{1}, ..., J^{i}_{m_{i}}
  \rbrace$. We define $D_{2k+2}$ of $\rho$' as $f_{\theta}(E_{2k+2}) =
  \underset{\ell_{i} \in L}{\bigcup} f_{\theta}(E_{2k+2}(\ell_{i}))$,
  where $\forall \ell_{i} \in L, f_{\theta}(E_{2k+2}(\ell_{i}))$ is
  defined as follows.
  \begin{align*}
    f_{\theta}(E_{2k+2}(\ell_{i})) &=
    \left\{
      \begin{array}{ll}
        \mer{E_{2k+2}(\ell_{i})} &\text{if } \exists m \geq 1 \text{ such that : } \; (\theta^{k+2},m) \vDash \varphi_{2}\\
        &\phantom{if }\wedge \tau'_{m} \in I_{i} - sup(J_{2}^{i}) \wedge \tau'_{m} \in I_{i}\\
        &\phantom{if }\wedge ( \forall 1 \leq m' < m : (\theta^{k+2},m') \models \varphi_{1} )\\
        E_{2k+2}(\ell_{i}) &\text{otherwise}
      \end{array}
    \right.
  \end{align*}
  Where, as defined above, 
  $$
  \mer{E_{2k+2}(\ell_{i})} =
  \lbrace(\ell_{i}, [0, \sup(J_{2}^{i})] ), (\ell_{i},J_{3}^{i}),
  (\ell_{i},J_{4}^{i}), \dots, (\ell_{i},J_{m_{i}}^{i}) \rbrace
  $$

We must prove there is an accepting $Id$-run of
$\mathcal{A}_{\Phi}$ from $D_{2.(k+1)} = f_{\theta}(E_{2k+2}) =
\underset{\ell_{i} \in L}{\bigcup} f(E_{2k+2}(\ell_{i}))$ on
$\theta^{k+2}$.  Let $\ell_{i} \in L$, thanks to Proposition~\ref{PropUntil}, it is sufficient to prove that there is an
accepting $Id$-run of $\mathcal{A}_{\Phi}$ on $\theta^{k+2}$ from
$D_{2.(k+1)}(\ell_{i}) := f_{\theta}(E_{2k+2}(\ell_{i}))$. If
$f_{\theta}(E_{2k+2}(\ell_{i})) = E_{2k+2}(\ell_{i})$, the
accepting $Id$-run given by induction hypothesis on
$E_{2k+2}(\ell_{i})$ can always be used. Else,
%we are in case 1. or 3.
$$
f_{\theta}(E_{2k+2}(\ell_{i})) = \lbrace (\ell_{i}, [0,
\sup(J_{2}^{i})] ), (\ell_{i},J_{3}^{i}), (\ell_{i},J_{4}^{i}), \dots,
(\ell_{i},J_{m_{i}}^{i}) \rbrace
$$ 
As in this case $E_{2k+2}(\ell_{i})$ was $\lbrace(\ell_{i},[0,0]),
(\ell_{i},J_{2}^{i}), (\ell_{i},J_{3}^{i}), (\ell_{i},J_{4}^{i}),
\dots,\break (\ell_{i},J_{m_{i}}^{i})\rbrace$, the accepting $Id$-run
given by induction hypothesis can be used from $\lbrace
(\ell_{i},J_{3}^{i}),$ $(\ell_{i},J_{4}^{i}), \dots,
(\ell_{i},J_{m_{i}}^{i})\rbrace \subseteq
f_{\theta}(E_{2k+2}(\ell_{i}))$ (thanks to Proposition
\ref{PropUntil}) and we only need to prove there is an accepting
$Id$-run of $\mathcal{A}_{\Phi}$ on $\theta^{k+2}$ from\break
$\lbrace(\ell_{i},[0, \sup(J_{2}^{i})]\rbrace$.

As in this case $\exists m \geq 1 \text{ such that : } \;
(\theta^{k+2},m) \vDash \varphi_{2} \wedge \tau'_{m} \in I_{i} -
sup(J_{2}^{i}) \wedge \tau'_{m} \in I_{i} \wedge ( \forall 1 \leq m' <
m : (\theta^{k+2},m') \models \varphi_{1} )$, we can conclude thanks
to Proposition \ref{PropUntil}.
\medskip

We must now show that, the way we grouped clock copies with
$f_{\theta}$, $\forall \ell_{i} \in L$ corresponding to a formula
$\varphi_{1} U_{I_{i}} \varphi_{2}$, $\nbclocks{D_{2k+2}(\ell_{i})}
\leq 4 . \lceil \frac{\inf(I_{i})}{\vert I_{i} \vert} \rceil +2$.  We
prove it by contradiction.

Let us suppose that $\nbclocks{D_{2k+2}(\ell_{i})} > 4 . \lceil
\frac{\inf(I_{i})}{\vert I_{i} \vert} \rceil +2$, for a certain
location $\ell_{i}$ corresponding to formula $\varphi_{1} U_{I_{i}}
\varphi_{2}$.  We so have more than $2.\lceil \frac{\inf(I_{i})}{\vert
  I_{i} \vert} \rceil +1$ intervals associated with $\ell_{i}$ in
$D_{2k +2}$, i.e : $D_{2k+2}(\ell_{i}) = \lbrace(\ell_{i},J_{1}^{i}),
(\ell_{i},J_{2}^{i}), \dots, (\ell_{i},J_{m_{i}}^{i})\rbrace$, for a
certain $m_{i} > 2.\lceil \frac{\inf(I_{i})}{\vert I_{i} \vert} \rceil
+1$.  The way we grouped clock copies with $f_{\theta}$, we know that
each interval $J_{m_{i}}^{j}$, for $1 \leq j \leq m_{i}$, satisfies
the following property :
\begin{eqnarray}
\label{eq1}
\exists k _{j} \geq 1 \text{ such that : } (\theta^{k+1}, k_{j})
\vDash \varphi_{2} \; \wedge \; \tau'_{k_{j}} \in I_{i} -
\sup(J^{i}_{j}) \nonumber\\ \wedge \; \tau'_{k_{j}} \in I_{i} -
\inf(J^{i}_{j}) \; \wedge \; ( \forall 1 \leq k' < k_{j},
(\theta^{k+1},k') \vDash \varphi_{1} ).
\end{eqnarray}
(If it is not the case anymore, $\rho$' would not be an accepting
run.) Moreover, $\forall 1 < j \leq m_{i}$, we have the following
property %$(^{*2}j)$
:
\begin{eqnarray} 
  \label{eq2}
  \forall k_{m} \geq 1 : (\theta^{k+1}, k_{m}) \nvDash \varphi_{2} \; \vee \; \tau'_{k_{m}} \notin I_{i} - \sup(J_{j}^{i}) \; \vee \; \tau'_{k_{m}} \notin I_{i} - \sup(J_{j-1}^{i}) \nonumber\\
  \vee \; \exists 1 \leq k' < k_{m}, (\theta^{k+1},k') \nvDash \varphi_{1}.
\end{eqnarray}
(If it is not the case, $\sup(J_{j-1}^{i})$ would have been grouped
with $J_{j}^{i}$.)

We first claim that $\forall 3 \leq j \leq m_{i}$,
$\sup(J_{j}^{i})-\sup(J_{j-2}^{i}) \geq \vert I_{i} \vert$.  We will
prove it by contradiction.  Let $j^\star$ be s.t. $3 \leq j^\star \leq
m_{i}$ and suppose that $\sup(J_{j^\star}^{i})-\sup(J_{j^\star-2}^{i})
< \vert I_{i} \vert$, i.e : $\sup(J_{j^\star}^{i}) < \vert I_{i} \vert
+\sup(J_{j^\star-2}^{i})$.  Then $(I_{i}-\sup(J_{j^\star}^{i})) \cap
(I_{i}-\sup(J_{j^\star-2}^{i})) \neq \emptyset$ because these
intervals have the same size, moreover, as $\sup(J_{j^\star}^{i}) >
\sup(J_{j^\star-2}^{i})$, $\inf(I_{i}-\sup(J_{j^\star}^{i}))) <
\inf((I_{i}-\sup(J_{j^\star-2}^{i}))$ and finally :
$\sup(I_{i}-\sup(J_{j^\star}^{i})) = \sup(I_{i})-\sup(J_{j^\star}^{i})
> \sup(I_{i}) -(\vert I_{i} \vert +\sup(J_{j^\star-2}^{i})) =
\sup(I_{i}) -\sup(I_{i}) +\inf(I_{i}) -\sup(J_{j^\star-2}^{i}) =
\inf(I_{i})-\sup(J_{j^\star-2}^{i}) =
\inf(I_{i}-\sup(J_{j^\star-2}^{i}))$.  So, as $\sup(J_{j^\star-2}^{i})
< \sup(J_{j^\star-1}^{i}) < \sup(J_{j^\star}^{i})$,
$I_{i}-\sup(J_{j^\star-1}^{i}) \subseteq (I_{i}-\sup(J_{j^\star}^{i}))
\cup (I_{i}-\sup(J_{j^\star-2}^{i}))$.  Equation~$\eqref{eq1}$,
letting $j=j^\star$ implies that $\tau'_{k_{j^\star-1}} \in
I_{i}-\sup(J_{j^\star-1}^{i})$, so $\tau'_{k_{j^\star-1}} \in
(I_{i}-\sup(J_{j^\star}^{i})) \cup (I_{i}-\sup(J_{j^\star-2}^{i}))$.
Though, if $\tau'_{k_{j^\star-1}} \in (I_{i}-\sup(J_{j^\star}^{i}))$,
we contradict $\eqref{eq2}$ in $j^\star$ taking $k_{m} =
k_{j^\star-1}$ (thanks to the definition of $k_{j^\star-1}$) ; and if
$\tau'_{k_{j^\star-1}} \in (I_{i}-\sup(J_{j^\star-2}^{i}))$, we
contradict $\eqref{eq2}$ in for $j=j^\star-1$ taking $k_{m} =
k_{j^\star-1}$ (thanks to the definition of $k_{j^\star-1}$).

We now know that $\forall 3 \leq j \leq m_{i}$,
$\sup(J_{j}^{i})-\sup(J_{j-2}^{i}) \geq \vert I_{i} \vert$. So,
$\sup(J_{m_{i}}^{i})-\sup(J_{1}^{i}) \geq \lceil \frac{(m_{i}-2)}{2}
\rceil . \vert I_{i} \vert$.  As $m_{i} > 2.\lceil
\frac{\sup(I_{i})}{\vert I_{i} \vert} \rceil +1$, we have that:
\begin{align*}
  \sup(J_{m_{i}}^{i})-\sup(J_{1}^{i}) & > \left\lceil
    \frac{(2.\left\lceil \frac{\sup(I_{i})}{\vert I_{i} \vert}
      \right\rceil +1 -2)}{2} \right\rceil . \vert I_{i}\vert\\ & =
  \left\lceil \left\lceil \frac{\sup(I_{i})}{\vert I_{i} \vert}
    \right\rceil - \frac{1}{2} \right\rceil . \vert I_{i}\vert =
  \left\lceil \frac{\sup(I_{i})}{\vert I_{i} \vert} \right\rceil
  . \vert I_{i}\vert \geq \sup(I_{i}).
\end{align*}
It means that $\sup(J_{m_{i}}^{i})-\sup(J_{1}^{i}) > \sup(I_{i})$, and
hence $\sup(J_{m_{i}}^{i}) > \sup(I_{i})$.  It is a contradiction
because if $\sup(J_{m_{i}}^{i}) > \sup(I_{i})$, we can not have an
accepting $Id$-run from
$\lbrace(\ell_{i},\sup(J^{i}_{m_{i}}))\rbrace$ and therefore neither
from $D_{2k +2}$, while we have just proved it is the case.

So far, we have showed that $\forall \ell \in L, \forall 1 \leq j \leq
n : \nbclocks{D_{j}(\ell_{i})} \leq 4 . \lceil
\frac{\inf(I_{i})}{\vert I_{i} \vert} \rceil +2$.  So, the bound we
have on the run is $\vert L \vert . 4 . \lceil
\frac{\inf(I_{i})}{\vert I_{i} \vert} \rceil +2$.  In Step 2, we will
show we can improve this bound by $M(\Phi)$ and so conclude that our
$f_{\theta}$-run is in fact an $\appfunc{\Phi}$-run.

\textbf{Case $\varphi_{1} \tilde{U}_{I_{i}} \varphi_{2}$ :} The
arguments are similar to the $U$ case, using
Proposition~\ref{PropTilde}. The bound found is the following : $\forall \ell \in L$ associated to a formula of the form $\varphi_{1} \tilde{U}_{I_{i}} \varphi_{2}$, $\forall 1 \leq j \leq
n : \nbclocks{D_{j}(\ell_{i})} \leq 2 . \lceil
\frac{\sup(I_{i})}{\vert I_{i} \vert} \rceil +2$.
Remark that, to prove this case, we must
assume the following property on $\rho$: for all $0\leq i\leq 2n$, for
all $\ell$ corresponding to a sub-formula of the form
$\varphi_1\tilde{U}_i\varphi_2$, for all $J\in C_i(\ell)$: $inf(J)\leq
sup(I)$. Remark that this is always possible, because if an interval
$J$ is present in a location $\ell$ corresponding to
$\varphi_1\tilde{U}_i\varphi_2$, with $inf(J)>sup(I)$, the arc
$(\ell,\sigma, x\not\in I\wedge x>sup(I))$ can be taken for all
$\sigma\in \Sigma$.

\underline{\textbf{Step 2 :}}\\ We still must prove that $\rho$' is an
accepting $\appfunc{\Phi}$-run : thanks to Step 1, it remains to prove
that each configuration reached by $\rho$' contains at most $M(\Phi)$
clock copies.  By definition of the transitions starting from the
initial location $\varphi_{init,\Phi}$, at most two clock copies will
be associated with this location (because the initial state is
$\lbrace(\ell_{0},[0,0])\rbrace$) and it will have no clock copy
associated with this location anymore as soon as clock copies are sent
towards other locations.  Moreover, all other locations of
$\mathcal{A}_{\Phi}$ are locations associated with
sub-formulas  of $\Phi$ of type
$\varphi_{1}U_{I_{i}} \varphi_{2}$ : we know such a location contains
at most $4.\lceil \frac{\inf(I_{i})}{\vert I_{i} \vert} \rceil +2$
clock copies all along $\rho$'.  Remark that the transition starting
from the location of a formula $\varphi_{1} U_{I} \varphi_{2}$ is
$(x.\delta(\varphi_{2},\sigma) \wedge x \in I) \vee
(x.\delta(\varphi_{1},\sigma) \wedge \varphi_{1}U_{I} \varphi_{2}
\wedge x \leq \sup(I))$ : it means that $\delta(\varphi_{1},\sigma)$
is taken a lot of times while $\delta(\varphi_{2},\sigma)$ is only
taken once.
It is why we distinguish in the definition of $M(\Phi)$ the maximal
number of copies present in configurations reached by $\rho'$ : (1) to
verify a sub-formula $\varphi$ of $\Phi$ that receives a lot of clock
copies (2) to verify a sub-formula $\varphi$ of $\Phi$ that receives
at most one clock copy (3) to verify $\varphi = \Phi$, with the
complete automaton $\mathcal{A}_{\varphi}$.

It is not difficult to be convinced that a proof by induction on the
structure of $\Phi$ enables to show that each configuration of
$\mathcal{A}_{\Phi}$ reached by $\rho$' contains at most $M(\Phi)$
clock copies.

\end{proof}

\section{Towards a timed automaton\label{sec:towards-timed-autom}}
Let $\Phi$ be an MITL formula, and assume $\Aa_\Phi=(\Sigma, L^\Phi,
\ell_0^\Phi, F^\Phi,\delta^\Phi)$. Let us show how to build the TA
$\Bb_\Phi=(\Sigma, L,\ell_0, X, F, \delta)$
s.t. $L(\Bb_\Phi)=L_{\appfunc{\Phi}}(\Aa_\Phi)$. The components of $\Bb_\Phi$
are as follows. For a set of clocks $X$, we let $\loc(X)$ be the set
of functions $S$ that associate to each $\ell\in L^\Phi$ a finite
sequence $(x_1,y_1),\ldots,(x_n,y_n)$ of pairs of clocks from $X$,
s.t. each clock occurs only once in all the $S(\ell)$. Then:
\begin{itemize}
\item $L=\loc(X)$. Intuitively, a configuration $(S, v)$ of $\Bb_\Phi$
  encodes the configuration $C$ of $\Aa_\Phi$ s.t. for all $\ell\in
  L^\Phi$: $C(\ell)=\{[v(x), v(y)]\mid (x,y)\in S(\ell)\}$.
\item $\ell_0$ is s.t. $\ell_0(\ell_0^\Phi)=(x,y)$, where $x$ and $y$
  are two clocks arbitrarily chosen from $X$, and
  $\ell_0(\ell)=\emptyset$ for all $\ell\in L^\Phi\setminus
  \{\ell_0^\Phi\}$.
\item $X$ is a set of clocks s.t. $|X|=M(\Phi)$.
\item $F$ is the set of all locations $S$ s.t. $\{\ell\mid
  S(\ell)\neq\emptyset\}\subseteq F^\Phi$.
\end{itemize}
Finally, we must define the set of transitions $\delta$ to let
$\Bb_\Phi$ simulate the executions of $\Aa_\Phi$. First, we observe
that, for each location $\ell\in L^\Phi$, for each $\sigma\in\Sigma$,
all \emph{arcs} in $\delta^\Phi$ are either of the form
$(\ell,\sigma,\mathit{true})$ or $(\ell,\sigma,\mathit{false})$ or of
the form $\big(\ell,\sigma,\ell\wedge
x.(\ell_1\wedge\cdots\wedge\ell_k)\wedge g\big)$ or of the form
$\big(\ell,\sigma,x.(\ell_1\wedge\cdots\wedge\ell_k)\wedge g\big)$,
where $g$ is \emph{guard} on $x$, i.e. a finite conjunction of clock
constraints on $x$. Let $S\in\configs{\Bb_\Phi}$ be a configuration of
$\Bb_\Phi$, let $\ell\in L^\Phi$, let $\sigma\in\Sigma$ be a
letter. Let $(x,y)$ be a pair of clocks occurring in $S(\ell)$ and let
us associate to this pair an arc $a$ of $\delta^\Phi$ of the form
$(\ell,\sigma,\gamma)$. Then, we associate to $a$ a \emph{guard}
$\gu{a}$, and two sets $\re{a}$ and $\de{a}$, defined as follows:
\begin{itemize}
\item if $\gamma\in\{\mathit{true},\mathit{false}\}$, then, $\gu{a}=a$
  and $\re{a}=\de{a}=\emptyset$.
\item if $\gamma$ is of the form
  $x.(\ell_1\wedge\cdots\wedge\ell_k)\wedge g$, then $\gu{a}=g$,
  $\re{a}=\{\ell_1,\ldots,\ell_k\}$ and $\de{a}=\emptyset$.
\item  if $\gamma$ is of the form
  $\ell\wedge x.(\ell_1\wedge\cdots\wedge\ell_k)\wedge g$, then $\gu{a}=g$,
  $\re{a}=\{\ell_1,\ldots,\ell_k\}$ and $\de{a}=\{(x,y)\}$.
\end{itemize}
Thanks to those definitions, we can now define $\delta$. Let $S$ be a
location in $L$, and assume:
\begin{align*}
  \{(\ell_1,x_1,y_1),\ldots, (\ell_k,x_k, y_k)\} &= \{(\ell, x,y)\mid
  (x,y)\in S(\ell)\}
\end{align*}
Then $(S, \sigma, g, r, S')\in \delta$ iff there are: a set
$A=\{a_1,\ldots, a_k\}$ of arcs s.t.:
\begin{itemize}
\item for all $1\leq i\leq k$: $a_i$ is an arc of $\delta^\Phi$ of the
  form $(\ell,\sigma,\gamma_k)$, associated to $(x_i, y_i)$.
\item For each $\ell\in L^\Phi$, we let
  $\bar{S}_\ell=(x_1',y_1')(x_2',y_2')\cdots(x_m',y_m')$ be obtained
  from $S(\ell)$ by deleting all pairs $(x,y)\not\in\bigcup_{1\leq
    i\leq k}\de{a_i}$. Then, for all $\ell\in L^\Phi$:
  \begin{align*}
    S'(\ell) &\in \big\{(x,y)\cdot \bar{S}_\ell,
    (x,y_1')(x_2',y_2')\cdots(x_m',y_m')\big\} &\text{if }\ell\in\bigcup_{1\leq i\leq k}\re{a_i}\\
    S'(\ell) &=\bar{S}_\ell &\text{otherwise}
  \end{align*}
  When $S'(\ell)=(x,y)\cdot \bar{S}_\ell$, we let $R_\ell=\{x,y\}$;
  when $S'(\ell)=(x,y_1')(x_2',y_2')\cdots\break(x_m',y_m')$, we let
  $R_\ell=\{x\}$; and we let $R_\ell=\emptyset$ otherwise.
\item $g=\bigwedge_{1\leq i\leq k}(\gu{a_i}[x/x_i]\wedge
  \gu{a_i}[x/y_i])$.
\item $r=\cup_{\ell\in L^\Phi}R_\ell$.

\end{itemize}

For all MITL formula $\Phi$, let $\Ii_\Phi$ be the set of all the
intervals that occur in $\Phi$. Then:\medskip

\noindent{\bf Theorem \ref{theo:TA-from-MITL}}:~{\it For all MITL
  formula $\Phi$, $\Bb_\Phi$ has $M(\Phi)$ clocks and $O( (\vert
  \Phi \vert)^{(m . \vert \Phi \vert)})$ locations, where $m =
  \max_{I\in\Ii_\Phi} \left\{ 2 \times \left\lceil
      \frac{\inf(I)}{\vert I \vert} \right\rceil + 1, \left\lceil
      \frac{\sup(I)}{\vert I \vert} \right\rceil + 1 \right\}$.  }
\begin{proof}
  By definition of $\mathcal{B}_\Phi$, $\vert X \vert = M(\Phi) = O( 2
  . m . \vert \Phi \vert )$.  Moreover, one location of this automaton
  is an association, to each location $\ell$ of $\mathcal{A}_{\Phi}$,
  of a finite sequence $(x_1,y_1), \dots, (x_n,y_n)$ of pairs of
  clocks from $X$ such that each pair is associated to a unique
  $\ell$.  In other words, each couple of clocks $(x_i,y_i)$ can be
  associated to : either one and only one of the $\ell \in L$ or to no
  $\ell \in L$.  We so have $\vert L \vert + 1$ possibilities of
  association of each pair $(x_i,y_i)$ and we have $\frac{M(\Phi)}{2}$
  such pairs.  So, $\mathcal{B}_\Phi$ has $(\vert L \vert +
  1)^{\frac{M(\Phi)}{2}}$ locations, i.e. : $O\left( (\vert \Phi
    \vert)^{m . \vert \Phi \vert} \right) = O\left( 2^{m . \vert \Phi
      \vert . log_{2}(\vert \Phi \vert)} \right)$ (because $\vert L
  \vert = O(\vert \Phi \vert)$ and $M(\Phi) = O( 2 . m . \vert \Phi
  \vert )$).

  We prove that $\Bb_\Phi$ recognizes $\sem{\Phi}$ by mapping each
  configuration of $\mathcal{B}_\Phi$ to a configuration of
  $\mathcal{A}_{\Phi}$ and conversely and that this mapping is
  consistent with all runs.

  First, let $(S,v)$ be a configuration of $\mathcal{B}_\Phi$, we map
  it to the following configuration of $\mathcal{A}_{\Phi}$. We know
  that $\forall \ell \in L$, $S(\ell)$ is a finite sequence
  $(x_1,y_1), \dots, (x_n,y_n)$ of pairs of clocks from : it
  corresponds to the (unique) configuration of $\mathcal{A}_{\Phi}$,
  $C = \bigcup_{\ell \in L} C(\ell)$ where $C(\ell) = \lbrace
  [v(x),v(y)] \vert (x,y) \in S(\ell) \rbrace$.  It is straightforward
  to see that, if $(S,v) \timestep{t} (S',v')$ and $(S,v)$ is mapped
  to $C$, there exists $C'$ such that $C \timestep{t} C'$ and $C'$ is
  mapped to $(S,v)$.  Moreover, we claim that, if $(S,v)
  \xrightarrow{\sigma} (S',v')$and $(S,v)$ is mapped to $C$, there
  exists $C'$ such that $C \xrightarrow{\sigma}_{\appfunc{\Phi}} C'$
  and $C'$ is mapped to $(S,v)$.  This holds because, if $(S,v)
  \xrightarrow{\sigma} (S',v')$, we can use the arc $a \in
  \delta^{\Phi}(\ell,\sigma)$ associated $(x,y) \in S(\ell)$ to find a
  minimal model of the state $(\ell,[v(x),v(y)])$ of $C$, this way, we
  reach a configuration $C'$ of $\mathcal{A}_{\Phi}$ that is mapped to
  $(S',v')$ thanks to the definition of $\delta$ : corresponding
  clocks are reset in the same time ; we verify the same guards on
  corresponding clocks ; the configuration we can reach in
  $\mathcal{B}_\Phi$ corresponds, for each location $\ell \in L$ whose
  smallest associated interval is [0,0], to group or not this interval
  with the second associated with $\ell$, what correspond to the
  configurations of $\mathcal{A}_{\Phi}$ we can reach from $C$.

  Second, let $C$ be a configuration of $\mathcal{A}_{\Phi}$, we map
  it to the set of all $(S,v)$ s.t. for all $\ell\in L$: $C(\ell)=
  \lbrace I_1^\ell, I_2^\ell,\ldots, I_n \rbrace$ iff $v(x_1) =
  \inf(I_1)$, $v(y_1) = \sup(I_1)$,\ldots, $v(x_n) = \inf(I_n)$,
  $v(y_n) = \sup(I_n)$. Observe that there are indeed several
  configurations $(S,v)$ of $\Bb_\Phi$ that satisfy this definition:
  they can all be obtained up to clock renaming.  To keep a
  consistence in our runs, we must only choose the corresponding
  configuration of $\mathcal{B}_\Phi$ such that: once a pair of clocks
  is associated to an interval $I_j$ of $C(\ell)$, if $I_j$ is still
  in $C'(\ell)$, the same clocks represents its bounds.  In the same
  way, when an interval $I'_j$ of the form $[0,\sup(I_j)]$ is in
  $C'(\ell)$, the same clocks represents its bounds.  In contrary,
  when a new interval $I_j ( = [0,0] )$ is associated to $C(\ell)$, we
  can arbitrary choose which unused pair of clocks $(x_i,y_i)$ will
  represent it.  Thanks to this trick, we can proof properties similar
  those of the first step.
\end{proof}

%%% Local Variables: 
%%% mode: latex
%%% TeX-master: "article"
%%% End: 

\end{document}